\documentclass[12pt]{article}
\usepackage[a4paper,margin=1in]{geometry}
\usepackage[british]{babel}
\usepackage{amsmath,amssymb,amsthm,bm,mathtools}
\usepackage{booktabs,longtable,array,tabularx}
\usepackage{natbib}
\usepackage{hyperref}
\usepackage{enumitem}
\usepackage{microtype}
\usepackage{authblk}
\usepackage{graphicx}
\usepackage{siunitx}
\usepackage{tikz}
\usetikzlibrary{arrows.meta,positioning}
\usepackage{pgfplots}
\pgfplotsset{compat=1.18}
\usepackage[most]{tcolorbox}
\hypersetup{colorlinks=true,linkcolor=black,citecolor=black,urlcolor=black}

\newtheorem{definition}{Definition}
\newtheorem{assumption}{Assumption}
\newtheorem{theorem}{Theorem}
\newtheorem{proposition}{Proposition}
\newtheorem{corollary}{Corollary}

\newtheorem{remark}{Remark}

\newtcolorbox{simspec}[1]{
  colback=gray!6, colframe=black!60, breakable,
  title={Simulation specification: #1},
  fonttitle=\bfseries\small, fontupper=\small
}

\newcommand{\R}{\mathbb{R}}

\newcommand{\Ebar}{\bar{E}}
\newcommand{\Deff}{D_{\mathrm{eff}}}
\newcommand{\AF}{\mathrm{AF}}

\newcommand{\kerop}{\operatorname{ker}}

\newcommand{\spec}{\operatorname{spec}}
\newcommand{\lognorm}{\mu}

\newcommand{\G}{\mathcal{G}}

\title{Multi-Scale Equilibrium under Variable Indicator Dimensionality:\\ Faithful Reduction of Dynamic Attractors in Urban Mobility Systems}
\author[1]{Ali Ghoroghi\thanks{Corresponding author. Email: ghoroghi@cardiff.ac.uk}}
\author[1]{Yacine Rezgui}
\author[1]{Afrouz Ghaemi}
\author[1]{Cristina De Nardi}
\author[1]{Andrei Hodorog}
\affil[1]{School of Engineering, Cardiff University, Cardiff, United Kingdom}
\date{}

\begin{document}
\maketitle

\begin{abstract}
Equilibrium analysis of urban mobility is formulated in a
high-dimensional indicator space, whilst data availability varies
sharply across cities and disruptions. This paper treats that
mismatch formally. It presents a dynamic multi-layer
equilibrium attractor for disrupted urban mobility, in which a fast
performance layer relaxes towards an indicator-dependent target, a
slow strategic layer supplies a joint traffic, modal and learning
fixed point, and antifragility is classified through a statistical
decision rule on the post-to-baseline performance ratio. It then
characterises when a lower-dimensional indicator projection is
faithful to this structure, through four results: exact and
approximate projectability with error bounds
covering non-normal transient amplification; preservation of the
coupled two-layer fixed point up to a contraction boundary, with a
displacement bound under approximate commutation; the retained Fisher
information and decision power of any indicator support, with greedy
selection exactly optimal for independent channels; and a one-sided
restoration-time bias, made exact as an asymptotic ordering, whereby
reduced monitoring can only understate recovery duration. A simulation study on three stylised pilot-city
configurations verifies each result; two observable channels suffice
for the classification target where the catalogue permits. The framework gives city authorities a
principled basis for deciding which indicators must be maintained.
\end{abstract}

\noindent\textbf{Keywords:} dynamic attractors; dimension reduction; multi-layer networks; composite indicators; Fisher information; urban mobility; antifragility.

\medskip
\noindent\textbf{Subject Areas:} applied mathematics; complexity; civil engineering.

\section{Introduction}\label{sec:introduction}

The mathematical problem addressed in this paper is the well-posedness of a high-dimensional dynamic attractor when only a subset, projection or aggregation of its indicators is observable. Dynamical systems theory contains mature results on local reduction near equilibria, including centre manifold theory, inertial manifolds, singular perturbation and Galerkin projection \citep{Carr1981,GuckenheimerHolmes1983,Khalil2002,Fenichel1979,Temam1988,HolmesLumleyBerkooz1996}. Multi-layer network theory contains complementary results on aggregation, supra-Laplacian spectra, community structure and reducibility of multiplex representations \citep{Mucha2010,Kivela2014,DeDomenico2013,Gomez2013,Boccaletti2014,RadicchiArenas2013}. Composite-index theory further analyses sensitivity, identifiability, weighting uncertainty and information loss when multi-indicator constructs are compressed into lower-dimensional scores \citep{Saltelli2004,OECD2008,SaisanaTarantola2002,Saltelli2008,Nardo2005}. Notwithstanding the foregoing, none of these literatures directly answers the question that arises when an equilibrium attractor is defined simultaneously as a dynamical object, a coupled-layer fixed point and a statistical decision device.

The conceptual difficulty is that indicator reduction is not merely a matter of approximation error. In a standard projection problem one may seek a reduced state whose trajectory is close to the original trajectory under a chosen norm; in a standard network aggregation problem one may seek preservation of selected spectral or community properties; and in a standard composite-index problem one may seek sensitivity bounds for a scalar score. The attractor considered here imposes three concurrent requirements: the projected system must preserve asymptotic stability, the reduced performance and strategic layers must retain the same hierarchical fixed-point interpretation, and the candidate antifragility decision rule must retain enough information to distinguish recovery, resilience and improvement. This combination makes the reduction problem structurally different from classical model-order reduction and from ordinary indicator selection.

The starting point is a multi-layer attractor formalism for disrupted urban mobility, presented in full in Section~\ref{sec:attractor}. In that framework, a performance-layer state variable relaxes towards a target function defined over a vector of normalised indicators, while a slower strategic layer supplies a joint traffic, modal and learning fixed point. Antifragility is determined by a candidate decision rule comparing a post-event performance ratio with a candidate baseline-dependent threshold. Having established that attractor, the paper asks when it remains meaningful if the full indicator vector is replaced by a lower-dimensional support, as occurs when different cities maintain different data infrastructures or when some measurements are unavailable during a disruption.

The formal contribution is a theory of multi-scale equilibrium reducibility under variable indicator dimensionality. Given a full indicator state $x\in\R^n$ and a reduction $\Pi:\R^n\to\R^m$, with $m<n$, the paper defines faithfulness relative to the attractor rather than relative to trajectory reconstruction alone. A faithful reduction is required to preserve the stability class of the performance layer, the joint two-layer fixed point and the statistical power of the candidate antifragility test up to an explicitly quantified tolerance. This definition is deliberately stronger than common projection requirements because a reduced equilibrium analysis can be numerically convenient whilst being mathematically unfaithful.

The novelty lies in joining three preservation criteria that have usually been studied apart, and in attaching to them an explicit measurement model. Centre manifold and slow-fast results identify circumstances under which local dynamics may be represented by fewer variables \citep{Carr1981,Fenichel1979,Kuehn2015}. Galerkin and balanced truncation methods approximate high-dimensional dynamics but do not usually preserve an external decision rule tied to indicator support \citep{HolmesLumleyBerkooz1996,Antoulas2005}. Multiplex aggregation results identify when layers may be collapsed without excessive spectral distortion, yet they generally do not treat a Lyapunov stability condition and a composite-index hypothesis test in the same theorem \citep{Kivela2014,DeDomenico2015}. Indicator sensitivity studies quantify uncertainty in weights and inputs, yet they do not normally address whether the underlying dynamical fixed point survives the same reduction \citep{Saltelli2004,OECD2008}. The present paper therefore studies a coupled reducibility problem rather than an isolated approximation problem, and it grounds that problem in the indicator infrastructure that cities actually maintain.

Of particular interest is the boundary below which partial indicator coverage ceases to support the equilibrium claim. A city may observe throughput and stress but lack reliable redundancy, equity or cross-layer coupling data. Another city may observe a wider subset but with missing energy or communication-system indicators. A third city may provide near-full support but with heterogeneous temporal sampling. In each case an attractor can still be computed, but computation alone does not imply that the projected system has inherited the relevant stability, fixed-point and detection properties. The central question is therefore not whether a reduced score can be produced, but whether that score remains a faithful projection of the full equilibrium structure.

The paper makes five contributions. First, it presents a self-contained multi-layer equilibrium attractor for disrupted urban mobility systems, comprising a relax-to-target performance law, a hierarchical strategic fixed point combining user-equilibrium assignment, modal split and capacity learning, and a statistically testable antifragility decision rule; the framework is stated with its stability, existence and uniqueness properties. Secondly, it gives a necessary and sufficient condition for preservation of local asymptotic stability under linear indicator projection, formulated through projectability of the vector field and the Hurwitz property of the reduced Jacobian, together with an approximate-projectability theorem that bounds the projected-trajectory error when projectability holds only up to a measurable defect, together with a transient-amplification extension that covers non-normal reduced dynamics through a certified, computable amplification constant. Thirdly, it proves a coupled fixed-point preservation theorem for a two-layer performance-strategic map, identifies the sufficient contraction boundary at which the uniqueness certificate is lost, and bounds the fixed-point displacement when the reduced maps commute with the projection only approximately. Fourthly, it introduces a measurement model in which the equilibrium state variables are estimated from observable urban key performance indicators, derives the retained Fisher information of any indicator support in closed form under Gaussian measurement noise, and proves that detection power is monotone in nested indicator supports. It further shows that greedy indicator selection is exactly optimal under the independent-channel measurement model. Fifthly, it proves that restoration time to equilibrium estimated from a reduced indicator set never exceeds, and generically understates, the restoration time of the full system, so that reduced monitoring is optimistically biased. These contributions are theoretical. The numerical section specifies the simulation protocol through which they are illustrated on three stylised pilot-city configurations.

The remainder of the paper is organised as follows. Section~\ref{sec:attractor} presents the multi-layer equilibrium attractor, its stability and fixed-point properties, and the antifragility decision rule; the reducibility theory of the later sections examines the behaviour of these constructs when the indicator vector is replaced by a lower-dimensional projection and estimated from observable urban indicators. Section~\ref{sec:literature} reviews reduction theory in dynamical systems, layer aggregation in multi-layer networks, sensitivity analysis for composite indicators and the design-of-experiments view of indicator selection. Section~\ref{sec:framework} develops the formal reducibility framework and proves the principal results, including the measurement model that connects the state variables to observable urban indicators. Section~\ref{sec:numerical} specifies the simulation protocol and the figures and tables through which the results are to be reported. Section~\ref{sec:discussion} discusses boundary cases, limitations and implications for minimum data infrastructure. Section~\ref{sec:conclusion} concludes.

\section{The multi-layer equilibrium attractor}\label{sec:attractor}

This section presents the equilibrium formalism whose reducibility is the subject of the paper. The state of the urban mobility system is summarised by a scalar performance score $E(t)\in[0,1]$, normalised on a rolling 30-day min-max window from a vector of normalised indicators. The full indicator state is the twelve-coordinate vector
\begin{equation}
    x=(M,R,A,C,\Deff,S,Q,P,B,E_{\mathrm{energy}},E_{\mathrm{ICT}},I)^{\top}\in[0,1]^{12},
    \label{eq:twelve_vector}
\end{equation}
whose coordinates denote, respectively, mobility throughput, network redundancy, adaptation velocity, network entropy, effective disturbance, stress, equity penalty, demand-management coverage, behavioural compliance, energy-system support, communication-system support and cross-layer synergy. The twelve quantities are computed constructs, each assembled from urban indicators that municipalities and transport operators record; the assembly rules supply the measurement loadings used in Section~\ref{subsec:measurement}, and the quantities are treated as candidate constructs unless an empirical source is attached to a specific deployment. Boldface $\mathbf{x}(t)$ is used when the vector is regarded as a time-varying input to the target function. Two layers of equilibrium are distinguished. The fast performance layer is specified by a relaxation law on $E(t)$; the slow strategic layer is specified by a hierarchical fixed point combining user-equilibrium traffic assignment, modal split and a capacity-learning update. The strategic layer produces the slow-timescale baselines that the performance layer tracks; the performance layer in turn supplies the post-event trajectories from which the strategic layer learns.

\subsection{Performance-layer dynamics}\label{subsec:performance_layer}

The performance state evolves according to the relax-to-target ordinary differential equation
\begin{equation}
\dot{E}(t) \;=\; -\kappa\,\bigl[E(t)-\Ebar(\mathbf{x}(t))\bigr],\qquad \kappa>0,
\label{eq:relax_law}
\end{equation}
with discrete-time update $E_{t+\Delta t}=E_t-\kappa\,\Delta t\,(E_t-\Ebar(\mathbf{x}_t))$, in which $\kappa\,\Delta t\in(0,1]$ preserves monotone convergence, values in $(1,2)$ converge with overshoot and values at or above two are unstable. The target function is decomposed into a positive block and a negative penalty block balanced by a single calibration parameter $\lambda\in(0,1)$,
\begin{equation}
\Ebar(\mathbf{x}) \;=\; \lambda\,\Ebar^{+}(\mathbf{x})-(1-\lambda)\,\Ebar^{-}(\mathbf{x}),
\label{eq:target_decomp}
\end{equation}
with the two blocks specified term by term as
\begin{align}
\Ebar^{+}(\mathbf{x}) &= \alpha_1\,\frac{\hat{M}}{1+\hat{M}}
                    + \alpha_2\,\frac{\hat{R}}{1+\hat{R}}
                    + \alpha_3\,A
                    + \alpha_4\,\psi(C)
                    + \omega_1\,E_{\mathrm{energy}} + \omega_2\,E_{\mathrm{ICT}}
                    + \zeta\,I + \pi_{\mathrm{dm}}\,P + \mu_{\mathrm{b}}\,B,
\label{eq:Eplus_flagship} \\
\Ebar^{-}(\mathbf{x}) &= \beta_1\,\Deff + \beta_2\,S + \theta_{\mathrm{eq}}\,Q + \gamma\,(1 - I_{\mathrm{coupling}}),
\label{eq:Eminus_flagship}
\end{align}
where $\hat{M}=\min\{1,M/K_M\}$ and $\hat{R}=\min\{1,(R-1)/(K_R-1)\}$ are capacity-bounded transforms with carrying capacities $K_M,K_R$, the saturating forms $\hat{M}/(1+\hat{M})$ and $\hat{R}/(1+\hat{R})$ assign diminishing marginal gains near the capacity bounds, $\psi(C)$ is a monotone transform of network entropy, and all weights are non-negative. Two cross-layer constructs appear. The synergy index $I(t)$ is a coordinate of the state vector, rewarding spatially proximate intermodal infrastructure. The imbalance penalty $I_{\mathrm{coupling}}(t)=\exp(-\sum_{l\neq m}\kappa_{lm}\lvert E_l-E_m\rvert)$ is a derived quantity computed from the performance levels $E_l$ of the transport, energy, communication and governance layers; it equals one when the layers perform equally and falls towards zero as they diverge, so the term $\gamma(1-I_{\mathrm{coupling}})$ penalises cross-layer imbalance. For the reduction analysis, $I_{\mathrm{coupling}}$ is treated as a smooth function of the retained layer-support coordinates ($E_{\mathrm{energy}}$, $E_{\mathrm{ICT}}$) and the scalar performance readings of the remaining layers, so it introduces no additional state coordinate. All components are normalised to $[0,1]$ and the target is projected onto $[0,1]$ before entering Eq.~\eqref{eq:relax_law}. The specific composition of the two blocks supplies the loading structure $C_S$ used by the measurement model of Section~\ref{subsec:measurement}; the reducibility theory of Section~\ref{sec:framework} depends only on smoothness and boundedness of $\Ebar$, not on a particular weighting.

\begin{proposition}[Asymptotic stability for fixed inputs]\label{prop:attractor_stability}
Let $\mathbf{x}^\ast$ be a time-invariant input vector and $\Ebar^\ast=\Ebar(\mathbf{x}^\ast)$. Along trajectories of Eq.~\eqref{eq:relax_law}, the Lyapunov function $V(E)=\tfrac12(E-\Ebar^\ast)^2$ satisfies $\dot{V}=-\kappa\,(E-\Ebar^\ast)^2\le 0$, with equality only at $E=\Ebar^\ast$. Hence $E(t)\to\Ebar^\ast$ exponentially with time constant $1/\kappa$, and the discrete update contracts with factor $1-\kappa\,\Delta t$ whenever $0<\kappa\,\Delta t<2$.
\end{proposition}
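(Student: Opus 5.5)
With $\mathbf{x}^\ast$ fixed, the target $\Ebar^\ast=\Ebar(\mathbf{x}^\ast)$ is a constant in $[0,1]$; the projection onto $[0,1]$ applied before Eq.~\eqref{eq:relax_law} changes only its value, not its constancy. Equation~\eqref{eq:relax_law} is therefore the scalar linear autonomous ODE $\dot{E}=-\kappa(E-\Ebar^\ast)$. I will work throughout with the error variable $e(t)=E(t)-\Ebar^\ast$, which satisfies $\dot e=-\kappa e$.

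\textbf{Step 1 (Lyapunov derivative).} Differentiating $V(E)=\tfrac12(E-\Ebar^\ast)^2$ along trajectories by the chain rule gives
\[
\dot V=(E-\Ebar^\ast)\dot E=-\kappa\,(E-\Ebar^\ast)^2 .
\]
Since $\kappa>0$, this is nonpositive and vanishes exactly when $E=\Ebar^\ast$. So $V$ is a strict Lyapunov function, and $\Ebar^\ast$ is globally asymptotically stable by Lyapunov's direct method.

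\textbf{Step 2 (exponential rate).} Note that $\dot V=-2\kappa V$. By Gr\"onwall's inequality, or simply by solving the ODE explicitly, $V(t)=V(0)e^{-2\kappa t}$. Equivalently,
\[
E(t)-\Ebar^\ast=(E(0)-\Ebar^\ast)\,e^{-\kappa t},
\]
which is exponential convergence with time constant $1/\kappa$.

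\textbf{Step 3 (invariance of $[0,1]$).} The explicit solution writes $E(t)$ as the convex combination $e^{-\kappa t}E(0)+(1-e^{-\kappa t})\Ebar^\ast$. Hence $E(0)\in[0,1]$ implies $E(t)\in[0,1]$ for all $t\ge 0$, consistent with the normalisation of $E$.

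\textbf{Step 4 (discrete update).} Subtracting $\Ebar^\ast$ from the update rule gives
\[
e_{t+\Delta t}=(1-\kappa\Delta t)\,e_t,
\]
so $e_{t+k\Delta t}=(1-\kappa\Delta t)^k e_t$. This sequence contracts geometrically if and only if $|1-\kappa\Delta t|<1$, that is, $0<\kappa\Delta t<2$. The finer cases quoted earlier follow from the sign of the factor. On $(0,1]$ the factor is nonnegative, giving monotone convergence, with exact arrival in one step when $\kappa\Delta t=1$. On $(1,2)$ the factor is negative, giving alternating-sign convergence, i.e.\ overshoot.

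\textbf{Main obstacle.} There is no real analytical obstacle, because the system is scalar and linear. The one point needing care is the interpretive one in the statement that the contraction factor is "$1-\kappa\Delta t$". The contraction modulus is in fact $|1-\kappa\Delta t|$, and I will state it in that form. A second point is that the discrete iterate need not stay in $[0,1]$ when $\kappa\Delta t\in(1,2)$, since it is then an affine combination with a negative weight rather than a convex one. I will therefore note that the invariance in Step 3 holds only for the continuous flow and for $\kappa\Delta t\le 1$.
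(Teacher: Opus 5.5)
Your proposal is correct and follows the same route as the paper. The paper gives no separate proof but embeds exactly this argument in the statement: the Lyapunov derivative $\dot V=-\kappa(E-\Ebar^\ast)^2$, exponential decay with time constant $1/\kappa$, and the one-step error recursion with factor $1-\kappa\Delta t$. Your refinements are sound additions rather than departures: the contraction modulus is properly $|1-\kappa\Delta t|$, since the paper's ``factor $1-\kappa\Delta t$'' is negative on $(1,2)$, and invariance of $[0,1]$ indeed holds only for the flow and for $\kappa\Delta t\le 1$.
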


The performance-layer equilibrium is the point $E^\ast=\Ebar(\mathbf{x}^\ast)$, asymptotically stable in the sense of Proposition~\ref{prop:attractor_stability} and tracking slow shifts in the input vector at the calibrated rate $\kappa$.

\begin{assumption}[Indicator recovery dynamics]\label{ass:indicator_dynamics}
Between strategic updates, the indicator state itself evolves under a continuously differentiable autonomous recovery flow $\dot{x}=f(x)$ on a neighbourhood $U\subset\R^{12}$ of a locally asymptotically stable equilibrium $x^{\ast}$, representing the post-disruption relaxation of throughput, redundancy, stress and the remaining coordinates towards the baselines supplied by the strategic layer. The scalar performance variable is the induced observable $E=h(x)$ with $h$ smooth, or equivalently an additional coordinate driven by $x$ through Eq.~\eqref{eq:relax_law}.
\end{assumption}

Assumption~\ref{ass:indicator_dynamics} is the bridge between the attractor of this section and the reducibility theory of Section~\ref{sec:framework}: the projection results are stated for the indicator flow $f$, and the scalar performance layer inherits its reduced behaviour through the induced observable. The assumption is a modelling commitment rather than a theorem; it holds exactly for the linearised recovery dynamics used in the simulation protocol of Section~\ref{sec:numerical} and approximately whenever indicator recovery is smooth on the post-stabilisation window.

\subsection{Strategic-layer hierarchical equilibrium}\label{subsec:strategic_layer}

On the strategic timescale, the system tracks a hierarchical fixed point. Let $\G_{\mathrm{net}}=(\mathcal{N},\mathcal{A})$ denote the directed transport network with arc capacities $\{c_a\}$ and fixed origin-destination demands $\{q^{od}\}$. The assignment layer solves the Beckmann programme
\begin{equation}
\min_{\mathbf{f}}\ \sum_{a\in\mathcal{A}}\int_0^{f_a} t_a(x)\,dx
\quad\text{s.t.}\quad \sum_p f_p^{od}=q^{od},\ f_p^{od}\ge 0,\ f_a=\sum_p f_p^{od}\,\mathbb{1}\{a\in p\},
\label{eq:wardrop_flagship}
\end{equation}
with link cost $t_a(f_a)=t_a^{0}[1+0.15\,(f_a/c_a)^4]$ in the standard form of \citet{BPR1964}. The modal layer solves the multinomial logit fixed point
\begin{equation}
P_m^{od}=\frac{\exp(-\theta_{\mathrm{logit}}\,U_m^{od})}{\sum_{j}\exp(-\theta_{\mathrm{logit}}\,U_j^{od})},
\qquad q_m^{od}=P_m^{od}\,q^{od},
\label{eq:logit_flagship}
\end{equation}
coupled to Eq.~\eqref{eq:wardrop_flagship} through the mode-specific generalised costs $U_m^{od}$. The capacity-learning layer applies the projected multiplicative update
\begin{equation}
\mathbf{c}_{k+1}=\Pi_{\mathcal{C}}\Bigl\{\mathbf{c}_k\odot\bigl[1+\alpha_{\mathrm{LS}}\bigl(\eta_s\,\mathbf{g}_{\mathrm{short}}+\eta_\ell\,\mathbf{g}_{\mathrm{long}}\bigr)\bigr]\Bigr\},
\label{eq:capacity_flagship}
\end{equation}
where $\mathbf{g}_{\mathrm{short}},\mathbf{g}_{\mathrm{long}}$ are short- and long-horizon sensitivities of $E$ to capacity, the step sizes satisfy $\eta_s:\eta_\ell=10:1$, $\alpha_{\mathrm{LS}}$ is selected by Armijo line search, and $\Pi_{\mathcal{C}}$ projects onto a compact convex feasibility set of budget, geometry, safety and equity constraints. Throughout, $\theta_{\mathrm{logit}}$ and $\alpha_{\mathrm{LS}}$ are reserved for the logit sensitivity and the line-search step; the bare symbols $\theta$ and $\alpha$ are reserved for the statistical effect parameter and test level of Section~\ref{subsec:measurement}.

\begin{proposition}[Existence, and conditions for uniqueness, of the strategic fixed point]\label{prop:strategic_fixed_point}
Assume $t_a\in C^1$ with $t_a'>0$, a compact feasible set $\mathcal{C}=\{\mathbf{c}:\mathbf{c}_{\min}\le\mathbf{c}\le\mathbf{c}_{\max}\}$ with $\mathbf{c}_{\min}\gg\mathbf{0}$, and bounded demands. Then for each fixed $\mathbf{c}\in\mathcal{C}$ the user-equilibrium link flows exist and are unique by strict convexity of the Beckmann objective \citep{Beckmann1956}, the logit fixed point exists by Brouwer's theorem, and a joint fixed point of the three operators exists by compactness and continuity of the composed capacity map. Uniqueness of the joint fixed point requires in addition that the composed capacity map is a contraction on $\mathcal{C}$; strict convexity of link costs alone does not imply it. When the strategic-to-performance handoff is continuous with normalising denominators bounded away from zero, the coupled system admits a joint two-layer equilibrium tuple $(E^\ast,\mathbf{f}^\ast,\mathbf{P}^\ast,\mathbf{c}^\ast)$.
\end{proposition}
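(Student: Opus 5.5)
The argument composes the three strategic operators into a single self-map of the capacity set, after which existence follows from Brouwer and uniqueness from Banach.

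\textbf{Step 1: assignment layer.} Fix $\mathbf{c}\in\mathcal{C}$ and a modal demand vector. The feasible path-flow set defined by the demand constraints in Eq.~\eqref{eq:wardrop_flagship} is a nonempty compact polytope, because demands are bounded and nonnegative. The Beckmann objective $\sum_a\int_0^{f_a}t_a$ has Hessian in link flows equal to $\mathrm{diag}(t_a')$, which is positive definite since $t_a'>0$. The objective is therefore strictly convex in link flows $f_a$ (though not in path flows). A minimiser exists by Weierstrass, and the link flows are unique by strict convexity. Since $\mathbf{c}_{\min}\gg\mathbf{0}$, the BPR costs $t_a^0[1+0.15(f_a/c_a)^4]$ are jointly continuous in $(f,\mathbf{c})$. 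Berge's maximum theorem, combined with uniqueness, then shows that the link-flow map $(\mathbf{c},q_m)\mapsto \mathbf{f}$ is continuous.

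\textbf{Step 2: modal layer.} For fixed $\mathbf{c}$, define $T:\Delta\to\Delta$ on the product of simplices of modal shares. Given shares $P$, set $q_m=P_m q$, compute equilibrium flows via Step 1, evaluate the generalised costs $U_m$, and return the logit shares in Eq.~\eqref{eq:logit_flagship}. Each stage is continuous, so $T$ is continuous. The product of simplices is compact and convex, so Brouwer gives a fixed point.

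\textbf{Step 3: capacity map and existence.} This is the step I expect to be the main obstacle. The composed capacity map must be made single-valued and continuous even though the logit fixed point need not be unique. I would take the uniqueness of the logit fixed point as given, for instance via a small-$\theta_{\mathrm{logit}}$ contraction; alternatively I would apply Kakutani to the correspondence of fixed points, which is upper hemicontinuous with convex values under monotonicity. With that in place, define $\Phi(\mathbf{c})=\Pi_{\mathcal{C}}\{\mathbf{c}\odot[1+\alpha_{\mathrm{LS}}(\eta_s g_{\mathrm{short}}+\eta_\ell g_{\mathrm{long}})]\}$, where the sensitivities are evaluated at the induced equilibrium. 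This also requires the sensitivities $g$ and the line-search step $\alpha_{\mathrm{LS}}$ to depend continuously on $\mathbf{c}$, for example by fixing $\alpha_{\mathrm{LS}}$ on the relevant neighbourhood. The projection onto the box $\mathcal{C}$ is nonexpansive, so $\Phi$ maps the compact convex set $\mathcal{C}$ continuously into itself. Brouwer then yields a joint fixed point $(\mathbf{f}^\ast,\mathbf{P}^\ast,\mathbf{c}^\ast)$.

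\textbf{Step 4: uniqueness.} If $\Phi$ is a contraction, Banach's theorem gives uniqueness. To show that strict convexity of link costs alone does not suffice, I would give a small counterexample. Take a one-link scalar case in which $\Phi$ is a continuous increasing map of an interval with two crossings of the diagonal. Such a map is compatible with $t_a'>0$, because the learning gradient $g$ depends on $E$ and not solely on the costs.

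\textbf{Step 5: two-layer tuple.} Set $E^\ast=\Ebar(\mathbf{x}^\ast)$, where $\mathbf{x}^\ast$ is the indicator vector produced from $(\mathbf{f}^\ast,\mathbf{P}^\ast,\mathbf{c}^\ast)$ by the continuous handoff. The normalising denominators are bounded away from zero, which keeps this handoff well-defined and continuous. Proposition~\ref{prop:attractor_stability} then shows that $E^\ast$ is the asymptotically stable equilibrium of the performance layer. This gives the joint tuple $(E^\ast,\mathbf{f}^\ast,\mathbf{P}^\ast,\mathbf{c}^\ast)$.
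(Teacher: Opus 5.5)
Your proposal is correct and follows essentially the same route as the paper. It derives unique link flows from strict convexity of the Beckmann objective, gets the logit split from Brouwer, and builds a projected capacity map $\Phi$ that is a continuous self-map of the compact convex box $\mathcal{C}$ (Brouwer when single-valued, Kakutani otherwise); it then uses contraction for uniqueness and $E^\ast=\Ebar(\mathbf{x}^\ast)$ to close the tuple. Berge's theorem, the explicit continuity assumption on the Armijo step and the non-uniqueness example are refinements of points the paper only asserts.

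The one loose point is your Kakutani fallback. Convexity of the set of logit equilibria does not survive the nonlinear capacity update; the paper instead takes the closed convex hull of admissible updates. You can avoid the obstacle you flag in Step~3 altogether by applying Brouwer to the joint map $(\mathbf{c},\mathbf{P})\mapsto(\Phi(\mathbf{c},\mathbf{P}),T_{\mathbf{c}}(\mathbf{P}))$ on $\mathcal{C}\times\Delta$. This map is single-valued and continuous once link flows are unique and the sensitivities are continuous in $(\mathbf{c},\mathbf{P})$.
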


\begin{proof}
Let $\Gamma(\mathbf{c})$ denote the strategic-equilibrium correspondence assigning to each feasible capacity vector the set of assignment and modal-split equilibrium outputs $(\mathbf{f}^\ast,\mathbf{P}^\ast)$. Under strict monotonicity of the link costs, $\Gamma$ is nonempty, and single-valued and continuous in link flows by strict convexity of the Beckmann objective; the logit fixed point exists by Brouwer's theorem on the probability simplex. Composing $\Gamma$ with the continuous strategic-to-performance handoff and the target function gives a performance set-point $E^{o}(\mathbf{c})=\Ebar(\mathbf{x}^{o}(\mathbf{c}))$ that is bounded and continuous in the single-valued case, or upper hemicontinuous with compact values otherwise. The induced capacity update is $\Phi(\mathbf{c})=\Pi_{\mathcal{C}}\{\mathbf{c}\odot[1+\alpha_{\mathrm{LS}}(\mathbf{c})\mathbf{G}(\mathbf{c})]\}$. The expression inside the projection is finite because $\mathcal{C}$ lies in the strictly positive orthant and the gradient field and step selection are bounded and continuous, and the projection maps it back into $\mathcal{C}$. Since projection onto a nonempty closed convex set is continuous, $\Phi$ is a continuous self-map on the compact convex set $\mathcal{C}$ in the single-valued case, and Brouwer's theorem yields $\mathbf{c}^\ast=\Phi(\mathbf{c}^\ast)$. If route-flow or step-length non-uniqueness is retained, the closed convex hull of admissible updates defines a correspondence with nonempty compact convex values and a closed graph, and Kakutani's theorem yields $\mathbf{c}^\ast\in\Phi(\mathbf{c}^\ast)$. Selecting any $(\mathbf{f}^\ast,\mathbf{P}^\ast)\in\Gamma(\mathbf{c}^\ast)$ and setting $E^\ast=\Ebar(\mathbf{x}^{o}(\mathbf{c}^\ast))$ closes the performance and strategic layers. The argument establishes joint existence only; uniqueness and asymptotic convergence of the fully coupled iteration require the contraction condition stated above, which Theorem~\ref{thm:fixedpoint} shows must additionally be preserved under any indicator reduction if the reduced system is to inherit the joint fixed point.
\end{proof}

The contraction condition on the composed capacity map reappears in Section~\ref{sec:framework} as the quantity whose preservation under reduction is at issue: Theorem~\ref{thm:fixedpoint} identifies the boundary at which a reduced coupling matrix ceases to be contractive and joint uniqueness can be lost.

\subsection{The antifragility decision rule}\label{subsec:af_rule}

Antifragility is defined at the level of the attractor: the post-stabilisation equilibrium strictly exceeds the pre-event baseline by a margin distinguishable from noise and seasonality. The statistical form of that criterion operates on the post-to-baseline ratio of the performance state. Let $E_{\mathrm{base}}$ be the seasonally adjusted baseline mean over the pre-event window, computed by seasonal-and-trend decomposition using Loess \citep{Cleveland1990} or X-13, and let
\begin{equation}
\mu_{30}=\frac{1}{30}\sum_{t=31}^{60}E(t),\qquad R_{\mathrm{ratio}}=\frac{\mu_{30}}{E_{\mathrm{base}}},
\label{eq:rratio_flagship}
\end{equation}
so that the post-stabilisation mean is taken over days 31 to 60 after the event, past the relaxation transient; when stabilisation is slower, the window is the first thirty days after confirmed stabilisation. Two adaptive thresholds are defined: a requirement threshold $\AF_{\mathrm{req}}=1+0.2\,(1-E_{\mathrm{base}})$, rising with the strictness of the baseline, and a noise threshold $\AF_{\mathrm{crit}}=1+k\,\hat{\sigma}_{\mathrm{resid}}$ with $k=1.64$, rising with the residual variability of the pre-event series. The decision rule declares antifragility when
\begin{equation}
\mathrm{LCL}_{0.95}(R_{\mathrm{ratio}})\;\ge\;\max\{\AF_{\mathrm{req}},\,\AF_{\mathrm{crit}}\},
\label{eq:af_rule}
\end{equation}
where the lower confidence limit is computed by block bootstrap (block size 5 days, 10\,000 resamples) when autocorrelation is material and by the delta method otherwise. Improvement is thereby treated as a statistical statement rather than a point estimate. Eq.~\eqref{eq:af_rule} is the decision rule whose detection power under reduced indicator support is quantified in closed form in Section~\ref{subsec:measurement}.

The three constructs of this section, namely the relax-to-target law of Eq.~\eqref{eq:relax_law}, the coupled two-layer fixed point of Proposition~\ref{prop:strategic_fixed_point} and the decision rule of Eq.~\eqref{eq:af_rule}, jointly define the equilibrium attractor. The remainder of the paper studies the conditions under which this attractor survives projection onto a lower-dimensional indicator support.

\section{Literature review}\label{sec:literature}

\subsection{Dimension reduction in dynamical systems}

Dimension reduction in dynamical systems is concerned with the conditions under which the evolution of a high-dimensional state can be represented on a lower-dimensional object without losing the properties of interest. Centre manifold theory establishes that, near a non-hyperbolic equilibrium, local behaviour may be governed by a lower-dimensional invariant manifold tangent to the centre eigenspace \citep{Carr1981,GuckenheimerHolmes1983}, and modern treatments extend the construction to non-autonomous systems, supply lower bounds on the domain of validity, and exploit the freedom in parametrising the manifold at finite amplitude \citep{Roberts2015,Roberts2018}. Singular perturbation theory and geometric slow-fast analysis establish related reductions when variables evolve on separated time scales, provided normal hyperbolicity and suitable regularity are present \citep{Fenichel1979,Khalil2002,Kuehn2015}. These theorems are powerful for two reasons: they define the reduced dynamics as a mathematically induced object, not merely as a fitted approximation, and on the manifold itself the reduction is exact, with systematic approximations of controllable error within the domain of validity. The reduction studied in this paper cannot in general access that exactness, because its projection is fixed exogenously by indicator availability rather than constructed from the dynamics; the error bounds of Section~\ref{sec:framework} quantify what survives of the manifold guarantees under that constraint.

Projection methods provide a second tradition. Galerkin projection, proper orthogonal decomposition and balanced truncation construct a lower-dimensional coordinate system from basis functions or energy-dominant modes \citep{HolmesLumleyBerkooz1996,Antoulas2005,Temam1988}. These methods have been central in fluid dynamics, control and spatially extended systems, where full-order simulation may be computationally prohibitive. Their usual preservation criteria are trajectory error, input-output error, energy content or spectral approximation. Stability can sometimes be retained under special constructions, but preservation of an externally defined composite decision rule is not part of the usual problem formulation. A related strand quantifies robustness of trajectories to bounded model defects through logarithmic norms and contraction analysis \citep{DesoerVidyasagar1975,Soderlind2006,LohmillerSlotine1998}; that machinery is used in Section~\ref{subsec:reducible} to convert the binary faithful/unfaithful classification into a graded one.

The limitation of these traditions for the present problem is that indicator support is not chosen solely by dynamical salience. It is constrained by urban data availability, institutional access, measurement quality and temporal sampling. A projection may remove a dynamically fast variable, a statistically informative variable, or a layer-coupling variable. These cases have different consequences. The formal framework must therefore distinguish between reductions that remove redundant coordinates and reductions that remove coordinates needed to define the attractor itself.

\subsection{Layer aggregation in multi-layer networks}

Multi-layer network theory studies systems in which vertices, edges or dynamics are distributed across interacting layers. The supra-adjacency and supra-Laplacian formalisms provide a common representation for multiplex and interconnected networks \citep{Mucha2010,DeDomenico2013,Gomez2013,Kivela2014}. Aggregation may be justified when layers have similar structure, when inter-layer coupling is sufficiently strong, or when the quantity of interest is insensitive to layer-specific variation \citep{RadicchiArenas2013,DeDomenico2015}. These ideas have been used to study diffusion, centrality, community structure and spreading processes in multiplex systems \citep{Boccaletti2014,AletaMoreno2019}.

Layer aggregation is directly relevant to urban mobility because transport, energy, communication and governance systems are interdependent. Nevertheless, most aggregation results preserve graph-theoretic or process-specific quantities rather than an equilibrium attractor with an attached statistical decision rule. A supra-Laplacian eigenvalue may remain close after aggregation whilst an equity-sensitive candidate antifragility threshold becomes unidentifiable. Conversely, an indicator projection may preserve a scalar performance score whilst breaking the fixed-point interpretation of the strategic layer. A joint condition is therefore required.

The present paper treats layer aggregation and indicator reduction as related but distinct operations. Layer aggregation changes the representation of interdependencies across subsystems. Indicator reduction changes the coordinates through which the attractor is observed and tested. When the two operations are coupled, the reduced map must commute with both the dynamical flow and the inter-layer fixed-point operator. This commutation requirement is central to the fixed-point theorem in Section~\ref{subsec:joint}.

\subsection{Sensitivity, identifiability and the design of indicator sets}

Composite indicators provide a third body of relevant work because equilibrium targets are often constructed from weighted normalised indicators. Sensitivity analysis studies how uncertainty in inputs, weights and aggregation rules affects composite scores and rankings \citep{Saltelli2004,Saltelli2008,OECD2008}. Identifiability analysis asks whether parameters or latent constructs can be inferred from observable quantities \citep{WalterPronzato1997,Raue2009}. Fisher information provides a local measure of how much an observation model informs a parameter or hypothesis \citep{Kay1993,LehmannRomano2005}.

A fourth, smaller body of work is directly relevant once a measurement model is attached to the attractor: the theory of optimal experimental design, in which observation channels are selected to maximise a functional of the information matrix \citep{Pukelsheim1993,Fedorov1972}. In the present setting the observation channels are urban key performance indicators maintained by city authorities and transport operators, the cost of a channel is the institutional and technical cost of collecting it, and the design question is which channels suffice to preserve the attractor properties being claimed. Classical design theory supplies the information calculus; it does not supply the dynamical and fixed-point preservation conditions, which is the gap this paper closes.

The composite-indicator literature is highly relevant but incomplete for the attractor problem. It can show that a reduced indicator set has high uncertainty or low sensitivity with respect to a particular score. It does not by itself show that the reduced score corresponds to a stable equilibrium projection or that a coupled fixed point remains unique. Moreover, the loss of an indicator has two roles in the present setting: it may change the target value towards which the performance layer relaxes, and it may reduce the statistical power of the decision rule used to classify post-disruption behaviour. These two roles must be handled together.

\begin{table}[t]
\centering
\small
\caption{Comparison of existing reduction approaches and the preservation requirements addressed in this paper.}
\label{tab:lit_comparison}
\begin{tabularx}{\textwidth}{p{3.0cm}XXXX}
\toprule
Approach & Stability preservation & Fixed-point preservation & Decision-power preservation & Principal limitation for this paper \\
\midrule
Centre manifold and slow-fast reduction & Local, under invariance and spectral separation & Indirect, when the fixed point lies on the reduced manifold & Not addressed & Indicator availability is not necessarily aligned with centre or slow variables \\
Galerkin and balanced projection & Sometimes, under controlled projection or passivity conditions & Not generally guaranteed & Not addressed & Approximation quality need not preserve the attractor interpretation \\
Supra-Laplacian and multiplex aggregation & Process-specific, often spectral & Sometimes, for diffusion or consensus maps & Not addressed & Aggregation may preserve network spectra whilst losing indicator-level evidence \\
Composite-index sensitivity analysis & Not addressed as a dynamical property & Not addressed as a coupled map & Partly, through uncertainty and sensitivity measures & Does not determine whether a reduced score is a projected equilibrium \\
Optimal experimental design & Not addressed & Not addressed & Yes, through information functionals & Selects channels for estimation, not for preservation of an attractor \\
Faithful attractor reduction proposed here & Required by projectability (exact or approximate with explicit bound) and Hurwitz reduced dynamics & Required by commutation and reduced contraction & Quantified in closed form through a measurement model and retained Fisher information & Applies to local attractor structure and candidate decision rules rather than to all possible observables \\
\bottomrule
\end{tabularx}
\end{table}

\subsection{Summary and research gaps}

The foregoing literature establishes that many forms of reduction can be mathematically justified, but it does not provide a criterion for reductions that must preserve three structures at once. A reduced dynamical system may be stable but statistically uninformative. A reduced network may preserve a spectral gap but fail to preserve the joint performance-strategic fixed point. A reduced composite indicator may be sensitive to the right variables but lack a well-defined underlying flow. The research gap is therefore a multi-property reducibility theorem for equilibrium attractors under variable indicator support, expressed in terms of the urban indicators that cities can in fact collect.

The formal framework developed next addresses this gap by defining faithfulness as a property of a projection relative to an attractor, a coupled fixed-point map, a decision rule and a measurement model. This definition allows a reduced city configuration to be classified as faithful, practically faithful or unfaithful according to mathematically explicit conditions.

\section{Theoretical framework}\label{sec:framework}

\begin{figure}[t]
\centering
\begin{tikzpicture}[
    box/.style={draw, rounded corners, minimum width=3.4cm, minimum height=0.9cm, align=center},
    req/.style={-{Latex[length=2.2mm]}, thick},
    node distance=3.6cm
]
\node[box] (full) {Full state space $\mathbb{R}^n$};
\node[box, right=of full] (reduced) {Reduced state space $\mathbb{R}^m$};
\node[box, below=1.0cm of reduced] (obs) {Observed KPI space $\mathbb{R}^{r}$};
\draw[req] (full) -- node[above] {$\Pi$} (reduced);
\draw[req] (obs) -- node[right] {$C_S^{\dagger}$} (reduced);
\draw[req] ([yshift=-1.00cm]full.south east) -- node[above, font=\small] {performance stability} ([yshift=-1.00cm]reduced.south west);
\draw[req] ([yshift=-1.55cm]full.south east) -- node[above, font=\small] {joint fixed point} ([yshift=-1.55cm]reduced.south west);
\draw[req] ([yshift=-2.10cm]full.south east) -- node[above, font=\small] {decision power} ([yshift=-2.10cm]reduced.south west);
\end{tikzpicture}
\caption{The reduction $\Pi$ from the full state space to the reduced state space, with three concurrent preservation requirements, and the measurement layer through which the reduced state is estimated from observable urban key performance indicators via a retained measurement map $C_S$ (Section~\ref{subsec:measurement}).}
\label{fig:reduction_architecture}
\end{figure}

The formal hypothesis of the paper is that variable indicator dimensionality can be treated as a quotient problem for a dynamic attractor, provided that the projection respects the flow, the coupled fixed-point operator and the information geometry of the decision rule. This hypothesis leads to four research questions:
\begin{enumerate}[label=\textbf{RQ\arabic*.},leftmargin=2.2cm]
\item Under what conditions does an indicator projection preserve asymptotic stability of the performance-layer attractor, and how does stability degrade when those conditions hold only approximately?
\item Under what conditions does a coupled performance-strategic fixed point survive simultaneous reduction of state and layer-coupling variables?
\item How does the statistical power of the candidate antifragility decision rule degrade as the retained indicator support shrinks, and how can that degradation be computed from the urban indicators a city actually collects?
\item How are derived quantities reported to stakeholders, in particular the restoration time to equilibrium, distorted by indicator reduction?
\end{enumerate}

\subsection{The reducible attractor}\label{subsec:reducible}

The full performance layer is represented by a continuously differentiable vector field
\begin{equation}
    \dot{x}=f(x),\qquad x\in U\subset\R^n,
    \label{eq:full_flow}
\end{equation}
where $x$ is the vector of normalised indicators and $x^\ast$ is a locally asymptotically stable equilibrium. The scalar equilibrium performance variable of Section~\ref{subsec:performance_layer} is treated here as a smooth observable $E=h(x)$ or, equivalently, as an additional coordinate whose relaxation law is driven by $x$. The reduction problem is stated for $x$, since the scalar performance variable inherits its reduced behaviour through $h\circ\Pi^\dagger$ when a representative or section is chosen.

\begin{definition}[Indicator projection]
An indicator projection is a surjective linear map $\Pi:\R^n\to\R^m$ with $m<n$, fixed in advance by the monitoring infrastructure: in the canonical case $\Pi$ is a coordinate-selection matrix whose rows are distinct standard basis vectors, retaining the indicators a city records and discarding the rest, and more generally $\Pi$ may aggregate indicators linearly with known weights. The reduced coordinate is $y=\Pi x$. Two full states $x_1,x_2\in U$ are called projection-equivalent when $\Pi x_1=\Pi x_2$; this equivalence relation is a consequence of the definition, not the definition itself.
\end{definition}

The linearity of $\Pi$ is a deliberate modelling commitment, not a mathematical convenience. In manifold-based reduction the map from full to reduced coordinates is in general nonlinear, and is constructed from the dynamics; here the projection is imposed exogenously by which indicators are collected, and municipal data infrastructures record subsets or fixed linear aggregates of indicators rather than nonlinear functionals of them. The price of an exogenous linear projection is that it is generally not aligned with any invariant manifold of the flow, which is precisely why the faithfulness question of this section is non-trivial.

\begin{definition}[Projectability]
The vector field $f$ is projectable through $\Pi$ on $U$ if there exists a continuously differentiable vector field $g:\Pi U\to\R^m$ such that
\begin{equation}
    \Pi f(x)=g(\Pi x),\qquad x\in U.
    \label{eq:projectability}
\end{equation}
Equivalently, $\Pi f(x_1)=\Pi f(x_2)$ whenever $\Pi x_1=\Pi x_2$.
\end{definition}

Projectability is the mathematical condition that prevents a reduced trajectory from depending on unobserved coordinates in an uncontrolled way. If it fails, two cities or scenarios with identical reduced indicator vectors may have different reduced velocities because hidden indicators affect the projected dynamics. In that case a reduced attractor can be fitted, but it is not a quotient of the full attractor.

\begin{definition}[Faithful performance-layer reduction]
Let $x^\ast$ be a locally asymptotically stable equilibrium of Eq.~\eqref{eq:full_flow} and let $y^\ast=\Pi x^\ast$. A projection $\Pi$ is performance-faithful on a neighbourhood $U$ of $x^\ast$ if there exists a reduced vector field $g$ satisfying Eq.~\eqref{eq:projectability}, $y^\ast$ is a locally asymptotically stable equilibrium of $\dot{y}=g(y)$, and every full trajectory beginning in $U$ satisfies $\Pi x(t)=y(t)$ for the reduced trajectory with $y(0)=\Pi x(0)$.
\end{definition}

\begin{remark}[On the reduced initial condition]\label{rem:initial_condition}
In reductions onto invariant manifolds, the correct initial condition for the reduced model is in general a nonlinear function of the full initial state, and differs from a naive projection even for linear systems with parameters \citep{Roberts2015}. No such subtlety arises here, by construction: $\Pi$ is a fixed linear selector of observed coordinates, so $y(0)=\Pi x(0)$ is the definition of the initial reduced observation rather than an approximation to a manifold initial condition. The cost of this simplicity is borne elsewhere, in the projectability defect of Definition~\ref{def:defect}, which absorbs exactly the residual influence of the unobserved coordinates that a manifold-adapted initialisation would account for.
\end{remark}

\begin{theorem}[Preservation of asymptotic stability under projection]\label{thm:stability}
Let $f\in C^1(U,\R^n)$ and let $x^\ast$ be an equilibrium of Eq.~\eqref{eq:full_flow}. Let $\Pi:\R^n\to\R^m$ be surjective and let $A=Df(x^\ast)$. Under condition (i) below, the reduced Jacobian $B$ is well-defined by the relation $B\Pi=\Pi A$. The projection $\Pi$ is performance-faithful in a sufficiently small neighbourhood of $x^\ast$ if and only if the following conditions hold:
\begin{enumerate}[label=(\roman*),leftmargin=1.2cm]
\item $f$ is projectable through $\Pi$ on that neighbourhood;
\item the induced reduced Jacobian $B$ is Hurwitz;
\item the equilibrium fibre $\Pi^{-1}(y^\ast)\cap U$ contains no second equilibrium whose projection is $y^\ast$ and whose local basin intersects $U$ outside the basin of $x^\ast$.
\end{enumerate}
\end{theorem}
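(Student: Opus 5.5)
The proof splits into the two directions. The preliminary observation is that condition (i) forces $B$ to be well defined. Differentiating $\Pi f(x)=g(\Pi x)$ at $x^\ast$ gives $\Pi A = Dg(y^\ast)\Pi$. Since $\Pi$ is surjective, it has a right inverse $\Pi^\dagger$. Hence $B:=Dg(y^\ast)=\Pi A\Pi^\dagger$ is the unique matrix with $B\Pi=\Pi A$, and $\ker\Pi$ is $A$-invariant. The equilibrium property also transfers: $g(y^\ast)=\Pi f(x^\ast)=0$.

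For sufficiency, assume (i)--(iii). From (i) I will show that projected trajectories solve the reduced equation. Put $y(t):=\Pi x(t)$. Then $\dot y=\Pi f(x(t))=g(\Pi x(t))=g(y(t))$. Since $g$ is $C^1$, uniqueness of solutions gives $\Pi x(t)=y(t)$ for the reduced trajectory starting at $\Pi x(0)$, for as long as $x(t)$ remains in $U$. From (ii), linearisation (Lyapunov's indirect method applied to $g$ at $y^\ast$) makes $y^\ast$ locally exponentially stable for $\dot y=g(y)$. Condition (iii) settles the remaining question, namely whether the trajectory correspondence holds on all of the chosen neighbourhood rather than only on the part of it lying in the basin of $x^\ast$. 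To use it, I shrink $U$ to a sublevel set of a Lyapunov function for $x^\ast$. Such a function exists by the converse Lyapunov theorem, because $x^\ast$ is locally asymptotically stable by hypothesis. This choice makes $U$ forward invariant, so the correspondence $\Pi x(t)=y(t)$ holds for all $t\ge0$. Condition (iii) then rules out that some full trajectory in $U$ converges to a different equilibrium on the fibre. That would make the reduced limit $y^\ast$ while the full state does not approach $x^\ast$, so this step keeps the reduced attractor a genuine quotient of the basin.

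For necessity, assume $\Pi$ is performance-faithful with reduced field $g$. Then (i) holds by definition. For (iii), suppose a second equilibrium $x'\ne x^\ast$ with $\Pi x'=y^\ast$ had a basin meeting $U$ outside the basin of $x^\ast$. Then trajectories starting in $U$ would have the same reduced observation yet differ in which full equilibrium they converge to. Shrinking the neighbourhood excludes this, which is the sense in which (iii) is forced in a ``sufficiently small neighbourhood''.

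The main obstacle is necessity of (ii). Faithfulness only gives local asymptotic stability of $y^\ast$ for $g$, and asymptotic stability does not imply that $B$ is Hurwitz (consider $\dot y=-y^3$). My first attempt would use the structure inherited from the full system. Because $\ker\Pi$ is $A$-invariant, the spectrum of $A$ splits as $\spec(A)=\spec(A|_{\ker\Pi})\cup\spec(B)$. So $B$ is Hurwitz whenever $A$ is, and this case is hyperbolic and clean. If that fails in the degenerate case, I would read the theorem's ``local asymptotic stability'' in the exponential or hyperbolic sense that the framework uses throughout, since the reduced Jacobian is the certified object. I would state this reading explicitly and note that only semi-definiteness of $B$ follows without it. I expect this interpretive step, together with the careful use of (iii), to be where the argument needs the most care.
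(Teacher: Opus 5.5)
Your proposal is correct. It carries the same hyperbolicity caveat the paper adopts: the paper's proof says the theorem is stated for the hyperbolic case, and your $\dot y=-y^3$ example is exactly why. Its skeleton also matches the paper's: (i) from faithfulness, $B\Pi=\Pi A$ by differentiating the projectability identity, $\Pi x(t)=y(t)$ by uniqueness, and stability of $y^\ast$ by the indirect method.

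The real difference is in the necessity of (ii). The paper argues on the reduced side: asymptotic stability of $y^\ast$ excludes eigenvalues of $B$ in the open right half-plane, and hyperbolicity excludes the imaginary axis. You argue on the full side, via $\spec(B)\subseteq\spec(A)$ from kernel invariance; the paper proves that splitting only later, in Proposition~\ref{prop:rte}. The paper's route needs only $B$ hyperbolic, so it tolerates non-hyperbolic hidden modes in $\ker\Pi$; your fallback of reading reduced stability as exponential amounts to it. Your primary route needs $A$ Hurwitz but never uses reduced stability, and it shows that in that setting (ii) follows from (i) rather than being an independent condition. Without hyperbolicity, what survives is that every eigenvalue of $B$ has non-positive real part. ``Semi-definiteness'' is the wrong word for this, since a Hurwitz non-normal $B$ need not satisfy $B+B^{\top}\preceq0$.

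Your converse-Lyapunov shrinking makes explicit something the paper's sufficiency argument leaves implicit: full trajectories stay where projectability holds. Plain Lyapunov stability would already suffice for that.

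Once $U$ lies in the basin of $x^\ast$, however, (iii) is vacuous. It plays no role in securing $\Pi x(t)=y(t)$, which comes from (i) and uniqueness alone, and its necessity holds only for this reason, as your shrinking argument says. On this point your justification is sounder than the paper's contradiction argument. Two full trajectories with the same reduced initial state that converge to distinct equilibria of one fibre project onto the same reduced trajectory converging to $y^\ast$, and faithfulness permits that.
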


\begin{proof}
Assume first that $\Pi$ is performance-faithful. The equality of projected full trajectories and reduced trajectories implies that the derivative at $t=0$ depends only on $y=\Pi x$, hence $\Pi f(x)=g(\Pi x)$ for some $g$ and projectability follows. Linearising Eq.~\eqref{eq:projectability} at $x^\ast$ gives $\Pi A=B\Pi$, where $B=Dg(y^\ast)$. Since $y^\ast$ is locally asymptotically stable for the reduced system, the linearisation theorem implies that $B$ is Hurwitz when the equilibrium is hyperbolic; in the non-hyperbolic case Hurwitzness is replaced by the corresponding Lyapunov condition on the reduced centre dynamics, but the present theorem is stated for the hyperbolic case. If a second equilibrium in the same fibre had a basin intersecting the neighbourhood outside the basin of $x^\ast$, two projected trajectories with the same initial reduced state would either fail uniqueness or converge to incompatible full equilibria under the same reduced equilibrium, contradicting faithfulness as defined.

Conversely, suppose the three conditions hold. Projectability defines $g$ uniquely on $\Pi U$. The relation $B\Pi=\Pi A$ is the linear quotient relation. Since $B$ is Hurwitz, $y^\ast$ is locally asymptotically stable for the reduced system. For any full solution $x(t)$ beginning in $U$, the curve $\Pi x(t)$ satisfies $\dot{y}=\Pi f(x(t))=g(\Pi x(t))$, and uniqueness of solutions gives equality with the reduced trajectory. The fibre condition rules out ambiguous projected equilibria in the neighbourhood. Hence the projection is performance-faithful.
\end{proof}

\begin{remark}[On the fibre uniqueness condition]
The fibre uniqueness condition fails, for example, in a two-attractor system in which two locally stable equilibria $x_1^\ast\neq x_2^\ast$ project to the same reduced equilibrium $y^\ast$, so that the projected dynamics cannot distinguish them. In such a case, a reduced trajectory may appear stable while concealing incompatible full-state limits. The condition is generic for sufficiently generic vector fields, because distinct isolated equilibria will not usually be identified by a projection chosen in general position. Its failure indicates a coordinate choice in which the projection collapses distinct attractors.
\end{remark}

\begin{corollary}[Linear indicator dynamics]\label{cor:linear}
For $\dot{x}=Ax+b$ with equilibrium $x^\ast$, a projection $\Pi$ preserves asymptotic stability exactly when $\kerop(\Pi)$ is $A$-invariant and the quotient matrix $B$ satisfying $B\Pi=\Pi A$ is Hurwitz, subject to uniqueness of the equilibrium in the projected fibre.
\end{corollary}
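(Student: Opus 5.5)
The plan is to specialise Theorem~\ref{thm:stability} to the affine field $f(x)=Ax+b$, so the work is to show that condition (i), projectability, is equivalent to $A$-invariance of $\kerop(\Pi)$. Conditions (ii) and (iii) then carry over verbatim, with the quotient matrix $B$ in the role of the reduced Jacobian. Because $f$ is affine, $Df(x^\ast)=A$ everywhere and no neighbourhood restriction is needed: projectability on any open $U$ is the same as projectability on $\R^n$.

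\emph{Invariance implies projectability.} Suppose $A\,\kerop(\Pi)\subseteq\kerop(\Pi)$. Then $\Pi A$ vanishes on $\kerop(\Pi)$. Since $\Pi$ is surjective, $\Pi A$ therefore factors uniquely through $\Pi$ as $\Pi A=B\Pi$ for some $B\in\R^{m\times m}$. Concretely, take a right inverse $\Pi^\dagger$ with $\Pi\Pi^\dagger=I_m$ and set $B=\Pi A\Pi^\dagger$. For any $x$, the difference $x-\Pi^\dagger\Pi x$ lies in $\kerop(\Pi)$, so $\Pi A(x-\Pi^\dagger\Pi x)=0$, which gives $B\Pi x=\Pi Ax$. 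Now define $g(y)=By+\Pi b$. Then $\Pi f(x)=B\Pi x+\Pi b=g(\Pi x)$, and $g$ is smooth, so (i) holds.

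\emph{Projectability implies invariance.} Suppose $\Pi f(x_1)=\Pi f(x_2)$ whenever $\Pi x_1=\Pi x_2$. For $v\in\kerop(\Pi)$, take $x_1=x^\ast+\epsilon v$ and $x_2=x^\ast$ with $\epsilon$ small enough that both points lie in $U$. Then $\Pi f(x_1)-\Pi f(x_2)=\epsilon\,\Pi Av=0$, so $Av\in\kerop(\Pi)$.

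Hence (i) is equivalent to $A$-invariance of $\kerop(\Pi)$. The $B$ constructed above is exactly the matrix defined by $B\Pi=\Pi A$ in Theorem~\ref{thm:stability}, and it is unique by surjectivity of $\Pi$. Condition (ii) is therefore literally ``$B$ Hurwitz''. The reduced equilibrium is $y^\ast=\Pi x^\ast$, since $g(\Pi x^\ast)=\Pi f(x^\ast)=0$.

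\emph{Fibre condition.} Condition (iii) is the stated proviso of uniqueness of the equilibrium in the projected fibre. I would also remark that it is automatic when $A$ is invertible, because then $x^\ast=-A^{-1}b$ is the unique equilibrium of the full system. I would not require invertibility, and would simply retain the proviso as the statement does.

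\emph{Main obstacle.} The step needing care is hyperbolicity. Theorem~\ref{thm:stability} is stated for hyperbolic reduced equilibria, and linear stability of the reduced linear system $\dot y=By+\Pi b$ is equivalent to $B$ Hurwitz without any linearisation argument, so this case is clean. I would note in passing that in the linear setting local asymptotic stability of $x^\ast$ forces $A$ Hurwitz. Then $\spec(B)\subseteq\spec(A)$, because $B$ is the map induced by $A$ on the quotient $\R^n/\kerop(\Pi)$, so $B$ is Hurwitz automatically. The Hurwitz clause in the corollary is therefore a consistency check rather than an extra restriction.
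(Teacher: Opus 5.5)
Your proposal is correct and takes the paper's route: the paper also reduces the corollary to Theorem~\ref{thm:stability} by noting that, for affine $f$, projectability is equivalent to $\Pi A$ vanishing on $\kerop(\Pi)$, i.e.\ to $A\kerop(\Pi)\subseteq\kerop(\Pi)$; your explicit factorisation $B=\Pi A\Pi^{\dagger}$ and your difference argument at $x^\ast+\epsilon v$ simply fill in both directions of that equivalence. Your closing remark is also sound and can be taken one step further: local asymptotic stability of $x^\ast$ forces $A$ to be Hurwitz, hence invertible, so the fibre proviso is automatic along with the Hurwitz clause, and in the linear case faithfulness reduces to kernel invariance alone.
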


\begin{proof}
For linear dynamics, projectability is equivalent to $\Pi A z=0$ for every $z\in\kerop(\Pi)$, which is equivalent to $A\kerop(\Pi)\subseteq\kerop(\Pi)$. The result then follows from Theorem~\ref{thm:stability}.
\end{proof}

The corollary identifies why indicator removal can be dangerous. If a removed indicator feeds back into a retained indicator, then the kernel of the projection is not invariant and no autonomous reduced flow exists on the retained variables. In practical terms, throughput and stress cannot define a faithful reduced system if unobserved redundancy, equity or communication-system indicators change their dynamics without being represented in the reduced coordinates.

\subsubsection*{Approximate projectability and practical faithfulness}

Exact projectability is a strong requirement: in deployed urban systems the omitted indicators almost always exert some residual influence on the retained ones. The classification should therefore be graded rather than binary. The grading is achieved through the projectability defect.

\begin{definition}[Projectability defect]\label{def:defect}
Let $g:\Pi U\to\R^m$ be a continuously differentiable candidate reduced vector field, for example the conditional closure $g(y)=\mathbb{E}[\Pi f(x)\mid \Pi x=y]$ under a reference distribution of full states, or the least-squares closure on a calibration set. The projectability defect of the pair $(\Pi,g)$ on $U$ is
\begin{equation}
    \varepsilon_{\Pi}\;=\;\sup_{x\in U}\,\bigl\|\Pi f(x)-g(\Pi x)\bigr\|.
    \label{eq:defect}
\end{equation}
Exact projectability corresponds to $\varepsilon_{\Pi}=0$ for some admissible $g$.
\end{definition}

\begin{theorem}[Practical faithfulness under approximate projectability]\label{thm:approx}
Let $f\in C^1(U,\R^n)$ and let $(\Pi,g)$ have projectability defect $\varepsilon_{\Pi}$ on $U$. Suppose there exist a norm on $\R^m$ and a constant $\alpha>0$ such that the logarithmic norm of the reduced Jacobian satisfies $\lognorm\bigl(Dg(y)\bigr)\leq-\alpha$ for all $y$ in a convex set $V\supseteq\Pi U$. Let $x(t)$ be a full trajectory remaining in $U$ on $[0,T]$ and let $y(t)$ solve $\dot{y}=g(y)$ with $y(0)=\Pi x(0)$, remaining in $V$. Then the projection error $e(t)=\Pi x(t)-y(t)$ satisfies
\begin{equation}
    \|e(t)\|\;\leq\;e^{-\alpha t}\,\|e(0)\|+\frac{\varepsilon_{\Pi}}{\alpha}\bigl(1-e^{-\alpha t}\bigr)
    \;\leq\;\|e(0)\|\,e^{-\alpha t}+\frac{\varepsilon_{\Pi}}{\alpha},
    \qquad t\in[0,T].
    \label{eq:practical_bound}
\end{equation}
In particular, the projected full trajectory converges to a neighbourhood of the reduced attractor of radius $\varepsilon_{\Pi}/\alpha$, and the bound is tight in the linear case.
\end{theorem}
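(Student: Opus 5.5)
The plan is a standard contraction (logarithmic-norm) argument applied to the error equation, followed by a comparison lemma and a linear example for tightness. Differentiating $e(t)=\Pi x(t)-y(t)$ along the two trajectories gives
\begin{equation*}
\dot e(t)=\Pi f(x(t))-g(y(t))=\bigl[g(\Pi x(t))-g(y(t))\bigr]+r(t),\qquad r(t)=\Pi f(x(t))-g(\Pi x(t)),
\end{equation*}
with $\|r(t)\|\le\varepsilon_{\Pi}$ by Definition~\ref{def:defect}, because $x(t)\in U$. The first bracket is written through the mean-value integral
\begin{equation*}
g(\Pi x)-g(y)=J(t)\,e(t),\qquad J(t)=\int_0^1 Dg\bigl(y+s(\Pi x-y)\bigr)\,ds .
\end{equation*}
This representation is legitimate because $\Pi x(t)\in\Pi U\subseteq V$, $y(t)\in V$, and $V$ is convex, so the whole segment lies in $V$. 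Subadditivity and positive homogeneity of the logarithmic norm (equivalently, its convexity) give $\lognorm(J(t))\le\int_0^1\lognorm(Dg(\cdot))\,ds\le-\alpha$.

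Next I apply the standard differential inequality for logarithmic norms \citep{DesoerVidyasagar1975,Soderlind2006}. For $\dot e=J(t)e+r(t)$, the upper right Dini derivative satisfies
\begin{equation*}
D^{+}\|e(t)\|\le\lognorm(J(t))\,\|e(t)\|+\|r(t)\|\le-\alpha\|e(t)\|+\varepsilon_{\Pi}.
\end{equation*}
This follows from $\|e(t+h)\|\le\|(I+hJ)e(t)\|+h\|r\|+o(h)$ and the definition $\lognorm(J)=\lim_{h\downarrow0}(\|I+hJ\|-1)/h$. A scalar comparison (Grönwall-type) lemma applied to $u'=-\alpha u+\varepsilon_{\Pi}$ with $u(0)=\|e(0)\|$ yields the first inequality of \eqref{eq:practical_bound}. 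The second inequality follows by dropping the factor $1-e^{-\alpha t}\le1$. Letting $t\to\infty$ (whenever the trajectory stays in $U$) gives $\limsup\|e(t)\|\le\varepsilon_{\Pi}/\alpha$, which is the neighbourhood statement. Since $y(t)$ itself converges to $y^\ast$ by contractivity, $\Pi x(t)$ enters the $\varepsilon_{\Pi}/\alpha$-ball about the reduced attractor asymptotically.

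Tightness in the linear case is shown by a scalar construction. Take $m=1$, $g(y)=-\alpha y$, and a full flow in which $\Pi f(x)-g(\Pi x)$ equals the constant $\varepsilon_{\Pi}$ on the relevant portion of $U$. For example, take $x=(y,z)$ with $\dot y=-\alpha y+\varepsilon_{\Pi}z$, $\dot z=0$, $z\equiv1$, and $U$ chosen so that $|z|\le1$. Then $e$ solves $\dot e=-\alpha e+\varepsilon_{\Pi}$ exactly, so equality holds in the first bound for every $t$.

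The main obstacle is the Dini-derivative step. Norms need not be differentiable, so the inequality must be stated for $D^{+}$, and the comparison lemma must be invoked in that generality. A second point needs care: the logarithmic-norm bound has to hold for the averaged Jacobian $J(t)$ and not only pointwise, which is exactly where convexity of $V$ and convexity of $\lognorm$ are used. If $\lognorm$ is induced by a non-Euclidean norm, I would rely on the general property $\lognorm(A+B)\le\lognorm(A)+\lognorm(B)$, passed to the Riemann-sum limit, to justify the integral bound.
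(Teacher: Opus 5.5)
Your proposal is correct and follows essentially the same route as the paper. Both split $\dot e$ into $g(\Pi x)-g(y)$ plus the defect, derive the Dini-derivative inequality $D^{+}\|e\|\le-\alpha\|e\|+\varepsilon_{\Pi}$ from the logarithmic norm, and integrate the scalar comparison. Your averaged Jacobian $J(t)$ with convexity of $\lognorm$ is a more careful version of the paper's ``mean-value point $\xi(t)$'' step: no exact mean-value point exists for vector-valued $g$, but $\lognorm(J(t))$ is bounded by the largest $\lognorm(Dg)$ on the segment. Your conserved-coordinate example is a concrete instance of the paper's sketched constant-defect tightness construction.
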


\begin{proof}
Write $\dot{e}=\Pi f(x(t))-g(y(t))=\bigl[g(\Pi x(t))-g(y(t))\bigr]+d(t)$ with $d(t)=\Pi f(x(t))-g(\Pi x(t))$, so that $\|d(t)\|\leq\varepsilon_{\Pi}$ by Definition~\ref{def:defect}. Since $V$ is convex and contains both $\Pi x(t)$ and $y(t)$, the mean-value form of the logarithmic-norm estimate \citep{DesoerVidyasagar1975,Soderlind2006} gives, for the upper right Dini derivative,
\begin{equation*}
    D^{+}\|e(t)\|\;\leq\;\lognorm\bigl(Dg(\xi(t))\bigr)\,\|e(t)\|+\|d(t)\|\;\leq\;-\alpha\|e(t)\|+\varepsilon_{\Pi},
\end{equation*}
for some $\xi(t)$ on the segment joining $\Pi x(t)$ and $y(t)$. Integrating the scalar differential inequality yields Eq.~\eqref{eq:practical_bound}. Tightness in the linear case follows by taking $f(x)=Ax$ with $A$ normal, $g(y)=By$ the quotient on the orthogonal complement of an almost-invariant kernel, and a constant defect aligned with the slowest reduced eigenvector, for which the inequality is attained with equality.
\end{proof}

\begin{corollary}[Graded faithfulness classification]\label{cor:graded}
Fix a tolerance $\delta_{\mathrm{tol}}>0$ on the reduced indicator scale, for instance one half of the smallest decision-relevant margin in the candidate antifragility rule. A reduction $(\Pi,g)$ is classified as: \emph{faithful} when $\varepsilon_{\Pi}=0$ and Theorem~\ref{thm:stability} holds; \emph{practically faithful} when $\varepsilon_{\Pi}/\alpha\leq\delta_{\mathrm{tol}}$ with $\lognorm(Dg)\leq-\alpha$ on the working neighbourhood; and \emph{unfaithful} otherwise. The classification is computable from $(\varepsilon_{\Pi},\alpha)$, both of which can be estimated from calibration trajectories.
\end{corollary}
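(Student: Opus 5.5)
The graded classification is largely definitional, so the plan is to check two things. The three classes must be well-defined, mutually exclusive and exhaustive. Each class must also carry the guarantee its name suggests, by reduction to Theorem~\ref{thm:stability} and Theorem~\ref{thm:approx}.

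First, exhaustiveness and exclusivity. The conditions ``$\varepsilon_{\Pi}=0$ and Theorem~\ref{thm:stability} holds'' and ``$\varepsilon_{\Pi}/\alpha\le\delta_{\mathrm{tol}}$ with $\lognorm(Dg)\le-\alpha$'' can overlap. I would therefore read the classification as a priority order: faithful first, then practically faithful among the remainder, then unfaithful for everything else. With that reading the three classes partition the set of admissible pairs $(\Pi,g)$.

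Second, the semantic content of each class.
\begin{itemize}
\item \emph{Faithful.} If $\varepsilon_{\Pi}=0$, then $\Pi f(x)=g(\Pi x)$ on $U$, so condition (i) of Theorem~\ref{thm:stability} holds. Together with the Hurwitz and fibre conditions, that theorem gives performance-faithfulness.
\item \emph{Practically faithful.} Apply Theorem~\ref{thm:approx} directly. With $y(0)=\Pi x(0)$ we have $e(0)=0$, so \eqref{eq:practical_bound} gives $\|\Pi x(t)-y(t)\|\le\varepsilon_{\Pi}/\alpha\le\delta_{\mathrm{tol}}$ uniformly on $[0,T]$. The reduced trajectory therefore tracks the projected full trajectory to within the tolerance.
\item \emph{Stability of the reduced target.} The logarithmic-norm bound $\lognorm(Dg)\le-\alpha<0$ on convex $V$ makes $g$ contracting on $V$ (the contraction analysis cited before Definition~\ref{def:defect}). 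Hence $\dot y=g(y)$ has at most one equilibrium in $V$, and it is exponentially stable. So the object being tracked is itself a stable reduced attractor.
\item \emph{Decision relevance.} If $\delta_{\mathrm{tol}}$ is half the smallest decision margin, an error of at most $\delta_{\mathrm{tol}}$ cannot move a reduced indicator across a margin that the full state clears by at least $2\delta_{\mathrm{tol}}$. This is a remark on the choice of tolerance rather than a proof obligation.
\end{itemize}

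Third, computability. By Definition~\ref{def:defect}, $\varepsilon_{\Pi}$ is a supremum of residuals $\|\Pi f(x)-g(\Pi x)\|$ over the working set. From calibration trajectories $f(x)$ is approximated by finite differences of $x(t)$, which gives a consistent plug-in estimator as sampling densifies over $U$. The rate $\alpha$ is $-\sup_{y}\lognorm(Dg(y))$. For the Euclidean norm this is the negative of the largest eigenvalue of the symmetric part of $Dg$, and it can be computed from the fitted closure $g$. The classification is a deterministic function of the pair $(\varepsilon_{\Pi},\alpha)$ together with the exact-projectability flag, so it is computable from these estimates.

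The main obstacle is this computability claim. A supremum estimated from finite data is only a lower bound on the true $\varepsilon_{\Pi}$, so ``practically faithful'' is certified only up to estimation error. I would handle this by stating the result for the population quantities. I would then add that a conservative certificate follows by inflating $\hat\varepsilon_{\Pi}$ with a Lipschitz covering bound: if $\Pi f-g\circ\Pi$ is $L$-Lipschitz and the samples form a $\rho$-net of $U$, then $\varepsilon_{\Pi}\le\hat\varepsilon_{\Pi}+L\rho$. The test $(\hat\varepsilon_{\Pi}+L\rho)/\alpha\le\delta_{\mathrm{tol}}$ is then a sufficient, data-checkable condition.
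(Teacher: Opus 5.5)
Your proposal is correct and follows the paper's own implicit justification. The paper reads the corollary off Theorem~\ref{thm:approx} with $e(0)=0$, so that $\|\Pi x(t)-y(t)\|\le\varepsilon_{\Pi}/\alpha\le\delta_{\mathrm{tol}}$, uses Theorem~\ref{thm:stability} for the exact case, and grounds computability in the supremum residual of the closure on calibration windows and the logarithmic norm of the reduced Jacobian, with the same Lipschitz fill-distance correction you propose (the paper invokes it for both $\varepsilon_{\Pi}$ and $\delta_{c}$).

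One point needs tightening. At $\varepsilon_{\Pi}=0$, the split between \emph{faithful} and \emph{practically faithful} also depends on the fibre condition (iii) of Theorem~\ref{thm:stability}, which is not a function of $(\varepsilon_{\Pi},\alpha)$ and must be checked separately. The Hurwitz condition (ii), by contrast, is automatic, because the spectral abscissa of $Dg(y^{\ast})$ is bounded by $\lognorm(Dg(y^{\ast}))\le-\alpha$.
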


The practical content of Theorem~\ref{thm:approx} is that hidden feedback from omitted indicators does not immediately invalidate a reduced analysis; it invalidates it once the asymptotic error band $\varepsilon_{\Pi}/\alpha$ becomes comparable to the margins on which the equilibrium claims rest. The defect $\varepsilon_{\Pi}$ is estimable: on any calibration window in which the full indicator vector is temporarily available, the defect is the supremum residual of the reduced closure, and $\alpha$ is obtained from the reduced Jacobian along the calibration trajectory. This estimation route is exercised in Experiment~E1 of Section~\ref{sec:numerical}.

The uniform negative-logarithmic-norm hypothesis of Theorem~\ref{thm:approx} excludes stable but non-normal reduced dynamics that exhibit transient amplification; Experiment~E1 shows that the standard tier, which retains the slow equity-redundancy subspace, sits exactly in this regime. The following extension removes the restriction at the cost of a computable amplification constant, and recovers Theorem~\ref{thm:approx} when $M=1$.

\begin{theorem}[Practical faithfulness under transient amplification]\label{thm:transient}
Let $x(t)$ solve the full flow, let $y(t)$ solve the reduced closure $\dot{y}=g(y)$, and set $e(t)=\Pi x(t)-y(t)$ and $d(t)=\Pi f(x(t))-g(\Pi x(t))$ with $\lVert d(t)\rVert\le\varepsilon_{\Pi}$ on $[0,T]$. Suppose the segment joining $y(t)$ and $\Pi x(t)$ lies in a convex reduced neighbourhood $V$, let $J_{e}(t)=\int_{0}^{1}Dg(y(t)+se(t))\,ds$, and let $\Phi_{e}(t,s)$ be the transition operator of $\dot{v}=J_{e}(t)v$. If there exist $M\ge1$ and $\alpha>0$ with $\lVert\Phi_{e}(t,s)\rVert\le M\,\mathrm{e}^{-\alpha(t-s)}$ for $0\le s\le t\le T$, then
\begin{equation}
\lVert e(t)\rVert\;\le\;M\,\mathrm{e}^{-\alpha t}\lVert e(0)\rVert+\frac{M\varepsilon_{\Pi}}{\alpha}\bigl(1-\mathrm{e}^{-\alpha t}\bigr),
\qquad\limsup_{t\to\infty}\lVert e(t)\rVert\le\frac{M\varepsilon_{\Pi}}{\alpha},
\label{eq:transient_bound}
\end{equation}
so the reduction is practically faithful at tolerance $\delta_{\mathrm{tol}}$ whenever $M\varepsilon_{\Pi}/\alpha\le\delta_{\mathrm{tol}}$.
\end{theorem}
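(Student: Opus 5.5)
The plan is to write the error equation in exact linear time-varying form and apply variation of constants with the assumed transition-operator bound, instead of the logarithmic-norm Dini-derivative estimate used for Theorem~\ref{thm:approx}. As in that proof, decompose
\[
\dot e(t)=\bigl[g(\Pi x(t))-g(y(t))\bigr]+d(t).
\]
Since $V$ is convex and contains the segment from $y(t)$ to $\Pi x(t)=y(t)+e(t)$, and $g$ is $C^1$ on $V$, the integral form of the mean-value theorem gives exactly
\[
g(y(t)+e(t))-g(y(t))=J_e(t)\,e(t).
\]
Hence $e$ solves the linear inhomogeneous system $\dot e=J_e(t)e+d(t)$, with $J_e$ continuous in $t$ because $x$, $y$ and $Dg$ are continuous. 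Note that $J_e$ depends on the solution itself. This is harmless: one fixes the given trajectories and regards $J_e$ as a known continuous matrix function, so $\Phi_e$ is well defined.

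Next, invoke the variation-of-constants formula for the transition operator $\Phi_e$:
\[
e(t)=\Phi_e(t,0)e(0)+\int_0^t\Phi_e(t,s)\,d(s)\,ds,\qquad t\in[0,T].
\]
Taking norms, using $\lVert\Phi_e(t,s)\rVert\le M\mathrm{e}^{-\alpha(t-s)}$ and $\lVert d(s)\rVert\le\varepsilon_{\Pi}$, yields
\[
\lVert e(t)\rVert\le M\mathrm{e}^{-\alpha t}\lVert e(0)\rVert+M\varepsilon_{\Pi}\int_0^t\mathrm{e}^{-\alpha(t-s)}ds .
\]
The remaining integral equals $(1-\mathrm{e}^{-\alpha t})/\alpha$, which is Eq.~\eqref{eq:transient_bound} on $[0,T]$.

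For the limsup statement, the hypotheses are taken to hold for every $T$, that is, the trajectories remain in $U$ and $V$ for all time with uniform $M,\alpha$. Letting $t\to\infty$ then kills the first term and gives the bound $M\varepsilon_{\Pi}/\alpha$. The tolerance clause follows by comparison with Corollary~\ref{cor:graded}.

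Two consistency checks are worth recording.
\begin{itemize}
\item When $\lognorm(Dg)\le-\alpha$ on $V$, subadditivity of the logarithmic norm under integration gives $\lognorm(J_e(t))\le-\alpha$. Coppel's inequality then gives $M=1$, recovering Theorem~\ref{thm:approx}.
\item The constant $M$ is certifiable, for instance through a Lyapunov matrix $P$ with $J_e^{\top}P+PJ_e\preceq-2\alpha P$. This gives $M=\sqrt{\kappa(P)}$.
\end{itemize}

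The main obstacle is the justification of the exact identity $g(\Pi x)-g(y)=J_e e$ together with the meaning of the transition-operator hypothesis. Because $J_e$ is built from the unknown error, the bound on $\Phi_e$ is an a posteriori hypothesis along the given pair of trajectories, not a property of $g$ alone. The proof must be careful to present the hypothesis this way, so that no circularity arises. The convexity of $V$ is exactly what licenses the integral mean-value step.
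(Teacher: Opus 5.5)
Your proof is correct and follows the paper's argument essentially step for step: the integral mean-value identity $g(\Pi x)-g(y)=J_{e}e$, the linear time-varying error equation $\dot e=J_{e}e+d$, variation of constants with $\Phi_{e}$, and the bound $M\varepsilon_{\Pi}\int_{0}^{t}\mathrm{e}^{-\alpha(t-s)}\,ds$. You also correctly note that the $\limsup$ clause needs the hypotheses to hold with uniform $M,\alpha$ on all of $[0,\infty)$, which the paper's proof leaves implicit.
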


\begin{proof}
The mean-value identity gives $g(\Pi x(t))-g(y(t))=J_{e}(t)e(t)$, hence $\dot{e}=J_{e}(t)e+d(t)$, and variation of constants yields $e(t)=\Phi_{e}(t,0)e(0)+\int_{0}^{t}\Phi_{e}(t,s)d(s)\,ds$. Applying the semigroup estimate and the defect bound, the integral is at most $M\varepsilon_{\Pi}\int_{0}^{t}\mathrm{e}^{-\alpha(t-s)}ds=(M\varepsilon_{\Pi}/\alpha)(1-\mathrm{e}^{-\alpha t})$.
\end{proof}

The constant $M$ is computable from the calibrated reduced Jacobian $\widehat{B}$. For a linear closure with $\widehat{B}$ Hurwitz and any $0<\alpha_{0}<-\max_{j}\Re\lambda_{j}(\widehat{B})$, the shifted amplification factor $M_{\alpha_{0}}(\widehat{B})=\sup_{t\ge0}\mathrm{e}^{\alpha_{0}t}\lVert\mathrm{e}^{\widehat{B}t}\rVert$ satisfies $M_{\alpha_{0}}(\widehat{B})\le\kappa(V_{\mathrm{eig}})$ for any diagonalisation $\widehat{B}=V_{\mathrm{eig}}\Lambda V_{\mathrm{eig}}^{-1}$, and a certified bound is obtained from the semidefinite programme $\min\{\tau: I\preceq P\preceq\tau I,\ \widehat{B}^{\top}P+P\widehat{B}+2\alpha_{0}P\preceq0\}$, which gives $\lVert\mathrm{e}^{\widehat{B}t}\rVert_{2}\le\sqrt{\tau}\,\mathrm{e}^{-\alpha_{0}t}$ by the standard quadratic Lyapunov argument. For a nonlinear closure with $r_{V}=\sup_{y\in V}\lVert Dg(y)-\widehat{B}\rVert_{2}$, including any Jacobian-estimation confidence radius, the hypothesis of Theorem~\ref{thm:transient} holds with $M=\sqrt{\tau}$ and $\alpha=\alpha_{0}-\sqrt{\tau}\,r_{V}$ whenever $\sqrt{\tau}\,r_{V}<\alpha_{0}$, by treating $Dg-\widehat{B}$ as a bounded perturbation and applying Gr\"onwall's inequality. The asymptotic certificate $\sqrt{\tau}\,\varepsilon_{\Pi}/(\alpha_{0}-\sqrt{\tau}\,r_{V})$ is therefore directly computable from calibration data, and quantifies precisely the certificate fragility observed for the standard tier in Experiment~E1.

\subsubsection*{Spectral structure and restoration time}

When the kernel of the projection is invariant, the spectrum decomposes, and the decomposition has a direct consequence for the restoration-time quantities reported to stakeholders.

\begin{proposition}[One-sided restoration-time bias under reduction]\label{prop:rte}
Let $\dot{x}=A(x-x^{\ast})$ with $A$ Hurwitz, let $\Pi$ satisfy $A\kerop(\Pi)\subseteq\kerop(\Pi)$, and let $B$ be the induced quotient matrix. Then
\begin{equation}
    \spec(A)\;=\;\spec\bigl(A|_{\kerop(\Pi)}\bigr)\,\cup\,\spec(B),
    \label{eq:spectral_split}
\end{equation}
and consequently the spectral abscissae satisfy $\alpha(B)\leq\alpha(A)<0$. For a tolerance $\delta\in(0,1)$ define the restoration time of a trajectory as $T_{\delta}(x_0)=\inf\{t\geq0:\|x(s)-x^{\ast}\|\leq\delta\|x_0-x^{\ast}\|\ \forall s\geq t\}$, and analogously $T_{\delta}^{\Pi}$ for the projected trajectory. Then for every initial condition, $T_{\delta}^{\Pi}(x_0)\leq T_{\delta}(x_0)$ up to the norm equivalence constants of the projection, and the inequality is strict whenever the slowest eigenvalue of $A$, namely the one attaining $\alpha(A)$, belongs to $\spec(A|_{\kerop(\Pi)})$ and is excited by the initial displacement.
\end{proposition}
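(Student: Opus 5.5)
The plan is to work in an adapted basis. I would choose a basis of $\R^n$ whose first $n-m$ vectors span $K=\kerop(\Pi)$ and whose remaining $m$ vectors span a complement $W$. Since $\Pi$ is surjective, it restricts to an isomorphism $\Pi|_W:W\to\R^m$.

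\textbf{Spectral split.} Write $z=x-x^\ast$. Because $AK\subseteq K$, the matrix of $A$ in this basis is block upper triangular,
\[
A\sim\begin{pmatrix}A_{11}&A_{12}\\0&A_{22}\end{pmatrix},\qquad A_{11}=A|_{K}.
\]
The relation $B\Pi=\Pi A$ shows that $A_{22}$ is conjugate, through $\Pi|_W$, to $B$. The characteristic polynomial factorises as $\det(\lambda-A)=\det(\lambda-A_{11})\det(\lambda-B)$, which gives Eq.~\eqref{eq:spectral_split} (as multisets). Since $\spec(B)\subseteq\spec(A)$, we get $\alpha(B)\le\alpha(A)$, and $\alpha(A)<0$ because $A$ is Hurwitz.

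\textbf{Dynamics of the projection.} Let $w=\Pi z$. Then $\dot w=\Pi Az=Bw$, so $w(t)=e^{Bt}\Pi z_0$. Projected restoration is therefore governed only by $\spec(B)$. The full error, by contrast, carries the kernel component $z_K(t)$, which solves $\dot z_K=A_{11}z_K+A_{12}$-coupling. Its modes include both $\spec(A_{11})$ and the forcing from $\spec(B)$.

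\textbf{Comparison $T^\Pi_\delta\le T_\delta$.} Suppose $t\ge T_\delta(x_0)$. Then $\|z(s)\|\le\delta\|z_0\|$ for all $s\ge t$, and so
\[
\|\Pi z(s)\|\le\|\Pi\|\,\|z(s)\|\le\|\Pi\|\,\delta\|z_0\|.
\]
To compare with $\delta\|\Pi z_0\|$ we need norm equivalence constants. I would read "up to the norm equivalence constants" as permission to absorb $c=\|\Pi\|\,\|(\Pi|_W)^{-1}\|$-type factors into the tolerance. Concretely, I would prove $T^\Pi_{c\delta}\le T_\delta$, or state the inequality for norms adapted so that $\Pi$ is a coordinate selector with orthonormal complement, in which case $c=1$ and the projection is nonexpansive. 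The main obstacle is the denominator $\|\Pi z_0\|$, which can be much smaller than $\|z_0\|$, or even zero. I would handle this by defining the projected tolerance relative to $\|z_0\|$ (the same absolute threshold), which is the reading under which the inequality is clean. I would flag this convention explicitly.

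\textbf{Strictness.} For strictness I would use asymptotics. Suppose the slowest eigenvalue $\lambda_\ast$ with $\Re\lambda_\ast=\alpha(A)$ lies in $\spec(A_{11})\setminus\spec(B)$, and the initial displacement has a nonzero spectral component along it. Then $\|z(t)\|\asymp t^{k}e^{\alpha(A)t}$, while $\|\Pi z(t)\|\lesssim t^{k'}e^{\alpha(B')t}$, where $\alpha(B')<\alpha(A)$ is taken over the excited modes of $B$. As $\delta\to0$ this gives
\[
T_\delta\sim\frac{\log(1/\delta)}{|\alpha(A)|},\qquad T^\Pi_\delta\sim\frac{\log(1/\delta)}{|\alpha(B')|}.
\]
The ordering is therefore strict for all sufficiently small $\delta$. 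This is the "asymptotic ordering" form announced in the abstract, and I would state strictness in that sense.

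One case needs care: $\lambda_\ast$ appears in both spectra. If it lies in $\spec(B)$ with the projected component excited, the rates coincide and only the non-strict inequality survives. That is why the hypothesis requires $\lambda_\ast$ to belong to the kernel spectrum only.
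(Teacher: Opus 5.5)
Your proposal is correct, and where it departs from the paper it does so in the right direction. The spectral split is the paper's argument: block upper triangular form adapted to $\kerop(\Pi)\oplus W$, the lower block similar to $B$, and determinant factorisation. Your weak inequality is the same non-expansiveness argument the paper invokes in one line.

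The difference is that the paper never addresses the denominator in $T^{\Pi}_{\delta}$, and its argument only works with the absolute threshold $\delta\|x_0-x^\ast\|$ that you adopt. This switch is forced, not a matter of taste. Under the literal relative reading no constant rescues the inequality: for $A=\mathrm{diag}(-10,-0.1)$, $\Pi=(0\;\,1)$, $z_0=(1,\epsilon)^{\top}$ and $\delta<\epsilon$ one gets $T^{\Pi}_{\delta}-T_{\delta}\approx10\ln(1/\epsilon)$. Under the absolute threshold only $\|\Pi\|$ enters, so you can drop the $\|(\Pi|_W)^{-1}\|$ factor. Note also that this step does not use kernel invariance at all.

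For strictness the routes genuinely differ. The paper asserts it at each fixed $\delta$ (the full error ``remains above the tolerance strictly longer''). You obtain it only for sufficiently small $\delta$ from the decay rates, which is in effect the trajectory part of Proposition~\ref{prop:asym_rte}. Your restriction is necessary. Take $A=\begin{pmatrix}-0.1&0.5\\0&-1\end{pmatrix}$, $\Pi=(0\;\,1)$, $z_0=e^{-A}(0,1)^{\top}$, $\delta=1/\|z_0\|$ and the Euclidean norm. The kernel coordinate is $\tfrac59\bigl(e^{-0.1(t-1)}-e^{-(t-1)}\bigr)$, so the slow kernel mode is excited. Yet $\|z(t)\|\le1$ for all $t\ge1$ with equality at $t=1$, while $\|\Pi z(t)\|=e^{1-t}$. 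Hence $T_{\delta}=T^{\Pi}_{\delta}=1$ under your convention, and under the relative one $T^{\Pi}_{\delta}=\ln\|z_0\|>T_{\delta}$.

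One refinement: what your rate argument actually uses is $\alpha_B(\Pi z_0)<\alpha(A)$, the largest real part among $B$-modes excited by $\Pi z_0$. The condition $\lambda_\ast\notin\spec(B)$ gives this only if $\lambda_\ast$ and its conjugate are the only eigenvalues with real part $\alpha(A)$. Conversely, $\lambda_\ast\in\spec(B)$ with that mode unexcited in $\Pi z_0$ still yields strictness. So state the hypothesis in that form rather than as membership in the kernel spectrum only.
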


\begin{proof}
Invariance of $\kerop(\Pi)$ permits a change of basis adapted to the splitting $\R^n=\kerop(\Pi)\oplus W$ for any complement $W$, in which $A$ is block upper triangular with diagonal blocks $A|_{\kerop(\Pi)}$ and a matrix similar to $B$; Eq.~\eqref{eq:spectral_split} follows from the determinant factorisation of block-triangular matrices. Since $\spec(B)\subseteq\spec(A)$, the maximum real part over $\spec(B)$ cannot exceed that over $\spec(A)$, giving $\alpha(B)\leq\alpha(A)$. The projected error $\Pi(x(t)-x^{\ast})$ evolves under $B$ alone, hence decays at the rate governed by $\alpha(B)$, whereas the full error contains in addition the modes of $A|_{\kerop(\Pi)}$. Whenever the slowest mode lies in the kernel and is excited, the full error remains above the tolerance strictly longer than its projection, which proves the strict case; in all cases the projected error is a non-expansive image of the full error up to the norm equivalence constants, giving the weak inequality.
\end{proof}

\begin{remark}[Interpretation for monitoring practice]
Proposition~\ref{prop:rte} states that a city monitoring a reduced indicator set will, in the projectable linear regime, declare restoration to equilibrium no later than, and generically earlier than, the full system actually restores. The bias is one-sided: reduction can only hide slow modes, never invent them. Reported restoration-time improvements computed from minimal indicator sets are therefore upper bounds on achievement, not unbiased estimates, and this holds before any statistical noise is considered. This result bears directly on restoration-time commitments expressed at the project level, and it is illustrated quantitatively in Experiment~E2 of Section~\ref{sec:numerical}.
\end{remark}

Proposition~\ref{prop:rte} compares asymptotic decay rates; its finite-time reading is subject to norm-equivalence constants and non-normal transients. The following statement removes those qualifications by comparing worst-case restoration envelopes, and holds for arbitrary Hurwitz dynamics without normality or monotonicity hypotheses.

\begin{proposition}[Asymptotic restoration ordering from spectral inclusion]\label{prop:asym_rte}
Let $A$ be Hurwitz with $A\ker\Pi\subseteq\ker\Pi$, and let $B$ satisfy $B\Pi=\Pi A$. For any operator norm define the restoration envelopes $\mathcal{T}_{A}(\delta)=\inf\{t\ge0:\sup_{s\ge t}\lVert\mathrm{e}^{As}\rVert\le\delta\}$ and $\mathcal{T}_{B}(\delta)$ analogously. Then, writing $\alpha(\cdot)$ for the spectral abscissa,
\begin{equation}
\lim_{\delta\downarrow0}\frac{\mathcal{T}_{B}(\delta)}{\mathcal{T}_{A}(\delta)}=\frac{\alpha(A)}{\alpha(B)}\le1.
\label{eq:asym_ordering}
\end{equation}
For a specific initial state $v_{0}$ with $\Pi v_{0}\neq0$, let $\alpha_{A}(v_{0})$ be the largest real part over generalised eigenspaces in which $v_{0}$ has a non-zero component, and $\alpha_{B}(\Pi v_{0})$ analogously; then $\alpha_{B}(\Pi v_{0})\le\alpha_{A}(v_{0})$ and the trajectory-specific restoration-time ratio converges to $\alpha_{A}(v_{0})/\alpha_{B}(\Pi v_{0})\le1$.
\end{proposition}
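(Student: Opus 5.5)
The plan is to reduce everything to the standard asymptotics of matrix exponentials via Jordan form. By Proposition~\ref{prop:rte}, invariance of $\kerop(\Pi)$ gives $\spec(B)\subseteq\spec(A)$, hence $\alpha(B)\le\alpha(A)<0$, and $B$ is Hurwitz, so both envelopes are finite for every $\delta\in(0,1)$ and tend to $+\infty$ as $\delta\downarrow0$. The key intermediate estimate is the two-sided bound, valid for any Hurwitz matrix $X$ with spectral abscissa $\alpha(X)$ and maximal Jordan block size $k_X$ among eigenvalues on the line $\Re\lambda=\alpha(X)$:
\begin{equation*}
c_X\,(1+t)^{k_X-1}\mathrm{e}^{\alpha(X)t}\;\le\;\lVert\mathrm{e}^{Xt}\rVert\;\le\;C_X\,(1+t)^{k_X-1}\mathrm{e}^{\alpha(X)t},\qquad t\ge0,
\end{equation*}
with constants $0<c_X\le C_X$ depending on $X$ and on the chosen operator norm (norm equivalence only changes the constants). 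The upper bound is immediate from the Jordan decomposition. For the lower bound I would apply $\mathrm{e}^{Xt}$ to a suitable generalised eigenvector of a maximal block on the critical line and use that for complex eigenvalues the oscillating factors $\mathrm{e}^{\mathrm{i}\omega t}$ have modulus one, so that a lower bound follows for all large $t$ once the coefficient of $t^{k_X-1}$ is nonzero.

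Next, the envelope $\sup_{s\ge t}\lVert\mathrm{e}^{Xs}\rVert$ satisfies the same two-sided bound up to constants, since the upper envelope is monotone and the lower bound at $s=t$ already bounds it from below. Taking logarithms, the condition $\sup_{s\ge t}\lVert\mathrm{e}^{Xs}\rVert\le\delta$ forces $\alpha(X)t+(k_X-1)\log(1+t)+O(1)\le\log\delta$, and is implied by the same inequality with a different $O(1)$. Solving gives $\mathcal{T}_X(\delta)=\log(1/\delta)/\lvert\alpha(X)\rvert+O(\log\log(1/\delta))$ as $\delta\downarrow0$, because the polynomial factor contributes only a logarithmic correction in $t$. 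Dividing the expansions for $X=B$ and $X=A$ yields $\mathcal{T}_B(\delta)/\mathcal{T}_A(\delta)\to\lvert\alpha(A)\rvert/\lvert\alpha(B)\rvert=\alpha(A)/\alpha(B)$, and this is at most one because $\alpha(B)\le\alpha(A)<0$, which is Eq.~\eqref{eq:asym_ordering}.

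For the trajectory-specific claim I would repeat the argument with $\lVert\mathrm{e}^{At}v_0\rVert$ and $\lVert\mathrm{e}^{Bt}\Pi v_0\rVert$ in place of the operator norms, restricting the Jordan analysis to the generalised eigenspaces in which the vector has a nonzero component. The rate inequality $\alpha_B(\Pi v_0)\le\alpha_A(v_0)$ follows from $\Pi\mathrm{e}^{At}v_0=\mathrm{e}^{Bt}\Pi v_0$, which comes from $B\Pi=\Pi A$. This identity gives $\lVert\mathrm{e}^{Bt}\Pi v_0\rVert\le\lVert\Pi\rVert\,\lVert\mathrm{e}^{At}v_0\rVert$, and comparing exponential growth rates gives the inequality. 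The main obstacle I expect is the lower bound in the trajectory case. The component of $v_0$ on the critical eigenspaces may consist of several complex modes with the same real part whose oscillations could interfere and make the norm small at isolated times. I would handle this by noting that the relevant vector-valued function is a nonzero almost-periodic polynomial combination, whose norm is bounded below by a positive constant times $t^{k-1}\mathrm{e}^{\alpha t}$ along the envelope $\sup_{s\ge t}$. Since the restoration time is defined through the envelope, isolated dips do not matter, and the same log-asymptotic argument then gives the limit $\alpha_A(v_0)/\alpha_B(\Pi v_0)\le1$.
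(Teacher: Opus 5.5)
Your proposal is correct and follows the paper's own proof. The shared ingredients are: Jordan-form asymptotics giving $\mathcal{T}_X(\delta)=\log(1/\delta)/\lvert\alpha(X)\rvert+O(\log\log(1/\delta))$ for $X=A,B$, the inclusion $\sigma(B)\subseteq\sigma(A)$ from kernel invariance, the ratio of leading terms, and the intertwining identity $\mathrm{e}^{Bt}\Pi v_0=\Pi\,\mathrm{e}^{At}v_0$ for the trajectory version, with your two-sided bounds spelling out estimates the paper leaves terse. The interference you flag does not arise: the critical-line components lie in distinct generalised eigenspaces with bounded spectral projections, so $\lVert\mathrm{e}^{Xt}w\rVert\ge c\,t^{j}\mathrm{e}^{\alpha t}$ holds pointwise for large $t$, and since $(B-\lambda I)^k\Pi=\Pi(A-\lambda I)^k$ the inequality $\alpha_B(\Pi v_0)\le\alpha_A(v_0)$ also follows algebraically.
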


\begin{proof}
The Jordan form gives $\lVert\mathrm{e}^{At}\rVert=\exp(\alpha(A)t+O(\log t))$, the logarithmic term arising from the largest Jordan block at the spectral edge; solving $t^{r-1}\mathrm{e}^{\alpha(A)t}\asymp\delta$ gives $\mathcal{T}_{A}(\delta)=\log(1/\delta)/(-\alpha(A))+O(\log\log(1/\delta))$, and likewise for $B$. Kernel invariance makes $A$ block upper triangular in a basis adapted to $\ker\Pi$, so $\sigma(B)\subseteq\sigma(A)$ and $\alpha(B)\le\alpha(A)<0$; the ratio of leading terms gives Eq.~\eqref{eq:asym_ordering}. For the trajectory version, only excited generalised eigenspaces contribute to the decay rate of $\mathrm{e}^{At}v_{0}$, and the intertwining identity $\mathrm{e}^{Bt}\Pi v_{0}=\Pi\,\mathrm{e}^{At}v_{0}$ shows the projected trajectory cannot decay more slowly than the full one, giving $\alpha_{B}(\Pi v_{0})\le\alpha_{A}(v_{0})$.
\end{proof}

The envelope formulation makes the one-sided conclusion of Proposition~\ref{prop:rte} exact in the small-tolerance limit: reduced monitoring can only understate the restoration horizon, whatever the norm and however non-normal the dynamics, with equality precisely when the slowest excited mode is visible to the projection.

\subsection{Reduction of the joint fixed point}\label{subsec:joint}

Section~\ref{subsec:strategic_layer} defines a slower strategic layer coupled to the performance layer. The abstract form needed here is a two-layer fixed-point problem
\begin{equation}
    x=F(x,z),\qquad z=H(z,x),
    \label{eq:full_fixed_point}
\end{equation}
where $x\in X\subset\R^n$ is the performance-layer indicator state and $z\in Z\subset\R^p$ is a strategic state containing route, modal and learning quantities. Let $\Pi_x:\R^n\to\R^m$ and $\Pi_z:\R^p\to\R^q$ be surjective reductions. The reduced coordinates are $y=\Pi_xx$ and $w=\Pi_zz$.

\begin{definition}[Coupled commutation]
The pair $(\Pi_x,\Pi_z)$ commutes with the fixed-point map if there exist maps $\tilde{F}$ and $\tilde{H}$ such that
\begin{align}
    \Pi_x F(x,z)&=\tilde{F}(\Pi_xx,\Pi_zz),\label{eq:F_commute}\\
    \Pi_z H(z,x)&=\tilde{H}(\Pi_zz,\Pi_xx),\label{eq:H_commute}
\end{align}
for all $(x,z)$ in a neighbourhood of the full fixed point.
\end{definition}

The commutation condition states that applying the full fixed-point map and then reducing must give the same result as reducing first and applying the reduced fixed-point map. Without this condition, the reduced fixed point is not inherited from the full model. It is a new model whose relation to the original attractor is unspecified.

\begin{assumption}[Local coupled Lipschitz structure]\label{ass:lipschitz}
In a neighbourhood of $(x^\ast,z^\ast)$, the reduced maps satisfy
\begin{align}
\|\tilde{F}(y_1,w_1)-\tilde{F}(y_2,w_2)\|&\leq a\|y_1-y_2\|+b\|w_1-w_2\|,\\
\|\tilde{H}(w_1,y_1)-\tilde{H}(w_2,y_2)\|&\leq c\|w_1-w_2\|+d\|y_1-y_2\|.
\end{align}
The reduced coupling matrix is
\begin{equation}
    K_{\Pi}=\begin{pmatrix}a&b\\ d&c\end{pmatrix}.
    \label{eq:KPi}
\end{equation}
\end{assumption}

\begin{theorem}[Joint fixed-point preservation under coupled reduction]\label{thm:fixedpoint}
Suppose Eq.~\eqref{eq:full_fixed_point} has a locally unique fixed point $(x^\ast,z^\ast)$ and that $(\Pi_x,\Pi_z)$ commutes with the fixed-point map. If $\rho(K_{\Pi})<1$, then the reduced map has a unique fixed point $(y^\ast,w^\ast)=(\Pi_xx^\ast,\Pi_zz^\ast)$ in the reduced neighbourhood, and the fixed-point iteration converges locally to it. If $\rho(K_{\Pi})=1$, uniqueness is not guaranteed; if, in addition, the derivative of the reduced map has an eigenvalue equal to one and the corresponding compatibility condition for the nonlinear terms is satisfied, a continuum or branch of reduced fixed points may occur.
\end{theorem}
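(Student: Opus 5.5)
The proof splits into three parts: the reduced fixed point is the projection of the full one, the reduced map is a contraction in a weighted norm, and the boundary case $\rho(K_\Pi)=1$ is discussed.

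\emph{Inheritance.} Set $y^\ast=\Pi_x x^\ast$ and $w^\ast=\Pi_z z^\ast$. Apply $\Pi_x$ to $x^\ast=F(x^\ast,z^\ast)$ and use Eq.~\eqref{eq:F_commute}; this gives $y^\ast=\tilde F(y^\ast,w^\ast)$. The same step with Eq.~\eqref{eq:H_commute} gives $w^\ast=\tilde H(w^\ast,y^\ast)$. So $(y^\ast,w^\ast)$ is a fixed point of the reduced map $\tilde T(y,w)=(\tilde F(y,w),\tilde H(w,y))$. No contraction is needed here, only commutation on a neighbourhood containing the full fixed point.

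\emph{Contraction.} The step that needs care is that the componentwise Lipschitz bounds of Assumption~\ref{ass:lipschitz} do not make $\tilde T$ a contraction in the plain product norm. Subtracting two iterates gives the vector inequality
\[
\begin{pmatrix}\|\Delta y'\|\\ \|\Delta w'\|\end{pmatrix}\le K_\Pi\begin{pmatrix}\|\Delta y\|\\ \|\Delta w\|\end{pmatrix},
\]
with the inequality holding entrywise. $K_\Pi$ is entrywise non-negative and $\rho(K_\Pi)<1$. For a $2\times2$ matrix I would construct a strictly positive vector $v$ with $K_\Pi v\le \rho' v$ for some $\rho'\in(\rho(K_\Pi),1)$. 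One way is to perturb $K_\Pi$ to a strictly positive matrix with spectral radius below $1$ and take its Perron vector. Another is to choose $v=(1,t)$ explicitly, which requires $a+bt\le\rho'$ and $d+ct\le\rho' t$; these can be satisfied simultaneously exactly when $\rho<1$.

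Then the weighted norm $\|(y,w)\|_v=\max\{\|y\|/v_1,\|w\|/v_2\}$ satisfies $\|\tilde T(p_1)-\tilde T(p_2)\|_v\le\rho'\|p_1-p_2\|_v$. I also need $\tilde T$ to map a closed ball $\bar B_r$ in this norm, centred at $(y^\ast,w^\ast)$ and inside the neighbourhood of the assumption, into itself. This follows because $\tilde T(y^\ast,w^\ast)=(y^\ast,w^\ast)$ and the contraction estimate give $\|\tilde T(p)-(y^\ast,w^\ast)\|_v\le\rho' r$. Banach's fixed-point theorem on $\bar B_r$ then gives uniqueness in the reduced neighbourhood, local convergence of the iteration at geometric rate $\rho'$, and, by the inheritance step, identifies the limit as $(\Pi_x x^\ast,\Pi_z z^\ast)$.

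\emph{Boundary case.} When $\rho(K_\Pi)=1$, I would show uniqueness is not guaranteed with examples rather than a theorem. Take a linear reduced map with $a=c=0$ and $b=d=1$, for instance $\tilde F(y,w)=w$ and $\tilde H(w,y)=y$. It satisfies the assumption with $\rho=1$ and has a whole diagonal of fixed points. Separately, take an isometry-like translation, which has none. For the branch statement, I would note that if $D\tilde T(y^\ast,w^\ast)$ has eigenvalue $1$, then $I-D\tilde T$ is singular, so the implicit function theorem fails. A Lyapunov--Schmidt reduction onto the kernel then yields a scalar bifurcation equation. When the compatibility (solvability) condition on the higher-order terms holds, for example when they vanish identically along the kernel direction, this equation has a continuum of solutions. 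Since the statement only claims that such a branch ``may occur'', exhibiting this mechanism suffices.
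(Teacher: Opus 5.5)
Your proposal is correct and follows the paper's own route: the projected fixed point $(\Pi_xx^\ast,\Pi_zz^\ast)$ is inherited through commutation, $\tilde T$ is a contraction in a weighted max-norm so that Banach applies, and a Lyapunov--Schmidt argument handles $\rho(K_{\Pi})=1$. You also fill in what the paper only asserts, namely the construction of the weight vector (valid even for reducible $K_{\Pi}$, at rate $\rho'>\rho(K_{\Pi})$) and the invariance of the ball.

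One minor slip is your fixed-point-free translation example: it cannot arise under the theorem's hypotheses, because your own inheritance step guarantees a reduced fixed point for every value of $\rho(K_{\Pi})$, so drop it. Separately, your diagonal example becomes fully admissible once you realise it as the reduction of a full map with a locally unique fixed point, e.g.\ by appending a hidden coordinate $u$ to $x$ with update $u\mapsto u+u^2+y^2$.
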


\begin{proof}
By commutation, $(\Pi_xx^\ast,\Pi_zz^\ast)$ satisfies the reduced fixed-point equations. Under Assumption~\ref{ass:lipschitz}, the product reduced map $\tilde{T}(y,w)=(\tilde{F}(y,w),\tilde{H}(w,y))$ is Lipschitz with comparison matrix $K_{\Pi}$. If $\rho(K_{\Pi})<1$, a weighted product norm exists in which $\tilde{T}$ is a contraction. Banach's fixed-point theorem gives local uniqueness and convergence of the iteration. At $\rho(K_{\Pi})=1$, contraction is lost. If the derivative has a unit eigenvalue, the implicit-function theorem no longer guarantees isolation of the fixed point in that direction. Standard Lyapunov-Schmidt reduction then shows that the nonlinear compatibility condition determines whether the fixed point remains isolated, disappears or unfolds into a branch. Hence uniqueness is not guaranteed at the boundary.
\end{proof}

\begin{proposition}[Convergence-rate degradation near the boundary]\label{prop:rate}
Under the assumptions of Theorem~\ref{thm:fixedpoint} with $\rho=\rho(K_{\Pi})<1$, the fixed-point iteration $(y_{k+1},w_{k+1})=\tilde{T}(y_k,w_k)$ satisfies, in the weighted product norm of the contraction argument,
\begin{equation}
    \|(y_k,w_k)-(y^\ast,w^\ast)\|\;\leq\;\rho^{\,k}\,\|(y_0,w_0)-(y^\ast,w^\ast)\|,
\end{equation}
so that the number of iterations required to reach relative tolerance $\epsilon$ is at most $\lceil\ln(1/\epsilon)/\ln(1/\rho)\rceil$, which diverges as $\rho\uparrow1$ at rate $(1-\rho)^{-1}$ up to logarithmic factors.
\end{proposition}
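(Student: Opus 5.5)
The plan is to make the ``weighted product norm'' in the proof of Theorem~\ref{thm:fixedpoint} explicit, using a Perron--Frobenius eigenvector of the nonnegative matrix $K_{\Pi}$. Then I will check that the reduced map $\tilde{T}$ contracts in that norm with factor exactly $\rho=\rho(K_{\Pi})$, and iterate.

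\textbf{Step 1: the weighted norm.} Since $K_{\Pi}\ge0$ entrywise, Perron--Frobenius gives a nonnegative eigenvector $v=(v_1,v_2)^{\top}$ with $K_{\Pi}v=\rho v$. In the generic case $b,d>0$ the matrix is irreducible, so $v\gg0$. Define
\[
N(y,w)=\max\{\|y\|/v_1,\ \|w\|/v_2\}.
\]
Write $\Delta y_k=y_k-y^\ast$, $\Delta w_k=w_k-w^\ast$ and $n=N(\Delta y_k,\Delta w_k)$. Because $(y^\ast,w^\ast)$ is a fixed point of $\tilde{T}$ (Theorem~\ref{thm:fixedpoint}), Assumption~\ref{ass:lipschitz} gives
\[
\|\Delta y_{k+1}\|\le a\|\Delta y_k\|+b\|\Delta w_k\|\le (av_1+bv_2)\,n=\rho v_1 n ,
\]
and symmetrically $\|\Delta w_{k+1}\|\le (dv_1+cv_2)\,n=\rho v_2 n$. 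Hence $N(\Delta y_{k+1},\Delta w_{k+1})\le\rho\,n$. Induction then yields the stated bound $\rho^{k}$.

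\textbf{Step 2: iterates stay in the domain.} The Lipschitz estimates are only local, so I will restrict to a closed $N$-ball around $(y^\ast,w^\ast)$ contained in the neighbourhood of Assumption~\ref{ass:lipschitz}. The inequality just proved shows this ball is mapped into itself, so the iteration never leaves the region where the estimates hold.

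\textbf{Step 3: iteration count.} Requiring $\rho^{k}\le\epsilon$ gives $k\ge\ln(1/\epsilon)/\ln(1/\rho)$, so the ceiling of this quantity suffices. For the asymptotics, $\ln(1/\rho)=-\ln(1-(1-\rho))=(1-\rho)+O((1-\rho)^2)$. Therefore the bound behaves like $\ln(1/\epsilon)/(1-\rho)$ as $\rho\uparrow1$, which is the claimed $(1-\rho)^{-1}$ divergence; the factor $\ln(1/\epsilon)$ is the logarithmic factor in the statement.

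\textbf{Main obstacle: the reducible case.} The difficulty arises when $b=0$ or $d=0$, i.e.\ one layer does not feed the other. Then $v$ may have a zero entry, or $K_{\Pi}$ may be a nontrivial Jordan block when $a=c$, and an exact weighting with factor $\rho$ may fail to exist. What I would try first is to treat the triangular case directly. If $b=0$ and $a>c$, the choice $v=(a-c,\,d)$ is positive and Step~1 goes through unchanged; the case $d=0$ with $c>a$ is symmetric. In the remaining degenerate cases I would replace $K_{\Pi}$ by $K_{\Pi}+\eta\mathbf{1}\mathbf{1}^{\top}$, which is positive and still dominates the Lipschitz constants. This gives the same bound with $\rho$ replaced by $\rho+\eta'$, where $\eta'\downarrow0$ as $\eta\downarrow0$. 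Alternatively, a Jordan-block estimate gives a factor $k\rho^{k}$, whose extra $k$ only contributes an additive $O(\ln(1/(1-\rho)))$ term to the iteration count. Either way the $(1-\rho)^{-1}$ rate up to logarithmic factors survives. I will state the exact-$\rho$ bound under irreducibility and note the $\eta$-weakening otherwise.
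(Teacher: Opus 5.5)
Your argument is correct and follows the paper's own route, made explicit. The paper calls the bound immediate from the contraction estimate of Theorem~\ref{thm:fixedpoint} together with $\ln(1/\rho)\sim1-\rho$; you build the weighted max-norm from the Perron vector of $K_{\Pi}$ (the device the paper itself uses in Theorem~\ref{thm:commutation}) and check that the iterates stay in the ball where the Lipschitz estimates hold.

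Your reducibility caveat is a genuine sharpening. In the Jordan case $a=c$, $b=0<d$, the map $(y,w)\mapsto(ay,\,dy+aw)$ meets the assumptions yet obeys no bound $\rho^{k}$ in any norm, so the stated $\rho^{k}$ genuinely needs irreducibility or your $\eta$-weakening. One correction to your side remark: the extra factor $k$ there enlarges the numerator $\ln(1/\epsilon)$ by $O\bigl(\ln\frac{1}{1-\rho}\bigr)$, so it adds $O\bigl((1-\rho)^{-1}\ln\frac{1}{1-\rho}\bigr)$ iterations, not $O\bigl(\ln\frac{1}{1-\rho}\bigr)$. This still leaves the claimed $(1-\rho)^{-1}$ rate intact up to logarithmic factors.
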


\begin{proof}
Immediate from the contraction estimate and $\ln(1/\rho)\sim 1-\rho$ as $\rho\uparrow1$.
\end{proof}

\begin{remark}[Interpretation of the boundary]
The condition $\rho(K_{\Pi})<1$ is a sufficient contraction boundary, not necessarily the bifurcation or uniqueness boundary of the underlying nonlinear system: a reduced system may retain a unique, locally stable joint fixed point beyond it. What the boundary marks exactly is the loss of the contraction certificate, and with it the guaranteed geometric convergence of the coupled iteration. When indicator removal increases the effective cross-layer gain, for example by hiding a stabilising redundancy or equity-feedback coordinate, the reduced system may cross the boundary even though the full system remains well defined. Proposition~\ref{prop:rate} converts the boundary into an observable diagnostic: the iteration count of the reduced strategic solver grows hyperbolically as the boundary is approached, which is the quantity tracked in Experiment~E3.
\end{remark}

Theorem~\ref{thm:fixedpoint} assumes that the reduced maps commute exactly with the projections. In deployment, omitted coordinates generally influence strategic decisions, so commutation holds only approximately. The following result bounds the resulting fixed-point displacement and makes the exact theorem robust.

\begin{theorem}[Fixed-point displacement under a commutation defect]\label{thm:commutation}
Write $\mathcal{T}(x,z)=(F(x,z),H(z,x))$ for the full two-layer map, $\widetilde{\mathcal{T}}(y,w)=(\widetilde{F}(y,w),\widetilde{H}(w,y))$ for the reduced map, and $\mathcal{P}=\mathrm{diag}(\Pi_{x},\Pi_{z})$. Let $u^{\ast}$ be a full fixed point, choose $\nu=(\nu_{x},\nu_{z})^{\top}\gg0$, and equip the product space with $\lVert(y,w)\rVert_{\nu}=\max\{\lVert y\rVert_{x}/\nu_{x},\lVert w\rVert_{z}/\nu_{z}\}$. Suppose $K_{\Pi}\nu\le q\nu$ for some $q<1$, so that $\widetilde{\mathcal{T}}$ is a $q$-contraction in $\lVert\cdot\rVert_{\nu}$, and define the commutation defect on a neighbourhood $\mathcal{N}\ni u^{\ast}$ by $\delta_{c}=\sup_{u\in\mathcal{N}}\lVert\widetilde{\mathcal{T}}(\mathcal{P}u)-\mathcal{P}\mathcal{T}(u)\rVert_{\nu}$. If $\widetilde{\mathcal{T}}$ maps the corresponding reduced neighbourhood into itself with unique fixed point $u_{\Pi}^{\ast}$, then
\begin{equation}
\lVert u_{\Pi}^{\ast}-\mathcal{P}u^{\ast}\rVert_{\nu}\;\le\;\frac{\delta_{c}}{1-q},
\label{eq:commutation_bound}
\end{equation}
and when $K_{\Pi}$ is irreducible the Perron eigenvector may be used for $\nu$, giving $q=\rho(K_{\Pi})$ and the bound $\delta_{c}/(1-\rho(K_{\Pi}))$.
\end{theorem}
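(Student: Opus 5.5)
The plan is the standard perturbed-fixed-point estimate. Since $u^{\ast}=\mathcal{T}(u^{\ast})$, we have $\mathcal{P}u^{\ast}=\mathcal{P}\mathcal{T}(u^{\ast})$. The reduced fixed point satisfies $u_{\Pi}^{\ast}=\widetilde{\mathcal{T}}(u_{\Pi}^{\ast})$. Inserting $\widetilde{\mathcal{T}}(\mathcal{P}u^{\ast})$ gives the splitting
\[
u_{\Pi}^{\ast}-\mathcal{P}u^{\ast}=\bigl[\widetilde{\mathcal{T}}(u_{\Pi}^{\ast})-\widetilde{\mathcal{T}}(\mathcal{P}u^{\ast})\bigr]+\bigl[\widetilde{\mathcal{T}}(\mathcal{P}u^{\ast})-\mathcal{P}\mathcal{T}(u^{\ast})\bigr].
\]
The first bracket is at most $q\lVert u_{\Pi}^{\ast}-\mathcal{P}u^{\ast}\rVert_{\nu}$ by the contraction property. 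The second is at most $\delta_{c}$ because $u^{\ast}\in\mathcal{N}$. Rearranging gives $(1-q)\lVert u_{\Pi}^{\ast}-\mathcal{P}u^{\ast}\rVert_{\nu}\le\delta_{c}$, which is Eq.~\eqref{eq:commutation_bound}. This step requires $\mathcal{P}u^{\ast}$ to lie in the reduced neighbourhood on which the Lipschitz bounds of Assumption~\ref{ass:lipschitz} hold, so I will state it explicitly, taking the reduced neighbourhood to contain $\mathcal{P}\mathcal{N}$.

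The substantive step is to check that $K_{\Pi}\nu\le q\nu$ makes $\widetilde{\mathcal{T}}$ a $q$-contraction in $\lVert\cdot\rVert_{\nu}$. Take $(y_i,w_i)$ in the reduced neighbourhood and set $s=\lVert(y_1,w_1)-(y_2,w_2)\rVert_{\nu}$. Then $\lVert y_1-y_2\rVert_x\le s\nu_x$ and $\lVert w_1-w_2\rVert_z\le s\nu_z$. Assumption~\ref{ass:lipschitz} gives
\[
\lVert\widetilde{F}(y_1,w_1)-\widetilde{F}(y_2,w_2)\rVert_x\le s(a\nu_x+b\nu_z)\le qs\nu_x ,
\]
and similarly $\lVert\widetilde{H}(w_1,y_1)-\widetilde{H}(w_2,y_2)\rVert_z\le s(d\nu_x+c\nu_z)\le qs\nu_z$. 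Dividing by the weights and taking the maximum yields the claim. This is the weighted max-norm construction that was only asserted in the proof of Theorem~\ref{thm:fixedpoint}, now made explicit. Combined with the self-mapping hypothesis, Banach's theorem supplies existence and uniqueness of $u_{\Pi}^{\ast}$, although these are already assumed in the statement.

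For the final clause, I will use that $K_{\Pi}$ is entrywise non-negative, since its entries are Lipschitz constants. If it is irreducible, Perron--Frobenius gives an eigenvector $\nu\gg0$ with $K_{\Pi}\nu=\rho(K_{\Pi})\nu$. So the hypothesis holds with $q=\rho(K_{\Pi})$, provided $\rho(K_{\Pi})<1$, which I will add as an assumption if the statement leaves it implicit. Substituting gives the bound $\delta_{c}/(1-\rho(K_{\Pi}))$.

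I expect the only delicate point to be bookkeeping rather than analysis. The defect $\delta_{c}$ is evaluated in the reduced $\nu$-norm, while $\mathcal{N}$ lives in the full space, so I must make sure the comparison is made at the single point $u^{\ast}$, with no uniformity over $\mathcal{N}$ needed. I also need $u_{\Pi}^{\ast}$ and $\mathcal{P}u^{\ast}$ to lie in a common set where the contraction estimate applies. If irreducibility fails, I will remark that any $\nu\gg0$ with $K_{\Pi}\nu\le q\nu$ and $q\in(\rho(K_{\Pi}),1)$ still exists, for example $\nu=(qI-K_{\Pi})^{-1}\mathbf{1}$. This gives the bound for every $q>\rho(K_{\Pi})$.
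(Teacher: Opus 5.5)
Your proposal is correct and follows the paper's proof. Both insert $\widetilde{\mathcal{T}}(\mathcal{P}u^{\ast})$, apply the triangle inequality with the $q$-contraction on the first term and the defect at the single point $u^{\ast}$ on the second, rearrange, and invoke Perron--Frobenius for the irreducible case. Your additions fill in what the paper only asserts in one sentence: the row-by-row check that $K_{\Pi}\nu\le q\nu$ gives the weighted max-norm contraction, the explicit requirement $\rho(K_{\Pi})<1$, and the choice $\nu=(qI-K_{\Pi})^{-1}\mathbf{1}$ when $K_{\Pi}$ is reducible.
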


\begin{proof}
Since $u_{\Pi}^{\ast}=\widetilde{\mathcal{T}}(u_{\Pi}^{\ast})$ and $u^{\ast}=\mathcal{T}(u^{\ast})$, the triangle inequality gives $\lVert u_{\Pi}^{\ast}-\mathcal{P}u^{\ast}\rVert_{\nu}\le\lVert\widetilde{\mathcal{T}}(u_{\Pi}^{\ast})-\widetilde{\mathcal{T}}(\mathcal{P}u^{\ast})\rVert_{\nu}+\lVert\widetilde{\mathcal{T}}(\mathcal{P}u^{\ast})-\mathcal{P}\mathcal{T}(u^{\ast})\rVert_{\nu}\le q\lVert u_{\Pi}^{\ast}-\mathcal{P}u^{\ast}\rVert_{\nu}+\delta_{c}$; rearranging proves Eq.~\eqref{eq:commutation_bound}. The comparison inequality $K_{\Pi}\nu\le q\nu$ bounds each component of the difference of reduced maps in the weighted maximum norm, which is the contraction property used, and Perron--Frobenius supplies the eigenvector attaining $q=\rho(K_{\Pi})$ in the irreducible case.
\end{proof}

The defect $\delta_{c}$ is estimable on calibration windows by evaluating the commutator residual at sampled full states and adding a Lipschitz fill-distance correction, exactly as $\varepsilon_{\Pi}$ is estimated for the performance layer; the reduced fixed point reported by a partial-data city is thereby certified to lie within $\delta_{c}/(1-\rho(K_{\Pi}))$ of the projected true equilibrium, a bound that degrades gracefully, and predictably, as the contraction boundary of Experiment~E3 is approached.

\subsection{Decision-rule degradation under reduction}\label{subsec:decision}

The candidate antifragility decision rule of Section~\ref{subsec:af_rule}, Eq.~\eqref{eq:af_rule}, compares a post-event performance ratio with a candidate baseline-dependent threshold. In the notation used here, let
\begin{equation}
    R(x)=\frac{\mu_{\mathrm{post}}(x)}{E_{\mathrm{base}}(x)},\qquad
    \AF(x)=\mathbf{1}\{R(x)>\tau(x)\},
    \label{eq:decision_rule_reduced}
\end{equation}
where $\tau(x)$ is a candidate threshold function and $\mathbf{1}\{\cdot\}$ is the indicator function of the event in braces. A reduction observes $y=\Pi x$ and estimates the decision statistic from $y$.

Let $\theta$ denote the local effect parameter separating the null boundary from the candidate antifragility alternative. For example, $\theta=R-\tau$ with $\theta=0$ on the decision boundary and $\theta>0$ under the candidate antifragile alternative. Under a local asymptotic normal approximation, the full and reduced estimators satisfy
\begin{align}
    \hat{\theta}_n&\sim N\left(\theta,\frac{1}{n\mathcal{I}_n}\right),\\
    \hat{\theta}_{m,n}&\sim N\left(\theta,\frac{1}{n\mathcal{I}_m}+\sigma^2_{\mathrm{res}}(m)\right),
    \label{eq:theta_reduced}
\end{align}
where $\mathcal{I}_n$ is the Fisher information in the full indicator set, $\mathcal{I}_m$ is the Fisher information retained by the reduced indicator set and $\sigma^2_{\mathrm{res}}(m)$ is the residual variance induced by omitted indicators.

\begin{definition}[Indicator-set Fisher information]
Let observations have density $p(v\mid\theta)$ and let $S_m$ be the retained indicator set. The retained Fisher information is
\begin{equation}
    \mathcal{I}_m(\theta)=\mathbb{E}_{\theta}\left[\left(\frac{\partial}{\partial\theta}\log p(v_{S_m}\mid\theta)\right)^2\right].
    \label{eq:fisher_reduced}
\end{equation}
The omitted information is $\mathcal{I}_{\mathrm{omit}}=\mathcal{I}_n-\mathcal{I}_m$ when the standard information decomposition is valid.
\end{definition}

\begin{definition}[Residual variance under retained indicator support]
For a retained indicator set $S_m$, let $T_n$ denote the locally optimal full-indicator test statistic for the effect parameter $\theta$ under the local alternative, and let $\mathcal{S}_m$ denote the linear span of the score functions associated with the retained indicators. The residual variance $\sigma^2_{\mathrm{res}}(m)$ is the variance, under the local alternative, of the component of $T_n$ orthogonal to $\mathcal{S}_m$. By construction, $\sigma^2_{\mathrm{res}}(n)=0$, and $\sigma^2_{\mathrm{res}}(m)$ is non-increasing in $m$ when indicator sets are nested.
\end{definition}

\begin{theorem}[Power loss under indicator reduction]\label{thm:power}
Consider a one-sided level-$\alpha$ test of $H_0:\theta\leq0$ against $H_1:\theta>0$ based on the reduced statistic in Eq.~\eqref{eq:theta_reduced}. Under the local normal approximation, the power of the reduced candidate antifragility decision rule is
\begin{equation}
    \pi_m(\theta)=1-\Phi\left(z_{1-\alpha}-\frac{\theta}{\sqrt{(n\mathcal{I}_m)^{-1}+\sigma^2_{\mathrm{res}}(m)}}\right),
    \label{eq:power_m}
\end{equation}
where $\Phi$ is the standard normal distribution function. The degradation relative to the oracle full-indicator test is
\begin{equation}
    \Delta\pi_m(\theta)=\Phi\left(z_{1-\alpha}-\theta\sqrt{n\mathcal{I}_n}\right)-\Phi\left(z_{1-\alpha}-\frac{\theta}{\sqrt{(n\mathcal{I}_m)^{-1}+\sigma^2_{\mathrm{res}}(m)}}\right).
    \label{eq:power_loss}
\end{equation}
Moreover, if $\mathcal{I}_{m+1}\geq\mathcal{I}_m$ and $\sigma^2_{\mathrm{res}}(m+1)\leq\sigma^2_{\mathrm{res}}(m)$, then $\pi_{m+1}(\theta)\geq\pi_m(\theta)$ for every $\theta>0$.
\end{theorem}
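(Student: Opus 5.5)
The plan is to treat the reduced statistic of Eq.~\eqref{eq:theta_reduced} as an exactly Gaussian pivot under the local normal approximation, and then to read off the power, the power gap and the monotonicity claim from properties of $\Phi$. Throughout, write the reduced standard deviation as $s_m=\sqrt{(n\mathcal{I}_m)^{-1}+\sigma^2_{\mathrm{res}}(m)}$, and the oracle standard deviation as $s_n=(n\mathcal{I}_n)^{-1/2}$, which is the case $\sigma^2_{\mathrm{res}}(n)=0$ of the same formula.

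\emph{Power formula.} The level-$\alpha$ test rejects $H_0$ when $\hat{\theta}_{m,n}/s_m>z_{1-\alpha}$. At $\theta=0$ the statistic $\hat{\theta}_{m,n}/s_m$ is standard normal, so the size is exactly $\alpha$. The power function is increasing in $\theta$, so the supremum of the rejection probability over $H_0:\theta\le0$ is also $\alpha$. Under a general $\theta$, the centred quantity $(\hat{\theta}_{m,n}-\theta)/s_m\sim N(0,1)$, and therefore
\[
P_\theta\bigl(\hat{\theta}_{m,n}/s_m>z_{1-\alpha}\bigr)=1-\Phi\bigl(z_{1-\alpha}-\theta/s_m\bigr),
\]
which is Eq.~\eqref{eq:power_m}.

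\emph{Power gap.} Applying the same computation to the oracle estimator $\hat{\theta}_n$ gives $\pi_n(\theta)=1-\Phi(z_{1-\alpha}-\theta\sqrt{n\mathcal{I}_n})$. Subtracting, the constants $1$ cancel and we obtain $\Delta\pi_m=\pi_n-\pi_m$ in the form of Eq.~\eqref{eq:power_loss}.

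\emph{Monotonicity.} Fix $\theta>0$. By hypothesis $\mathcal{I}_{m+1}\ge\mathcal{I}_m>0$, so $(n\mathcal{I}_{m+1})^{-1}\le(n\mathcal{I}_m)^{-1}$. Combined with $\sigma^2_{\mathrm{res}}(m+1)\le\sigma^2_{\mathrm{res}}(m)$, this gives $s_{m+1}\le s_m$. Hence $\theta/s_{m+1}\ge\theta/s_m$, and the argument of $\Phi$ in Eq.~\eqref{eq:power_m} decreases when $m$ is replaced by $m+1$. Since $\Phi$ is increasing, $\pi_{m+1}(\theta)\ge\pi_m(\theta)$. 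The inequality is strict when either hypothesis holds strictly, because $\Phi$ is strictly increasing.

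The same argument also shows $\Delta\pi_m\ge0$, provided $s_m\ge s_n$. I would justify $s_m\ge s_n$ from the information decomposition $\mathcal{I}_n=\mathcal{I}_m+\mathcal{I}_{\mathrm{omit}}$ with $\mathcal{I}_{\mathrm{omit}}\ge0$, together with $\sigma^2_{\mathrm{res}}\ge0$.

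\emph{Main obstacle.} The only delicate point is interpretive rather than computational: the theorem is conditional on the local normal approximation being exact. I would state explicitly that Eq.~\eqref{eq:power_m} holds as a first-order local-asymptotic identity, with $\theta$ of order $n^{-1/2}$ so that the $\theta/s_m$ term stays of order one. I would also note that the hypotheses $\mathcal{I}_{m+1}\ge\mathcal{I}_m$ and $\sigma^2_{\mathrm{res}}(m+1)\le\sigma^2_{\mathrm{res}}(m)$ are automatically satisfied for nested supports. For the information, this follows because marginalising out a channel cannot increase Fisher information. For the residual variance, it follows from the definition of $\sigma^2_{\mathrm{res}}$ as a projection onto a growing span $\mathcal{S}_m$, since projection residuals shrink as the span grows.
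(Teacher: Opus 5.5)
Your route is the paper's route: standardise the reduced statistic, read off $1-\Phi(z_{1-\alpha}-\theta/s_m)$, get the oracle power by setting $\sigma^2_{\mathrm{res}}=0$ and replacing $\mathcal{I}_m$ by $\mathcal{I}_n$, subtract, and obtain monotonicity from $s_{m+1}\le s_m$ together with the monotonicity of $\Phi$. Your additions are correct: size exactly $\alpha$ on all of $H_0$, strictness, and the observation that both hypotheses hold automatically for nested supports.

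One step fails as written. You claim that $\pi_n-\pi_m$ is ``in the form of'' Eq.~\eqref{eq:power_loss}, but carrying out the subtraction gives
\[
\pi_n(\theta)-\pi_m(\theta)=\Phi\bigl(z_{1-\alpha}-\theta/s_m\bigr)-\Phi\bigl(z_{1-\alpha}-\theta\sqrt{n\mathcal{I}_n}\bigr).
\]
The two $\Phi$ terms appear in the opposite order to the displayed formula. The expression printed in Eq.~\eqref{eq:power_loss} is therefore $\pi_m-\pi_n$, which is $\le 0$ for $\theta>0$ precisely because $s_m\ge s_n$. So your two claims, that $\Delta\pi_m=\pi_n-\pi_m$ as in Eq.~\eqref{eq:power_loss} and that $\Delta\pi_m\ge 0$, cannot both hold for the statement as printed.

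The paper's own proof makes the identical slip (``subtracting the reduced power from the oracle power gives Eq.~\eqref{eq:power_loss}''). This is a sign error in the statement, not a flaw in your strategy, but you should not certify the displayed identity without doing the algebra. The repair is either to swap the two $\Phi$ terms in Eq.~\eqref{eq:power_loss}, or to call the displayed quantity the signed change $\pi_m-\pi_n$. Under the first repair, your non-negativity argument via $\mathcal{I}_m\le\mathcal{I}_n$ and $\sigma^2_{\mathrm{res}}\ge0$ goes through unchanged.

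A smaller point concerns your local-asymptotic gloss. With $\theta=O(n^{-1/2})$, the ratio $\theta/s_m$ stays of order one only if $\sigma^2_{\mathrm{res}}(m)=O(1/n)$. For a fixed positive residual variance, $s_m\to\sigma_{\mathrm{res}}(m)>0$, and the reduced power tends to $\alpha$ along such alternatives. It is cleaner to present Eq.~\eqref{eq:power_m} as the exact consequence of the assumed normal law in Eq.~\eqref{eq:theta_reduced}, which is how the paper treats it.
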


\begin{proof}
The one-sided normal test rejects when the reduced standardised statistic exceeds $z_{1-\alpha}$. Under the alternative, the reduced statistic has mean $\theta$ and variance $(n\mathcal{I}_m)^{-1}+\sigma^2_{\mathrm{res}}(m)$, which gives Eq.~\eqref{eq:power_m}. The oracle expression follows by setting the residual variance to zero and replacing $\mathcal{I}_m$ by $\mathcal{I}_n$. Subtracting the reduced power from the oracle power gives Eq.~\eqref{eq:power_loss}. The monotonicity statement follows because the denominator of the non-centrality term is non-increasing when retained information increases and residual variance decreases; since $1-\Phi(z_{1-\alpha}-u)$ is increasing in $u$, the power cannot decrease.
\end{proof}

The theorem connects dynamical reducibility to detection. A projection can satisfy Theorem~\ref{thm:stability} and Theorem~\ref{thm:fixedpoint} whilst still being too weak for classification if it removes indicators carrying information about the decision boundary. This distinction is essential because a stable reduced attractor may be statistically underpowered. What the theorem does not yet supply is a way of computing $\mathcal{I}_m$ for a deployed city, since the state variables $x$ are themselves analytical constructs rather than directly measured quantities. The next subsection closes this gap.

\subsection{A measurement model from observable urban indicators}\label{subsec:measurement}

The twelve state variables of the framework of Section~\ref{sec:attractor} are computed constructs: each is assembled from urban quantities that municipalities and transport operators record in the field. The reduction problem in practice is therefore a two-stage problem. First, which state variables are retained, which is the projection $\Pi$ studied above. Secondly, which observable urban indicators feed the retained state variables, which determines the information actually available to the decision rule. This subsection formalises the second stage.

Let $\mathcal{K}$ denote the catalogue of observable urban key performance indicators available in principle to a deployment, and let $S\subseteq\mathcal{K}$ with $|S|=r$ denote the subset a given city collects at the required cadence. The measurement model is the linear-Gaussian observation equation
\begin{equation}
    v_S \;=\; C_S\,x+\varepsilon_S,
    \qquad \varepsilon_S\sim N(0,\Sigma_S),
    \label{eq:measurement}
\end{equation}
where $v_S\in\R^{r}$ stacks the normalised observed indicators, $C_S\in\R^{r\times n}$ encodes how each observed indicator loads on the state variables, and $\Sigma_S\succ0$ collects measurement noise, sampling error and temporal-aggregation error. The linear form is a local approximation around the working equilibrium; the loadings $C_S$ are obtained from the published computation rules of the state variables, for example the assembly of the stress index from volume-to-capacity ratios and speed deficits, or of the equity penalty from stratified accessibility minutes. Under a local alternative in which the effect parameter $\theta$ displaces the mean state along a known direction $u_{\theta}\in\R^n$, so that $\mathbb{E}[x]=x^{\ast}+\theta\,u_{\theta}$, the observed mean is $\mathbb{E}[v_S]=C_S x^{\ast}+\theta\,C_S u_{\theta}$.

\begin{proposition}[Closed-form retained information under the Gaussian measurement model]\label{prop:gaussfisher}
Under Eq.~\eqref{eq:measurement} with $\Sigma_S$ independent of $\theta$, the Fisher information about $\theta$ carried by the indicator set $S$ is
\begin{equation}
    \mathcal{I}_S(\theta)\;=\;\bigl(C_S u_{\theta}\bigr)^{\!\top}\Sigma_S^{-1}\bigl(C_S u_{\theta}\bigr),
    \label{eq:fisher_closed}
\end{equation}
which is independent of $\theta$ in the linear-Gaussian regime. Moreover, $\mathcal{I}_S$ is monotone in the indicator support: for any additional observable indicator $j\notin S$ with row $c_j^{\top}$, noise variance $\sigma_j^2$ and cross-covariance $\Sigma_{S j}$ with the existing channels,
\begin{equation}
    \mathcal{I}_{S\cup\{j\}}-\mathcal{I}_{S}
    \;=\;\frac{\bigl(c_j^{\top}u_{\theta}-\Sigma_{jS}\Sigma_S^{-1}C_S u_{\theta}\bigr)^{2}}
              {\sigma_j^{2}-\Sigma_{jS}\Sigma_S^{-1}\Sigma_{Sj}}
    \;\geq\;0,
    \label{eq:innovation_gain}
\end{equation}
namely the squared innovation of the new channel relative to its conditional variance.
\end{proposition}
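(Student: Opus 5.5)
The proof has two parts: the closed form for the Fisher information, then the innovation gain obtained by Schur-complement algebra.

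For the closed form, under Eq.~\eqref{eq:measurement} with $\mathbb{E}[x]=x^\ast+\theta u_\theta$, the observation $v_S$ is Gaussian with mean $m_S(\theta)=C_Sx^\ast+\theta C_Su_\theta$ and covariance $\Sigma_S$, which does not depend on $\theta$. The score is
\[
\partial_\theta\log p(v_S\mid\theta)=(C_Su_\theta)^{\top}\Sigma_S^{-1}\bigl(v_S-m_S(\theta)\bigr).
\]
Squaring and taking expectations, with $\mathbb{E}[(v_S-m_S)(v_S-m_S)^{\top}]=\Sigma_S$, gives Eq.~\eqref{eq:fisher_closed} at once. Since $m_S$ is affine in $\theta$, its derivative $C_Su_\theta$ does not involve $\theta$, so the information is independent of $\theta$.

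For the increment, write $a=C_Su_\theta$ and $b=c_j^{\top}u_\theta$, and partition the augmented covariance as
\[
\Sigma_{S\cup\{j\}}=\begin{pmatrix}\Sigma_S&\Sigma_{Sj}\\ \Sigma_{jS}&\sigma_j^2\end{pmatrix}.
\]
Let $s=\sigma_j^2-\Sigma_{jS}\Sigma_S^{-1}\Sigma_{Sj}$ be the Schur complement. The block-inverse formula then gives
\[
(a^{\top},b)\,\Sigma_{S\cup\{j\}}^{-1}\,(a^{\top},b)^{\top}=a^{\top}\Sigma_S^{-1}a+s^{-1}\bigl(b-\Sigma_{jS}\Sigma_S^{-1}a\bigr)^2 .
\]
This is exactly Eq.~\eqref{eq:innovation_gain}. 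Equivalently, it is the chain rule for Fisher information, $\mathcal{I}_{S\cup\{j\}}=\mathcal{I}_S+\mathcal{I}_{j\mid S}$: conditionally on $v_S$, the new channel $v_j$ is Gaussian with conditional mean $b\theta$ plus a known multiple of $(v_S-m_S)$, and with conditional variance $s$. Non-negativity follows because the numerator is a square and $s>0$.

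The main obstacle is the positivity of $s$, which I would address first. It is not implied by $\Sigma_S\succ0$ alone. It follows from positive definiteness of the augmented covariance $\Sigma_{S\cup\{j\}}$, i.e.\ the new channel's noise is not an exact linear combination of the existing noises. This is the standing assumption $\Sigma\succ0$ of Eq.~\eqref{eq:measurement} applied to the enlarged support $S\cup\{j\}$, and I would state it explicitly in the proof.

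Monotonicity for general nested supports $S\subseteq S'$ then follows by adding channels one at a time, since each step contributes a non-negative increment.
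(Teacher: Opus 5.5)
Your proposal is correct and follows the paper's proof essentially step for step. The score $(C_Su_\theta)^{\top}\Sigma_S^{-1}(v_S-m_S(\theta))$ and its variance give Eq.~\eqref{eq:fisher_closed}, the block-inverse identity for the bordered covariance with Schur complement $s$ gives Eq.~\eqref{eq:innovation_gain}, and your explicit remark that $s>0$ needs $\Sigma_{S\cup\{j\}}\succ0$, not merely $\Sigma_S\succ0$, pins down what the paper compresses into ``by positive definiteness''.
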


\begin{proof}
For a Gaussian family with mean $\mu(\theta)=C_S x^{\ast}+\theta C_S u_{\theta}$ and constant covariance, the score is $\partial_{\theta}\log p=(C_S u_{\theta})^{\top}\Sigma_S^{-1}(v_S-\mu(\theta))$, whose variance is Eq.~\eqref{eq:fisher_closed}. For the increment, apply the block-inverse identity for the bordered covariance matrix of $(v_S,v_j)$: the Schur complement $\sigma_j^{2}-\Sigma_{jS}\Sigma_S^{-1}\Sigma_{Sj}>0$ by positive definiteness, and substituting the blockwise inverse into Eq.~\eqref{eq:fisher_closed} for $S\cup\{j\}$ and simplifying yields Eq.~\eqref{eq:innovation_gain}, a ratio of a square to a positive quantity, hence non-negative.
\end{proof}

\begin{corollary}[Minimum admissible indicator support]\label{cor:minsupport}
Fix the test level $\alpha$, a target power $1-\beta$ at a decision-relevant effect size $\theta^{\ast}$, and a sample length $n$. An indicator support $S$ is admissible for the candidate antifragility classification if
\begin{equation}
    \mathcal{I}_S\;\geq\;\frac{1}{n}\left(\frac{z_{1-\alpha}+z_{1-\beta}}{\theta^{\ast}}\right)^{2}
    \Bigl(1-n\,\sigma^{2}_{\mathrm{res}}\,\bigl(\tfrac{\theta^{\ast}}{z_{1-\alpha}+z_{1-\beta}}\bigr)^{-2}\Bigr)^{-1},
    \label{eq:admissibility}
\end{equation}
whenever the bracketed term is positive; if it is not, no indicator support of the given residual variance is admissible at that sample length. The minimum admissible support is the smallest $S$, in collection cost or cardinality, satisfying Eq.~\eqref{eq:admissibility}, and by Proposition~\ref{prop:gaussfisher} it can be approached by greedy addition of the channel with the largest innovation gain in Eq.~\eqref{eq:innovation_gain}. The following statement settles when the greedy construction is not merely monotone but exactly optimal.

\begin{proposition}[Exact optimality of greedy selection under independent channels]\label{prop:greedy_modular}
Write $a=C_{\mathcal{K}}u_{\theta}$ and $F(S)=\mathcal{I}_{S}=a_{S}^{\top}\Sigma_{S}^{-1}a_{S}$ with $F(\varnothing)=0$. If $\Sigma$ is diagonal, then $F(S)=\sum_{j\in S}a_{j}^{2}/\sigma_{j}^{2}$ is modular, so greedy selection in decreasing order of $a_{j}^{2}/\sigma_{j}^{2}$ returns, for every $k$, an exact maximiser of $F$ over supports of cardinality $k$, and the first greedy prefix satisfying Eq.~\eqref{eq:admissibility} is an exact minimum-cardinality admissible support. The same holds blockwise when $\Sigma$ is block diagonal and covariance blocks are selected as indivisible units. For general correlated $\Sigma$, $F$ need not be submodular, and greedy selection carries no universal approximation guarantee.
\end{proposition}

\begin{proof}
Under diagonal $\Sigma$, Eq.~\eqref{eq:fisher_closed} separates into the stated sum, so $F$ is additive over channels; maximising an additive set function over a cardinality constraint is achieved exactly by sorting, and the minimum-cardinality threshold problem by the same sorted order. The block case is identical over the block ground set. Non-submodularity under correlation follows from the innovation-gain form of Eq.~\eqref{eq:innovation_gain}: a channel individually uninformative ($a_{j}=0$) can have strictly positive conditional gain once a correlated channel is present, since the innovation numerator becomes non-zero, which violates the diminishing-returns inequality.
\end{proof}

The measurement model of Section~\ref{sec:numerical} has independent channels, so the greedy supports reported by Experiment~E5 are exactly optimal, not merely feasible; for deployments with correlated channels the greedy output should be treated as a certified admissible support rather than a certified minimum one.
\end{corollary}

\begin{proof}
Setting $\pi_m(\theta^{\ast})\geq1-\beta$ in Eq.~\eqref{eq:power_m} and solving the resulting inequality $\theta^{\ast}/\sqrt{(n\mathcal{I}_S)^{-1}+\sigma^{2}_{\mathrm{res}}}\geq z_{1-\alpha}+z_{1-\beta}$ for $\mathcal{I}_S$ gives Eq.~\eqref{eq:admissibility}; the side condition expresses that the residual variance alone must not already exhaust the permissible total variance. The greedy statement follows from the non-negativity and computability of the innovation gain.
\end{proof}

Two remarks situate the measurement model in the deployment context. First, the model converts the abstract quantities of Theorem~\ref{thm:power} into objects computable from a city's data inventory: $C_S$ from the published assembly rules of the state variables, $\Sigma_S$ from instrument and sampling characteristics, and $u_{\theta}$ from the direction in indicator space along which a validated post-event improvement manifests, typically dominated by the redundancy, adaptation and throughput coordinates. Secondly, the model makes precise the difference between collecting many indicators and collecting informative ones: by Eq.~\eqref{eq:innovation_gain}, a channel highly correlated with channels already collected contributes little, however accurate it is in isolation, whereas a moderately noisy channel loading on an otherwise unobserved coordinate, equity stratification being the standard example, can dominate the information budget.

\subsection{Tier-based reduction families grounded in the indicator register}\label{subsec:tiers}

The reduction families used in the simulation protocol are defined by three nested indicator supports over the full state vector of Eq.~\eqref{eq:twelve_vector}.

The minimal family retains
\begin{equation}
    S_1=\{M,S,\Deff\},\qquad \Pi_1x=(M,S,\Deff)^{\top}.
    \label{eq:tier1}
\end{equation}
The standard family retains
\begin{equation}
    S_2=\{M,R,C,S,Q,I\},\qquad \Pi_2x=(M,R,C,S,Q,I)^{\top}.
    \label{eq:tier2}
\end{equation}
The full family retains all twelve indicators,
\begin{equation}
    S_3=\{M,R,A,C,\Deff,S,Q,P,B,E_{\mathrm{energy}},E_{\mathrm{ICT}},I\},\qquad \Pi_3=I_{12}.
    \label{eq:tier3}
\end{equation}

The tiers are not arbitrary. They reflect the structure of a consolidated indicator register assembled for the deployment context within which the framework was developed, in which a curated catalogue of urban indicators, organised by domain, supplies the observable quantities from which the twelve state variables are computed. Table~\ref{tab:kpi_mapping} records, for each state variable, the principal observable indicators that feed it, together with a qualitative accessibility rating reflecting whether the underlying data are routinely held by municipalities and operators (high), require dedicated processing of existing data such as origin-destination matrices or geographic information systems (medium), or require surveys, stakeholder elicitation or event-specific instrumentation (low). The ratings are candidate assessments for the pilot context and will differ across deployments; their role here is to make the tier construction reproducible and to supply the loading structure of $C_S$ in the simulation protocol.

\begin{table}[t]
\centering
\small
\caption{Mapping from the twelve equilibrium state variables to principal observable urban indicators, with qualitative data-accessibility ratings. Catalogue families refer to standard urban-indicator domains: network performance, travel times and reliability, modal split and demand, safety, accessibility and affordability, land use and infrastructure, and energy. Ratings are candidate assessments for the pilot context.}
\label{tab:kpi_mapping}
\begin{tabularx}{\textwidth}{p{1.1cm}p{2.6cm}Xp{1.9cm}}
\toprule
State & Construct & Principal observable indicators feeding the construct & Accessibility \\
\midrule
$M$ & Mobility throughput & Public transport ridership; trip rate per person; total vehicle-kilometres; transit passenger-kilometres; modal split shares & High \\
$R$ & Network redundancy & Network reserve capacity; road and public transport network density; betweenness centrality of critical nodes; share of dormant versus active links from network models & Medium \\
$A$ & Adaptation velocity & Time-differenced modal-split diversity from repeated household or panel surveys; uptake rates of alternative services after events & Low \\
$C$ & Network entropy & Shannon entropy of origin-destination flow distributions from observed or estimated origin-destination matrices, normalised by current topology & Medium \\
$\Deff$ & Disturbance index & Event catalogues and early-warning trigger logs; operator incident feeds; official alerts weighted by domain, recency and source reliability & Event-specific \\
$S$ & Stress & Travel time reliability; congestion and delay measures; link volume-to-capacity ratios and speed deficits from floating-car or loop-detector data & High \\
$Q$ & Equity penalty & Accessibility of population to public transport and essential services, stratified by income or neighbourhood group; household transport expenditure shares & Medium to low \\
$P$ & Demand management & Published fare tables and time-varying tariffs; parking fees; fuel price; coverage of priced instruments weighted by ridership & High \\
$B$ & Behavioural compliance & Adherence to guidance measured through nudging-campaign uptake; mobility satisfaction and stated-compliance surveys & Low \\
$E_{\mathrm{energy}}$ & Energy support & Transport energy consumption; renewable share in transport energy; charging-infrastructure uptime and spare capacity during events & Medium \\
$E_{\mathrm{ICT}}$ & Communication support & Uptime of telecommunication and control systems critical to transport; data-feed latency and completeness records & Medium, event-specific \\
$I$ & Cross-layer synergy & Inter-modal transfer times; geographic proximity of transfer hubs across modal layers from geographic information systems & High (static) \\
\bottomrule
\end{tabularx}
\end{table}

Three observations follow from the mapping. First, the minimal family $S_1$ is exactly the support a city can assemble from routinely held operator data plus an incident log: throughput, stress and disturbance are all rated high or event-specific in Table~\ref{tab:kpi_mapping}. This is why $S_1$ is the realistic floor rather than an adversarial straw man. Secondly, the standard family $S_2$ adds the constructs requiring dedicated processing of data that most cities already possess in some form: redundancy and entropy from network models and origin-destination matrices, equity from stratified accessibility, synergy from static geographic data. Thirdly, the constructs that separate $S_2$ from the full support, namely adaptation velocity, behavioural compliance, demand-management coverage and the two infrastructure-support indicators, are precisely those requiring surveys, tariff analysis or cross-sector data agreements, and they are also the constructs through which the learning component of the framework manifests. The information cost of partial coverage therefore falls disproportionately on the antifragility classification rather than on stability screening, a statement made quantitative by Proposition~\ref{prop:gaussfisher} and tested in Experiments~E4 and~E5.

Applying Theorem~\ref{thm:stability}, the minimal family is faithful only if omitted redundancy, adaptation, equity and cross-layer coordinates do not feed back into the retained dynamics except through retained variables; by Theorem~\ref{thm:approx} it is practically faithful when the residual feedback produces a defect $\varepsilon_{\Pi}$ small relative to the reduced contraction rate. Applying Theorem~\ref{thm:fixedpoint}, the standard family preserves the joint fixed point if the omitted strategic quantities do not alter the reduced modal and capacity maps except through $(M,R,C,S,Q,I)$ and if the reduced coupling matrix remains contractive. Applying Theorem~\ref{thm:power} with Proposition~\ref{prop:gaussfisher}, the full family gives the oracle reference, while the minimal and standard families lose detection power in proportion to omitted Fisher information and residual variance. These statements do not rank cities. They rank reductions relative to a specified attractor, a specified decision rule and a specified measurement model.

\section{Simulation protocol and numerical demonstration on pilot-city configurations}\label{sec:numerical}

The numerical work is designed to illustrate the theorems rather than to validate the framework empirically. Three city configurations are used as stylised cases with different candidate indicator supports: Bratislava is represented by the full twelve-indicator family, Larissa by the standard six-indicator family and Thessaloniki by the minimal three-indicator family. The assignment is illustrative and reflects heterogeneous data coverage described in the source material; it is not an empirical claim about actual data readiness in any pilot city.

\subsection{Common simulation testbed}\label{subsec:testbed}

The testbed is a lightweight system-dynamics representation of the performance-layer attractor coupled to a stylised disturbance process. The state evolves according to
\begin{equation}
    x_{t+1}=x_t+\Delta t\{A_0(x_t-x^\ast)+Bu_t+\eta_t\},
    \label{eq:sim_state}
\end{equation}
where $A_0\in\R^{12\times12}$ is chosen to be Hurwitz in the full system, $u_t$ contains candidate intervention and demand terms, and $\eta_t$ is a disturbance term. The agent layer is represented only as stochastic forcing. Government, media and vulnerable-group responses enter as parameterised perturbations to $\Deff$ and to the demand component of $u_t$:
\begin{equation}
    \eta_t=\eta_t^{0}+\Gamma_g\xi_{g,t}+\Gamma_m\xi_{m,t}+\Gamma_v\xi_{v,t},
    \label{eq:agent_forcing}
\end{equation}
where $\xi_{g,t}$, $\xi_{m,t}$ and $\xi_{v,t}$ are mean-zero processes with prescribed variances and autocorrelations. No reasoning model or generative agent mechanism is part of the contribution.

The reduced systems are obtained by applying $\Pi_1$, $\Pi_2$ and $\Pi_3$ from Eqs.~\eqref{eq:tier1}--\eqref{eq:tier3}. Observations are generated from the measurement model of Eq.~\eqref{eq:measurement}, with the rows of $C_S$ instantiated from the loading structure of Table~\ref{tab:kpi_mapping} and the noise covariance $\Sigma_S$ set from the accessibility ratings: indicators rated high receive low noise variance, medium receive intermediate variance, and low or event-specific receive high variance with temporal gaps. The candidate baseline parameterisation is reported in Table~\ref{tab:sim_params}; all values are candidate constructs for illustration and are recorded so that every figure in this section can be reproduced from the specification alone.

\begin{table}[t]
\centering
\small
\caption{Candidate baseline parameterisation of the simulation testbed. All values are illustrative constructs, not empirical estimates.}
\label{tab:sim_params}
\begin{tabularx}{\textwidth}{p{4.2cm}p{3.2cm}X}
\toprule
Quantity & Candidate value & Rationale \\
\midrule
Step size $\Delta t$ & 1 day & Matches the daily cadence of the full relaxation law \\
Spectral abscissa of $A_0$ & $-0.05$ per day & Slowest full mode with a twenty-day time constant, placed in the equity-redundancy subspace to exercise Proposition~\ref{prop:rte} \\
Fast block of $A_0$ & $-0.3$ to $-0.5$ per day & Throughput, stress and disturbance relaxation \\
Cross-block coupling in $A_0$ & swept in $[0,0.2]$ & Controls the projectability defect $\varepsilon_{\Pi}$ of $\Pi_1$ and $\Pi_2$ \\
Disturbance autocorrelation & 0.6 at one day & First-order autoregressive forcing \\
Measurement noise (high tier) & $\sigma=0.02$ & Operator and sensor data \\
Measurement noise (medium tier) & $\sigma=0.05$ & Processed origin-destination and geographic data \\
Measurement noise (low tier) & $\sigma=0.10$, weekly sampling & Survey-based constructs \\
Monte Carlo replications & 10\,000 & Standard errors below one percentage point on power estimates \\
Test level $\alpha$ & 0.05 one-sided & Matches the full decision rule of Eq.~\eqref{eq:af_rule} \\
Target power $1-\beta$ & 0.80 at $\theta^{\ast}=0.10$ & Decision-relevant effect: a ten per cent post-to-baseline improvement margin \\
\bottomrule
\end{tabularx}
\end{table}

The implementation is intended in Python with standard numerical libraries; no special-purpose software is required. Where a deployment wishes to replace the linear testbed of Eq.~\eqref{eq:sim_state} with mesoscopic traffic simulation, the protocol is unchanged: the simulator supplies the trajectories of the observable indicators, and the analysis layer is identical. Five experiments are specified. Each is stated with its objective, its outputs and the figure or table through which it is reported; all outputs below are computed from the specification of this section under the documented implementation choices recorded with the deposited code.

\subsection{Experiment E1: projectability defect and practical faithfulness}\label{subsec:E1}

\begin{simspec}{E1, practical faithfulness bound of Theorem~\ref{thm:approx}}
\textbf{Objective.} Verify the error bound of Eq.~\eqref{eq:practical_bound} and demonstrate the graded classification of Corollary~\ref{cor:graded}.\\[2pt]
\textbf{Procedure.} Sweep the cross-block coupling of $A_0$ over ten values in $[0,0.2]$. For each value: (i) compute the exact defect $\varepsilon_{\Pi}$ of $\Pi_1$ and $\Pi_2$ on a box neighbourhood of $x^{\ast}$, which for the linear testbed is the operator norm of the blocked-out coupling acting on the box; (ii) compute the reduced contraction rate $\alpha$ as the negated logarithmic norm of the reduced Jacobian in the Euclidean norm; (iii) simulate 200 trajectory pairs from random initial displacements and record the supremum over time of $\|\Pi x(t)-y(t)\|$ after the transient.\\[2pt]
\textbf{Outputs.} Figure~\ref{fig:E1}: observed asymptotic projection error against the analytic band $\varepsilon_{\Pi}/\alpha$, both axes in reduced-indicator units, one curve per tier. Annotate the tolerance $\delta_{\mathrm{tol}}$ implied by the decision margin so that the faithful, practically faithful and unfaithful regimes are visible as regions of the plot.\\[2pt]
\textbf{Acceptance.} All observed errors lie below the analytic band; the crossing of $\delta_{\mathrm{tol}}$ occurs at a coupling strength that differs between $\Pi_1$ and $\Pi_2$, with $\Pi_1$ crossing first.
\end{simspec}

\begin{figure}[t]
\centering
\begin{tikzpicture}
\begin{axis}[
    width=0.82\textwidth, height=0.46\textwidth,
    xmin=0, xmax=0.2, ymin=0, ymax=0.42,
    xlabel={Cross-block coupling strength $c$},
    ylabel={Projection error (reduced-indicator units)},
    legend pos=north west, legend cell align=left, grid=both,
    legend style={font=\small}
]
\addplot+[mark=*, mark size=1.4pt, thick, blue] coordinates {
(0.0000,0.0000)
(0.0222,0.0037)
(0.0444,0.0077)
(0.0667,0.0140)
(0.0889,0.0240)
(0.1111,0.0350)
(0.1333,0.0551)
(0.1556,0.0952)
(0.1778,0.1633)
(0.2000,0.2718)};
\addlegendentry{Observed, minimal $\Pi_1$}
\addplot+[mark=square*, mark size=1.4pt, thick, red] coordinates {
(0.0000,0.0000)
(0.0222,0.0003)
(0.0444,0.0014)
(0.0667,0.0035)
(0.0889,0.0074)
(0.1111,0.0126)
(0.1333,0.0232)
(0.1556,0.0413)
(0.1778,0.0755)
(0.2000,0.1272)};
\addlegendentry{Observed, standard $\Pi_2$}
\addplot+[mark=none, dashed, thick, blue] coordinates {
(0.0000,0.0000)
(0.0222,0.0820)
(0.0444,0.1640)
(0.0667,0.2460)
(0.0889,0.3279)
(0.1111,0.4099)
(0.1333,0.4919)
(0.1556,0.5739)
(0.1778,0.6559)
(0.2000,0.7379)};
\addlegendentry{Bound $\varepsilon_{\Pi}/\alpha$, minimal}
\addplot+[mark=none, dashed, thick, red] coordinates {
(0.0000,0.0000)
(0.0222,0.3220)
(0.0444,0.4200)
(0.0667,0.4200)
(0.0889,0.4200)
(0.1111,0.4200)};
\addlegendentry{Bound $\varepsilon_{\Pi}/\alpha$, standard}
\addplot+[mark=none, densely dotted, thick, black] coordinates {(0,0.10) (0.2,0.10)};
\addlegendentry{Tolerance $\delta_{\mathrm{tol}}=0.10$}
\end{axis}
\end{tikzpicture}
\caption{Practical-faithfulness verification for Experiment~E1. Observed asymptotic projection errors (solid, markers) remain below the analytic bound $\varepsilon_{\Pi}/\alpha$ of Theorem~\ref{thm:approx} (dashed) wherever the bound's contraction hypothesis holds. For the standard tier the retained slow equity-redundancy subspace makes the Euclidean logarithmic-norm certificate fragile: the bound rises steeply and its hypothesis fails beyond $c\approx0.12$ (the dashed red curve is clipped at the axis and not drawn past the failure point), whereas the observed error remains small because most coupling sources are retained coordinates; Theorem~\ref{thm:transient} covers exactly this regime through a certified amplification constant. The observed error of the minimal tier crosses the tolerance $\delta_{\mathrm{tol}}=0.10$ near $c\approx0.16$, before the standard tier ($c\approx0.19$), as required by the acceptance criterion.}
\label{fig:E1}
\end{figure}

\subsection{Experiment E2: spectral structure and restoration-time bias}\label{subsec:E2}

\begin{simspec}{E2, one-sided restoration-time bias of Proposition~\ref{prop:rte}}
\textbf{Objective.} Quantify the optimism bias of reduced restoration-time estimates.\\[2pt]
\textbf{Procedure.} Use the exactly projectable configuration (zero cross-block coupling into the retained sets) so that Proposition~\ref{prop:rte} applies. Apply a stylised shock displacing the state along a direction with a controlled component in $\kerop(\Pi_1)$ and $\kerop(\Pi_2)$, sweeping the kernel component fraction over $\{0,0.25,0.5,0.75,1\}$. For each case and each tier, record the time $T_{0.1}$ at which the full state and the projected state respectively re-enter and remain within ten per cent of the equilibrium displacement.\\[2pt]
\textbf{Outputs.} Table~\ref{tab:E2}: full versus tier-reported restoration times by kernel fraction. A companion figure (optional) showing the eigenvalue spectra of $A_0$, $B_1$ and $B_2$ on the complex plane, displaying the spectral inclusion of Eq.~\eqref{eq:spectral_split}.\\[2pt]
\textbf{Acceptance.} Tier-reported restoration times never exceed the full restoration time; the gap grows monotonically with the kernel fraction; the standard tier gap is smaller than the minimal tier gap whenever the slow mode loads on coordinates retained by $\Pi_2$ but not $\Pi_1$.
\end{simspec}

\begin{table}[t]
\centering
\small
\caption{Experiment E2: restoration time $T_{0.1}$ (days) to within ten per cent of equilibrium, full system versus tier-reported values, by the fraction of the initial displacement lying in the hidden subspace of the minimal tier. The standard tier retains the slow equity-redundancy pair and therefore reports the full restoration time exactly; the minimal tier reports the fast-mode time regardless of the hidden slow component. $^{\dagger}$At kernel fraction one the displacement is entirely invisible to the minimal tier, which reports immediate restoration.}
\label{tab:E2}
\begin{tabular}{lccc}
\toprule
Kernel fraction of shock & Full $T_{0.1}$ (days) & Standard tier $T_{0.1}$ (days) & Minimal tier $T_{0.1}$ (days) \\
\midrule
0.00 & 4 & 4 & 4 \\
0.25 & 17 & 17 & 4 \\
0.50 & 31 & 31 & 4 \\
0.75 & 39 & 39 & 4 \\
1.00 & 44 & 44 & 0$^{\dagger}$ \\
\bottomrule
\end{tabular}
\end{table}

\subsection{Experiment E3: coupled fixed point near the contraction boundary}\label{subsec:E3}

\begin{simspec}{E3, contraction boundary of Theorem~\ref{thm:fixedpoint} and Proposition~\ref{prop:rate}}
\textbf{Objective.} Display the loss of uniqueness and the hyperbolic growth of iteration counts as $\rho(K_{\Pi})\uparrow1$.\\[2pt]
\textbf{Procedure.} Instantiate the reduced two-layer maps $\tilde F$ and $\tilde H$ as the linearised performance and strategic updates of the testbed, with the cross-layer gains $(b,d)$ swept so that $\rho(K_{\Pi})$ traverses $[0.4,1.05]$ in steps of $0.05$, refined to steps of $0.01$ on $[0.9,1.05]$. For each value run the fixed-point iteration from 100 random initialisations to relative tolerance $10^{-8}$, recording the iteration count, the dispersion of limit points, and the sensitivity of the limit to a perturbation of magnitude $10^{-3}$ in an omitted coordinate.\\[2pt]
\textbf{Outputs.} Figure~\ref{fig:E3}, two panels. Left: median and ninetieth-percentile iteration counts against $\rho(K_{\Pi})$, with the analytic envelope $\ln(1/\epsilon)/\ln(1/\rho)$ of Proposition~\ref{prop:rate} overlaid. Right: dispersion of limit points against $\rho(K_{\Pi})$, displaying the onset of non-uniqueness beyond the boundary.\\[2pt]
\textbf{Acceptance.} Iteration counts track the analytic envelope; dispersion is at numerical tolerance below the boundary and departs from it above.
\end{simspec}

\begin{figure}[t]
\centering
\begin{tikzpicture}
\begin{axis}[
    name=left, width=0.46\textwidth, height=0.42\textwidth,
    xmin=0.4, xmax=1.05, ymode=log, ymin=1, ymax=3000,
    xlabel={$\rho(K_{\Pi})$}, ylabel={Iteration count},
    legend pos=north west, grid=both, legend style={font=\scriptsize}
]
\addplot+[mark=*, mark size=1.1pt, thick, blue] coordinates {
(0.40,20)
(0.45,22)
(0.50,24)
(0.55,28)
(0.60,32)
(0.65,38)
(0.70,46)
(0.75,56)
(0.80,70)
(0.85,93)
(0.90,139)
(0.91,154)
(0.92,173.5)
(0.93,196)
(0.94,226.5)
(0.95,269.5)
(0.96,331.5)
(0.97,443.5)
(0.98,642.5)
(0.99,1209.5)
(1.00,26)
(1.01,117)
(1.02,68.5)
(1.03,51.5)
(1.04,30.5)
(1.05,31)};
\addlegendentry{Median}
\addplot+[mark=triangle*, mark size=1.2pt, thick, red] coordinates {
(0.40,20)
(0.45,23)
(0.50,26)
(0.55,30)
(0.60,34)
(0.65,40)
(0.70,48)
(0.75,58)
(0.80,74)
(0.85,99)
(0.90,147.1)
(0.91,163)
(0.92,182.1)
(0.93,210)
(0.94,242.1)
(0.95,287.1)
(0.96,354.1)
(0.97,470.1)
(0.98,680.3)
(0.99,1297)
(1.00,28)
(1.01,262.7)
(1.02,157.3)
(1.03,110.1)
(1.04,74.5)
(1.05,64.2)};
\addlegendentry{90th percentile}
\addplot+[mark=none, dashed, thick, black] coordinates {
(0.40,20.104)
(0.45,23.069)
(0.50,26.575)
(0.55,30.812)
(0.60,36.061)
(0.65,42.761)
(0.70,51.646)
(0.75,64.031)
(0.80,82.551)
(0.85,113.34)
(0.90,174.84)
(0.91,195.32)
(0.92,220.92)
(0.93,253.83)
(0.94,297.71)
(0.95,359.12)
(0.96,451.24)
(0.97,604.77)
(0.98,911.79)
(0.99,1832.8)};
\addlegendentry{Envelope $\ln(1/\epsilon)/\ln(1/\rho)$}
\end{axis}
\begin{axis}[
    at={(left.right of south east)}, xshift=1.6cm, anchor=south west,
    width=0.46\textwidth, height=0.42\textwidth,
    xmin=0.4, xmax=1.05, ymode=log, ymin=1e-9, ymax=2,
    xlabel={$\rho(K_{\Pi})$}, ylabel={Limit-point dispersion},
    grid=both
]
\addplot+[mark=*, mark size=1.1pt, thick, violet] coordinates {
(0.40,4.4e-09)
(0.45,5.82e-09)
(0.50,7.32e-09)
(0.55,9.38e-09)
(0.60,1.15e-08)
(0.65,1.52e-08)
(0.70,1.98e-08)
(0.75,2.62e-08)
(0.80,3.54e-08)
(0.85,5.27e-08)
(0.90,8.51e-08)
(0.91,9.6e-08)
(0.92,1.09e-07)
(0.93,1.27e-07)
(0.94,1.52e-07)
(0.95,1.85e-07)
(0.96,2.36e-07)
(0.97,3.12e-07)
(0.98,4.83e-07)
(0.99,9.8e-07)
(1.00,0.279)
(1.01,0.7)
(1.02,0.702)
(1.03,0.69)
(1.04,0.707)
(1.05,0.707)};
\addplot+[mark=none, densely dotted, black] coordinates {(1,1e-9) (1,2)};
\end{axis}
\end{tikzpicture}
\caption{Behaviour of the reduced coupled fixed point near the contraction boundary in Experiment~E3. Left: median and ninetieth-percentile iteration counts to relative tolerance $10^{-8}$ grow hyperbolically as $\rho(K_{\Pi})\uparrow1$ and remain below the analytic envelope of Proposition~\ref{prop:rate}. Right: the dispersion of limit points across 100 random initialisations sits at numerical tolerance below the boundary and rises to order one above it, displaying the onset of non-uniqueness predicted by Theorem~\ref{thm:fixedpoint}.}
\label{fig:E3}
\end{figure}

\subsection{Experiment E4: decision power, analytic and Monte Carlo}\label{subsec:E4}

The analytic power curves of Theorem~\ref{thm:power} under the computed information values are shown in Figure~\ref{fig:decision_power}; Experiment~E4 verifies them by Monte Carlo under the full measurement model, including the temporal gaps of the low-accessibility channels, which the analytic local approximation does not capture. The empirical size at $\theta=0$ is 0.052 for the minimal tier, 0.048 for the standard tier and 0.052 for the full tier, each within Monte Carlo error of the nominal 0.05. At the decision-relevant effect $\theta^{\ast}=0.10$ the empirical power is 0.68 (minimal), 0.88 (standard) and 0.89 (full): the minimal tier falls short of the 0.80 target, so the information cost of partial coverage falls on the antifragility classification exactly as anticipated in Section~\ref{subsec:tiers}.

\begin{simspec}{E4, Monte Carlo verification of Theorem~\ref{thm:power} under the measurement model}
\textbf{Objective.} Verify Eq.~\eqref{eq:power_m} and quantify the additional power loss caused by irregular sampling of low-accessibility indicators.\\[2pt]
\textbf{Procedure.} For each tier and each effect size $\theta\in\{0,0.05,\dots,0.5\}$: simulate 10\,000 post-event windows of 30 daily observations from Eqs.~\eqref{eq:sim_state} and~\eqref{eq:measurement}, with the improvement encoded as a mean displacement $\theta u_{\theta}$ along a candidate direction loading principally on $M$, $R$ and $A$; estimate $\hat\theta$ by generalised least squares using only the channels of the tier at their stated cadences; apply the one-sided level-0.05 test; record the rejection rate.\\[2pt]
\textbf{Outputs.} Figure~\ref{fig:E4}: empirical power points with binomial confidence intervals overlaid on the analytic curves of Figure~\ref{fig:decision_power}, one panel per tier. Report in the text the empirical size at $\theta=0$ for each tier.\\[2pt]
\textbf{Acceptance.} Empirical size within Monte Carlo error of 0.05; empirical power within the confidence band of the analytic curve for the high and medium channels, with a quantified downward departure for tiers relying on weekly-sampled channels.
\end{simspec}

\begin{figure}[t]
\centering
\begin{tikzpicture}
\begin{axis}[
    width=0.82\textwidth,
    height=0.43\textwidth,
    xmin=0, xmax=0.5,
    ymin=0, ymax=1,
    xlabel={Effect size $\theta$},
    ylabel={Decision power $\pi_m(\theta)$},
    legend pos=south east,
    grid=both
]
\addplot+[mark=none, thick] coordinates {
(0.000,0.050000)
(0.010,0.075740)
(0.020,0.110515)
(0.030,0.155468)
(0.040,0.211065)
(0.050,0.276858)
(0.060,0.351353)
(0.070,0.432057)
(0.080,0.515711)
(0.090,0.598676)
(0.100,0.677404)
(0.110,0.748884)
(0.120,0.810980)
(0.130,0.862593)
(0.140,0.903640)
(0.150,0.934873)
(0.160,0.957613)
(0.170,0.973454)
(0.180,0.984011)
(0.190,0.990744)
(0.200,0.994852)
(0.210,0.997251)
(0.220,0.998591)
(0.230,0.999307)
(0.240,0.999673)
(0.250,0.999852)
(0.260,0.999936)
(0.270,0.999973)
(0.280,0.999989)
(0.290,0.999996)
(0.300,0.999999)
(0.310,0.999999)
(0.320,1.000000)
(0.330,1.000000)
(0.340,1.000000)
(0.350,1.000000)
(0.360,1.000000)
(0.370,1.000000)
(0.380,1.000000)
(0.390,1.000000)
(0.400,1.000000)
(0.410,1.000000)
(0.420,1.000000)
(0.430,1.000000)
(0.440,1.000000)
(0.450,1.000000)
(0.460,1.000000)
(0.470,1.000000)
(0.480,1.000000)
(0.490,1.000000)
(0.500,1.000000)};
\addlegendentry{Minimal}
\addplot+[mark=none, thick] coordinates {
(0.000,0.050000)
(0.010,0.086063)
(0.020,0.138750)
(0.030,0.209976)
(0.040,0.299078)
(0.050,0.402220)
(0.060,0.512702)
(0.070,0.622212)
(0.080,0.722654)
(0.090,0.807903)
(0.100,0.874855)
(0.110,0.923511)
(0.120,0.956232)
(0.130,0.976594)
(0.140,0.988318)
(0.150,0.994566)
(0.160,0.997646)
(0.170,0.999051)
(0.180,0.999644)
(0.190,0.999876)
(0.200,0.999960)
(0.210,0.999988)
(0.220,0.999997)
(0.230,0.999999)
(0.240,1.000000)
(0.250,1.000000)
(0.260,1.000000)
(0.270,1.000000)
(0.280,1.000000)
(0.290,1.000000)
(0.300,1.000000)
(0.310,1.000000)
(0.320,1.000000)
(0.330,1.000000)
(0.340,1.000000)
(0.350,1.000000)
(0.360,1.000000)
(0.370,1.000000)
(0.380,1.000000)
(0.390,1.000000)
(0.400,1.000000)
(0.410,1.000000)
(0.420,1.000000)
(0.430,1.000000)
(0.440,1.000000)
(0.450,1.000000)
(0.460,1.000000)
(0.470,1.000000)
(0.480,1.000000)
(0.490,1.000000)
(0.500,1.000000)};
\addlegendentry{Standard}
\addplot+[mark=none, thick] coordinates {
(0.000,0.050000)
(0.010,0.090099)
(0.020,0.150233)
(0.030,0.232474)
(0.040,0.335050)
(0.050,0.451726)
(0.060,0.572761)
(0.070,0.687267)
(0.080,0.786061)
(0.090,0.863797)
(0.100,0.919580)
(0.110,0.956085)
(0.120,0.977873)
(0.130,0.989731)
(0.140,0.995618)
(0.150,0.998282)
(0.160,0.999382)
(0.170,0.999796)
(0.180,0.999938)
(0.190,0.999983)
(0.200,0.999996)
(0.210,0.999999)
(0.220,1.000000)
(0.230,1.000000)
(0.240,1.000000)
(0.250,1.000000)
(0.260,1.000000)
(0.270,1.000000)
(0.280,1.000000)
(0.290,1.000000)
(0.300,1.000000)
(0.310,1.000000)
(0.320,1.000000)
(0.330,1.000000)
(0.340,1.000000)
(0.350,1.000000)
(0.360,1.000000)
(0.370,1.000000)
(0.380,1.000000)
(0.390,1.000000)
(0.400,1.000000)
(0.410,1.000000)
(0.420,1.000000)
(0.430,1.000000)
(0.440,1.000000)
(0.450,1.000000)
(0.460,1.000000)
(0.470,1.000000)
(0.480,1.000000)
(0.490,1.000000)
(0.500,1.000000)};
\addlegendentry{Full}
\end{axis}
\end{tikzpicture}
\caption{Analytic decision power $\pi_m(\theta)$ as a function of effect size for the minimal, standard and full reduction families, from Eq.~\eqref{eq:power_m} with the window information values computed from the measurement model of the simulation testbed under the all-channels-daily idealisation (443.2, 780.9 and 928.5 respectively). Experiment~E4 overlays Monte Carlo estimates on these curves.}
\label{fig:decision_power}
\end{figure}

\begin{figure}[t]
\centering
\begin{tikzpicture}
\begin{axis}[
    width=0.36\textwidth, height=0.36\textwidth,
    xmin=0, xmax=0.5, ymin=0, ymax=1.02,
    xlabel={$\theta$}, title={\small Minimal},
    grid=both, tick label style={font=\scriptsize}
]
\addplot+[mark=none, thick, blue] coordinates {
(0.000,0.050000)
(0.010,0.075740)
(0.020,0.110515)
(0.030,0.155468)
(0.040,0.211065)
(0.050,0.276858)
(0.060,0.351353)
(0.070,0.432057)
(0.080,0.515711)
(0.090,0.598676)
(0.100,0.677404)
(0.110,0.748884)
(0.120,0.810980)
(0.130,0.862593)
(0.140,0.903640)
(0.150,0.934873)
(0.160,0.957613)
(0.170,0.973454)
(0.180,0.984011)
(0.190,0.990744)
(0.200,0.994852)
(0.210,0.997251)
(0.220,0.998591)
(0.230,0.999307)
(0.240,0.999673)
(0.250,0.999852)
(0.260,0.999936)
(0.270,0.999973)
(0.280,0.999989)
(0.290,0.999996)
(0.300,0.999999)
(0.310,0.999999)
(0.320,1.000000)
(0.330,1.000000)
(0.340,1.000000)
(0.350,1.000000)
(0.360,1.000000)
(0.370,1.000000)
(0.380,1.000000)
(0.390,1.000000)
(0.400,1.000000)
(0.410,1.000000)
(0.420,1.000000)
(0.430,1.000000)
(0.440,1.000000)
(0.450,1.000000)
(0.460,1.000000)
(0.470,1.000000)
(0.480,1.000000)
(0.490,1.000000)
(0.500,1.000000)};
\addplot+[only marks, mark=*, mark size=1.1pt, black,
    error bars/.cd, y dir=both, y explicit] coordinates {
(0.00,0.0521) +- (0,0.0044)
(0.05,0.2720) +- (0,0.0087)
(0.10,0.6823) +- (0,0.0091)
(0.15,0.9374) +- (0,0.0047)
(0.20,0.9957) +- (0,0.0013)
(0.25,0.9998) +- (0,0.0003)
(0.30,1.0000) +- (0,0.0000)
(0.35,1.0000) +- (0,0.0000)
(0.40,1.0000) +- (0,0.0000)
(0.45,1.0000) +- (0,0.0000)
(0.50,1.0000) +- (0,0.0000)};
\end{axis}
\end{tikzpicture}\hfill
\begin{tikzpicture}\begin{axis}[
    width=0.36\textwidth, height=0.36\textwidth,
    xmin=0, xmax=0.5, ymin=0, ymax=1.02,
    xlabel={$\theta$}, title={\small Standard},
    grid=both, tick label style={font=\scriptsize}
]
\addplot+[mark=none, thick, orange] coordinates {
(0.000,0.050000)
(0.010,0.086063)
(0.020,0.138750)
(0.030,0.209976)
(0.040,0.299078)
(0.050,0.402220)
(0.060,0.512702)
(0.070,0.622212)
(0.080,0.722654)
(0.090,0.807903)
(0.100,0.874855)
(0.110,0.923511)
(0.120,0.956232)
(0.130,0.976594)
(0.140,0.988318)
(0.150,0.994566)
(0.160,0.997646)
(0.170,0.999051)
(0.180,0.999644)
(0.190,0.999876)
(0.200,0.999960)
(0.210,0.999988)
(0.220,0.999997)
(0.230,0.999999)
(0.240,1.000000)
(0.250,1.000000)
(0.260,1.000000)
(0.270,1.000000)
(0.280,1.000000)
(0.290,1.000000)
(0.300,1.000000)
(0.310,1.000000)
(0.320,1.000000)
(0.330,1.000000)
(0.340,1.000000)
(0.350,1.000000)
(0.360,1.000000)
(0.370,1.000000)
(0.380,1.000000)
(0.390,1.000000)
(0.400,1.000000)
(0.410,1.000000)
(0.420,1.000000)
(0.430,1.000000)
(0.440,1.000000)
(0.450,1.000000)
(0.460,1.000000)
(0.470,1.000000)
(0.480,1.000000)
(0.490,1.000000)
(0.500,1.000000)};
\addplot+[only marks, mark=*, mark size=1.1pt, black,
    error bars/.cd, y dir=both, y explicit] coordinates {
(0.00,0.0478) +- (0,0.0042)
(0.05,0.4089) +- (0,0.0096)
(0.10,0.8767) +- (0,0.0064)
(0.15,0.9953) +- (0,0.0013)
(0.20,1.0000) +- (0,0.0000)
(0.25,1.0000) +- (0,0.0000)
(0.30,1.0000) +- (0,0.0000)
(0.35,1.0000) +- (0,0.0000)
(0.40,1.0000) +- (0,0.0000)
(0.45,1.0000) +- (0,0.0000)
(0.50,1.0000) +- (0,0.0000)};
\end{axis}
\end{tikzpicture}\hfill
\begin{tikzpicture}\begin{axis}[
    width=0.36\textwidth, height=0.36\textwidth,
    xmin=0, xmax=0.5, ymin=0, ymax=1.02,
    xlabel={$\theta$}, title={\small Full},
    grid=both, tick label style={font=\scriptsize}
]
\addplot+[mark=none, thick, green!60!black] coordinates {
(0.000,0.050000)
(0.010,0.090099)
(0.020,0.150233)
(0.030,0.232474)
(0.040,0.335050)
(0.050,0.451726)
(0.060,0.572761)
(0.070,0.687267)
(0.080,0.786061)
(0.090,0.863797)
(0.100,0.919580)
(0.110,0.956085)
(0.120,0.977873)
(0.130,0.989731)
(0.140,0.995618)
(0.150,0.998282)
(0.160,0.999382)
(0.170,0.999796)
(0.180,0.999938)
(0.190,0.999983)
(0.200,0.999996)
(0.210,0.999999)
(0.220,1.000000)
(0.230,1.000000)
(0.240,1.000000)
(0.250,1.000000)
(0.260,1.000000)
(0.270,1.000000)
(0.280,1.000000)
(0.290,1.000000)
(0.300,1.000000)
(0.310,1.000000)
(0.320,1.000000)
(0.330,1.000000)
(0.340,1.000000)
(0.350,1.000000)
(0.360,1.000000)
(0.370,1.000000)
(0.380,1.000000)
(0.390,1.000000)
(0.400,1.000000)
(0.410,1.000000)
(0.420,1.000000)
(0.430,1.000000)
(0.440,1.000000)
(0.450,1.000000)
(0.460,1.000000)
(0.470,1.000000)
(0.480,1.000000)
(0.490,1.000000)
(0.500,1.000000)};
\addplot+[only marks, mark=*, mark size=1.1pt, black,
    error bars/.cd, y dir=both, y explicit] coordinates {
(0.00,0.0522) +- (0,0.0044)
(0.05,0.4120) +- (0,0.0096)
(0.10,0.8898) +- (0,0.0061)
(0.15,0.9961) +- (0,0.0012)
(0.20,1.0000) +- (0,0.0000)
(0.25,1.0000) +- (0,0.0000)
(0.30,1.0000) +- (0,0.0000)
(0.35,1.0000) +- (0,0.0000)
(0.40,1.0000) +- (0,0.0000)
(0.45,1.0000) +- (0,0.0000)
(0.50,1.0000) +- (0,0.0000)};
\end{axis}
\end{tikzpicture}
\caption{Monte Carlo verification of the power expressions in Experiment~E4: empirical rejection rates (points, with 95 per cent binomial confidence intervals) over 10\,000 replications per point, overlaid on the analytic all-channels-daily curves of Eq.~\eqref{eq:power_m} (solid). For the minimal and standard tiers, whose informative channels are all daily, the points lie on the analytic curve. For the full tier the informative low-accessibility channel (adaptation velocity) is sampled weekly, and the empirical points fall systematically below the analytic idealisation, quantifying the additional power loss caused by irregular sampling.}
\label{fig:E4}
\end{figure}

\subsection{Experiment E5: minimum admissible indicator support}\label{subsec:E5}

\begin{simspec}{E5, greedy indicator selection and minimum admissible support of Corollary~\ref{cor:minsupport}}
\textbf{Objective.} Determine, for each city configuration, the smallest observable-indicator support achieving the target power at the decision-relevant effect size, and display the diminishing information returns of additional channels.\\[2pt]
\textbf{Procedure.} Starting from the empty support, add observable channels one at a time in order of the innovation gain of Eq.~\eqref{eq:innovation_gain}, recomputing $\mathcal{I}_S$ at each step from the closed form of Eq.~\eqref{eq:fisher_closed}; at each step also record the analytic power at $\theta^{\ast}=0.10$, $n=30$. Repeat under three noise scenarios: candidate baseline, doubled survey noise, and halved sensor noise. Report the first support meeting the admissibility threshold of Eq.~\eqref{eq:admissibility}.\\[2pt]
\textbf{Outputs.} Figure~\ref{fig:E5}: power at $\theta^{\ast}$ against the number of channels added in greedy order, one curve per noise scenario, with the target power marked. Table~\ref{tab:E5}: composition of the minimum admissible support per city configuration and noise scenario, named by the observable indicators of Table~\ref{tab:kpi_mapping}.\\[2pt]
\textbf{Acceptance.} The greedy curves are monotone by Proposition~\ref{prop:gaussfisher}; the minimum admissible support under baseline noise includes at least one channel feeding a coordinate outside the minimal tier, demonstrating that the antifragility classification, unlike stability screening, is not achievable from routinely held operator data alone.
\end{simspec}

\begin{figure}[t]
\centering
\begin{tikzpicture}
\begin{axis}[
    width=0.82\textwidth, height=0.46\textwidth,
    xmin=1, xmax=12, ymin=0.6, ymax=0.95,
    xlabel={Observable channels added (greedy information order)},
    ylabel={Power at $\theta^{\ast}=0.10$},
    legend pos=south east, grid=both, legend style={font=\small},
    xtick={1,...,12}
]
\addplot+[mark=*, mark size=1.3pt, thick, blue] coordinates {
(1,0.6774)
(2,0.8545)
(3,0.8718)
(4,0.8812)
(5,0.8899)
(6,0.8899)
(7,0.8899)
(8,0.8899)
(9,0.8899)
(10,0.8899)
(11,0.8899)
(12,0.8899)};
\addlegendentry{Baseline}
\addplot+[mark=square*, mark size=1.3pt, thick, red] coordinates {
(1,0.6774)
(2,0.8545)
(3,0.8650)
(4,0.8749)
(5,0.8807)
(6,0.8807)
(7,0.8807)
(8,0.8807)
(9,0.8807)
(10,0.8807)
(11,0.8807)
(12,0.8807)};
\addlegendentry{Doubled survey noise}
\addplot+[mark=triangle*, mark size=1.5pt, thick, green!60!black] coordinates {
(1,0.6816)
(2,0.8566)
(3,0.8737)
(4,0.8829)
(5,0.8916)
(6,0.8916)
(7,0.8916)
(8,0.8916)
(9,0.8916)
(10,0.8916)
(11,0.8916)
(12,0.8916)};
\addlegendentry{Halved sensor noise}
\addplot+[mark=none, densely dotted, thick, black] coordinates {(1,0.8) (12,0.8)};
\addlegendentry{Target power 0.80}
\end{axis}
\end{tikzpicture}
\caption{Greedy construction of the minimum admissible indicator support in Experiment~E5.}
\label{fig:E5}
\end{figure}

\begin{table}[t]
\centering
\small
\caption{Experiment E5: composition of the minimum admissible observable-indicator support (target power 0.80 at $\theta^{\ast}=0.10$) per city configuration and noise scenario, named by the observable channels of Table~\ref{tab:kpi_mapping}.}
\label{tab:E5}
\begin{tabularx}{\textwidth}{p{2.6cm}p{2.8cm}X}
\toprule
Configuration & Noise scenario & Minimum admissible support (observable channels, greedy order) \\
\midrule
Thessaloniki (minimal) & baseline & not attainable within the catalogue; maximum power 0.68 with all three channels ($M$, $S$, $\Deff$) \\
Larissa (standard) & baseline & $\{M, R\}$: throughput and redundancy channels (power 0.85) \\
Bratislava (full) & baseline & $\{M, R\}$: throughput and redundancy channels (power 0.85) \\
All & doubled survey noise & $\{M, R\}$: unchanged, since the survey-based channels carry negligible information for this decision direction \\
All & halved sensor noise & $\{M, R\}$: unchanged, since the disturbance process rather than sensor noise dominates the information budget \\
\bottomrule
\end{tabularx}
\end{table}

\subsection{Summary of expected behaviour}

\begin{table}[t]
\centering
\small
\caption{Stylised pilot-city configurations and observed theorem behaviour. All values are computed from the simulation testbed: the $\varepsilon_{\Pi}/\alpha$ entries at the representative mid-sweep coupling $c=0.089$ from Experiment~E1, the $\rho(K_{\Pi})$ entries from the per-tier contraction gains of the coupled two-layer linearisation, and the relative Fisher information from the measurement model at the stated cadences. All remain candidate constructs of the illustrative testbed, not empirical estimates.}
\label{tab:city_configs}
\begin{tabularx}{\textwidth}{p{2.3cm}p{1.9cm}p{1.7cm}p{1.7cm}p{1.9cm}X}
\toprule
Configuration & Reduction family & $\varepsilon_{\Pi}/\alpha$ & $\rho(K_{\Pi})$ & Relative Fisher information & Expected behaviour \\
\midrule
Thessaloniki & Minimal & 0.33 & 0.76 & 0.54 & Practically faithful only under weak omitted feedback; strong decision-power loss (0.68 at $\theta^{\ast}$, below target); largest restoration-time optimism \\
Larissa & Standard & 2.71 & 0.58 & 0.95 & Contraction certificate fragile owing to the retained slow subspace, though observed errors remain small; near-oracle classification power \\
Bratislava & Full & 0 & 0.54 & 1.00 & Oracle reference for all reduced configurations; residual power loss from weekly-sampled channels only \\
\bottomrule
\end{tabularx}
\end{table}

The protocol supports the formal claims in a limited and explicit sense. It is designed to show that the theorems distinguish four failure modes that would otherwise be conflated: absence of an autonomous projected flow beyond the tolerated defect, loss of uniqueness at the coupled fixed-point boundary, loss of statistical decision power despite stable reduced dynamics, and one-sided distortion of restoration-time reporting. Since the data are stylised, no empirical conclusion about city performance is drawn at any point.

\section{Discussion}\label{sec:discussion}

The theoretical contribution of this paper is the explicit separation of four meanings of preservation under indicator reduction. A reduced model may preserve local stability because the projected vector field is autonomous and stable, or practically so within a quantified error band. It may preserve the hierarchical equilibrium because the fixed-point maps commute with the projection and remain contractive. It may preserve the candidate antifragility decision rule because the retained indicator set carries sufficient Fisher information and leaves bounded residual variance. It may, finally, preserve derived reporting quantities such as restoration time only up to a one-sided bias whose direction is known in advance. These properties are related but not equivalent.

The measurement model sharpens the practical interpretation. A minimal indicator configuration, assembled from routinely held operator data, can be meaningful for screening when the omitted coordinates are weakly coupled and the objective is to observe broad performance loss; Theorem~\ref{thm:approx} quantifies how weak that coupling must be. It becomes unfaithful when omitted variables affect retained velocities beyond the tolerance, and it becomes statistically inadmissible, in the sense of Corollary~\ref{cor:minsupport}, when the candidate decision rule depends on information that the retained channels do not carry. A standard configuration can preserve the attractor under weaker assumptions, but it is still vulnerable to hidden feedbacks from infrastructure support or behavioural adaptation. A full configuration is not automatically correct, since measurement error and misspecification remain possible, but it avoids the structural information loss introduced by projection.

The boundary cases are especially informative. The first boundary is projectability failure beyond the defect tolerance, where two full states with the same reduced vector have materially different projected derivatives. This is a structural failure rather than a calibration error, and Theorem~\ref{thm:approx} replaces the binary diagnosis with a measurable distance from it. The second boundary is the fixed-point contraction threshold $\rho(K_{\Pi})=1$, where uniqueness and convergence can be lost, and which Proposition~\ref{prop:rate} renders observable through solver behaviour before it is crossed. The third boundary is statistical, where the retained Fisher information becomes too small or the residual variance too large to detect the candidate antifragility effect at the desired level; Proposition~\ref{prop:gaussfisher} computes the position of that boundary from the city's measurement inventory. These boundaries jointly define a minimum data infrastructure for equilibrium analysis: the required indicators are not those that make the model detailed, but those required to preserve the mathematical object being claimed.

The restoration-time result deserves separate emphasis because restoration time is a quantity to which deployment programmes attach formal commitments, and because longitudinal antifragility trajectories are tracked through repeated restoration episodes. Proposition~\ref{prop:rte} shows that a reduced monitoring configuration cannot overstate restoration delay; it can only understate it. Reported improvements in restoration time obtained under minimal monitoring should therefore be audited against the possibility that slow recovery modes, typically in equity, behavioural or infrastructure-support coordinates, are simply invisible to the monitoring set. The audit is concrete: by Eq.~\eqref{eq:spectral_split} it suffices to establish, on any window with temporarily extended coverage, whether the slowest identified mode loads on coordinates outside the routine monitoring support.

The framework also clarifies the relation with existing reduction theory. The relation to centre manifold and slow-fast approaches is one of complementarity rather than contrast. In skilled practice those methods are themselves guided by physical insight, in the choice of the spectral gap that separates resolved from enslaved modes and in the finite-amplitude parametrisation of the manifold \citep{Roberts2015}; what distinguishes the present setting is that the reduction map is not available for such design. The projection is imposed exogenously by which indicators a city maintains, and is rarely aligned with an invariant manifold of the flow, so the exactness that a properly constructed centre manifold enjoys on its own domain \citep{Roberts2018} is replaced here by explicit error bounds for a projection one does not get to choose, selected instead for indicator observability and decision relevance. Compared with multiplex aggregation, it requires commutation with a fixed-point operator rather than preservation of network spectra alone. Compared with composite-index sensitivity analysis, it treats the score as part of a dynamic attractor rather than as an isolated scalar construct. Compared with optimal experimental design, it selects observation channels for the preservation of a dynamical and decision-theoretic object rather than for parameter estimation alone. The resulting theory is narrower than general reduction theory but more closely matched to equilibrium claims made under partial indicator coverage.

Five limitations should be recognised. First, the stability results are local and assume differentiability in a neighbourhood of the equilibrium; the baseline practical-faithfulness bound additionally assumes a uniformly negative logarithmic norm on a convex working set, and although Theorem~\ref{thm:transient} removes this restriction through a certified amplification constant, the resulting bound is only as sharp as the semidefinite certificate for $M$. Secondly, the fixed-point theory relies on a contraction condition that may be conservative for some nonlinear systems, and the displacement bound of Theorem~\ref{thm:commutation} degrades as the contraction boundary is approached. Thirdly, the decision-power result uses a local normal approximation and a linear-Gaussian measurement model; finite-sample and dependent-data corrections are needed when observation windows are short or strongly autocorrelated. Fourthly, the loading matrix $C_S$ is treated as known from the published assembly rules of the state variables, whereas in deployment it is itself estimated and uncertain. Fifthly, the pilot-city configurations are illustrative and the simulation protocol, however fully specified, does not constitute empirical validation. These limitations do not remove the formal contribution, but they restrict the claims that can be made from the numerical demonstration.

The implication for urban management is that minimum data requirements can be defined mathematically rather than administratively. A city does not need full indicator coverage for every screening question, but it does need enough coverage to preserve the specific equilibrium property being asserted. If the objective is stability screening, projectability within tolerance and a stable reduced Jacobian are central. If the objective is a joint intervention equilibrium, the reduced coupling matrix must remain inside the contraction boundary. If the objective is a candidate antifragility classification, the retained channels must satisfy the admissibility condition of Corollary~\ref{cor:minsupport}, which can be checked, channel by channel, against the city's data inventory before any monitoring campaign is commissioned. This distinction helps avoid the stronger but unjustified claim that any partial indicator set can support a full equilibrium analysis.

\section{Conclusion}\label{sec:conclusion}

This paper has formalised multi-scale equilibrium reducibility under variable indicator dimensionality. The first contribution is a definition of faithful reduction for a dynamic attractor, a theorem giving necessary and sufficient conditions for preservation of local asymptotic stability under projection, and a practical-faithfulness extension that bounds the projected-trajectory error by the ratio of the projectability defect to the reduced contraction rate. The second contribution is a joint fixed-point preservation theorem for coupled performance and strategic layers, including the contraction boundary at which uniqueness may be lost and the rate at which solver effort diverges as the boundary is approached. The third contribution is a decision-rule degradation result expressed through a measurement model that links the equilibrium state variables to observable urban indicators, yielding the retained Fisher information in closed form, a monotone innovation-gain calculus for channel selection, and an explicit admissibility condition for the minimum indicator support. The fourth contribution is a one-sided restoration-time bias result showing that reduced monitoring can only understate, never overstate, the time the full system requires to restore equilibrium.

The research questions can therefore be answered directly. RQ1 is answered by projectability, reduced Hurwitz stability and absence of ambiguous equilibria in the projected fibre, with degradation under approximate projectability bounded by $\varepsilon_{\Pi}/\alpha$. RQ2 is answered by commutation of the fixed-point maps with the performance and strategic projections, together with the condition $\rho(K_{\Pi})<1$. RQ3 is answered by the reduction in the non-centrality parameter of the candidate antifragility test, computable from the measurement model through Eq.~\eqref{eq:fisher_closed} and Eq.~\eqref{eq:innovation_gain}. RQ4 is answered by the spectral inclusion of Eq.~\eqref{eq:spectral_split} and its one-sided consequence for restoration-time reporting.

The paper does not claim empirical validation, full indicator coverage or a contribution from generative agent modelling. It establishes the formal conditions under which partial indicator coverage is admissible and exhibits them numerically through Experiments E1 to E5 on the stylised testbed. Future work should estimate the projectability defect, the retained Fisher information and the residual variance from observed pilot data, test the contraction boundary under measured inter-layer coupling, treat the loading matrix as estimated rather than known, and extend the local theory to non-smooth disruptions and regime-switching attractors.

\section*{Data accessibility}
This is a theoretical paper. The simulation protocol of Section~\ref{sec:numerical} is fully specified in the text and Table~\ref{tab:sim_params}. The complete analysis code reproducing every figure, table and reported value of Section~\ref{sec:numerical}, together with its numerical outputs and the exact loading matrices $C_S$ and noise covariances $\Sigma_S$ used in Experiments E1 to E5, is provided as electronic supplementary material with this submission and will additionally be archived in a public repository with a persistent identifier on acceptance. No empirical pilot data are used.

\section*{Authors' contributions}
A.G.: conceptualisation, methodology, formal analysis, investigation, writing (original draft), writing (review and editing). Y.R.: supervision, funding acquisition, writing (review and editing). A.Gh.: investigation, validation, writing (review and editing). C.D.N.: writing (review and editing). A.H.: software, writing (review and editing).

\section*{Declaration of generative AI and AI-assisted technologies}
During the preparation of this work the authors used generative artificial intelligence tools (Claude, Anthropic) to assist with language editing, with the preparation of the LaTeX source, with the drafting of the reproducibility code implementing the simulation protocol of Section~\ref{sec:numerical}, and with the drafting of candidate statements and proof sketches for the extension theorems, which the authors verified independently. After using these tools, the authors reviewed and verified the content, including all mathematical statements, proofs, and numerical results, and take full responsibility for the article's content.

\section*{Competing interests}
The authors declare no competing interests.

\section*{Funding}
This work was supported by the European Union's Horizon Europe research and innovation programme under the AntifragiCity project, Grant Agreement No.~101203052. Views and opinions expressed are those of the authors only and do not necessarily reflect those of the European Union or the granting authority.

\section*{Acknowledgements}
This manuscript builds on some internal project material from AntifragiCity project.

\bibliographystyle{plainnat}
\bibliography{references_flagship}

@techreport{BPR1964,
  author      = {{Bureau of Public Roads}},
  title       = {Traffic Assignment Manual},
  institution = {U.S. Department of Commerce, Urban Planning Division},
  address     = {Washington, D.C.},
  year        = {1964}
}

@book{Beckmann1956,
  author    = {Beckmann, Martin and McGuire, C. B. and Winsten, Christopher B.},
  title     = {Studies in the Economics of Transportation},
  publisher = {Yale University Press},
  address   = {New Haven, CT},
  year      = {1956}
}

@article{Boccaletti2014,
  author  = {Boccaletti, Stefano and Bianconi, Ginestra and Criado, Regino and del Genio, Charo I. and G{\'o}mez-Garde{\~n}es, Jes{\'u}s and Romance, Miguel and Sendi{\~n}a-Nadal, Irene and Wang, Zhen and Zanin, Massimiliano},
  title   = {The structure and dynamics of multilayer networks},
  journal = {Physics Reports},
  volume  = {544},
  number  = {1},
  pages   = {1--122},
  year    = {2014},
  doi     = {10.1016/j.physrep.2014.07.001}
}

@article{Cleveland1990,
  author  = {Cleveland, Robert B. and Cleveland, William S. and McRae, Jean E. and Terpenning, Irma},
  title   = {{STL}: A seasonal-trend decomposition procedure based on {Loess}},
  journal = {Journal of Official Statistics},
  volume  = {6},
  number  = {1},
  pages   = {3--73},
  year    = {1990}
}

@book{Carr1981,
  author    = {Carr, Jack},
  title     = {Applications of Centre Manifold Theory},
  series    = {Applied Mathematical Sciences},
  volume    = {35},
  publisher = {Springer-Verlag},
  address   = {New York},
  year      = {1981}
}

@book{GuckenheimerHolmes1983,
  author    = {Guckenheimer, John and Holmes, Philip},
  title     = {Nonlinear Oscillations, Dynamical Systems, and Bifurcations of Vector Fields},
  series    = {Applied Mathematical Sciences},
  volume    = {42},
  publisher = {Springer-Verlag},
  address   = {New York},
  year      = {1983}
}

@article{Fenichel1979,
  author  = {Fenichel, Neil},
  title   = {Geometric singular perturbation theory for ordinary differential equations},
  journal = {Journal of Differential Equations},
  volume  = {31},
  number  = {1},
  pages   = {53--98},
  year    = {1979},
  doi     = {10.1016/0022-0396(79)90152-9}
}

@book{Khalil2002,
  author    = {Khalil, Hassan K.},
  title     = {Nonlinear Systems},
  edition   = {3rd},
  publisher = {Prentice Hall},
  address   = {Upper Saddle River, NJ},
  year      = {2002}
}

@book{Kuehn2015,
  author    = {Kuehn, Christian},
  title     = {Multiple Time Scale Dynamics},
  series    = {Applied Mathematical Sciences},
  volume    = {191},
  publisher = {Springer},
  address   = {Cham},
  year      = {2015},
  doi       = {10.1007/978-3-319-12316-5}
}

@book{Temam1988,
  author    = {Temam, Roger},
  title     = {Infinite-Dimensional Dynamical Systems in Mechanics and Physics},
  series    = {Applied Mathematical Sciences},
  volume    = {68},
  publisher = {Springer-Verlag},
  address   = {New York},
  year      = {1988}
}

@book{HolmesLumleyBerkooz1996,
  author    = {Holmes, Philip and Lumley, John L. and Berkooz, Gahl},
  title     = {Turbulence, Coherent Structures, Dynamical Systems and Symmetry},
  publisher = {Cambridge University Press},
  address   = {Cambridge},
  year      = {1996}
}

@book{Antoulas2005,
  author    = {Antoulas, Athanasios C.},
  title     = {Approximation of Large-Scale Dynamical Systems},
  series    = {Advances in Design and Control},
  publisher = {SIAM},
  address   = {Philadelphia, PA},
  year      = {2005},
  doi       = {10.1137/1.9780898718713}
}

@book{DesoerVidyasagar1975,
  author    = {Desoer, Charles A. and Vidyasagar, Mathukumalli},
  title     = {Feedback Systems: Input-Output Properties},
  publisher = {Academic Press},
  address   = {New York},
  year      = {1975}
}

@article{Soderlind2006,
  author  = {S{\"o}derlind, Gustaf},
  title   = {The logarithmic norm. {History} and modern theory},
  journal = {BIT Numerical Mathematics},
  volume  = {46},
  number  = {3},
  pages   = {631--652},
  year    = {2006},
  doi     = {10.1007/s10543-006-0069-9}
}

@article{LohmillerSlotine1998,
  author  = {Lohmiller, Winfried and Slotine, Jean-Jacques E.},
  title   = {On contraction analysis for non-linear systems},
  journal = {Automatica},
  volume  = {34},
  number  = {6},
  pages   = {683--696},
  year    = {1998},
  doi     = {10.1016/S0005-1098(98)00019-3}
}

@article{Mucha2010,
  author  = {Mucha, Peter J. and Richardson, Thomas and Macon, Kevin and Porter, Mason A. and Onnela, Jukka-Pekka},
  title   = {Community structure in time-dependent, multiscale, and multiplex networks},
  journal = {Science},
  volume  = {328},
  number  = {5980},
  pages   = {876--878},
  year    = {2010},
  doi     = {10.1126/science.1184819}
}

@article{DeDomenico2013,
  author  = {De Domenico, Manlio and Sol{\'e}-Ribalta, Albert and Cozzo, Emanuele and Kivel{\"a}, Mikko and Moreno, Yamir and Porter, Mason A. and G{\'o}mez, Sergio and Arenas, Alex},
  title   = {Mathematical formulation of multilayer networks},
  journal = {Physical Review X},
  volume  = {3},
  number  = {4},
  pages   = {041022},
  year    = {2013},
  doi     = {10.1103/PhysRevX.3.041022}
}

@article{DeDomenico2015,
  author  = {De Domenico, Manlio and Nicosia, Vincenzo and Arenas, Alex and Latora, Vito},
  title   = {Structural reducibility of multilayer networks},
  journal = {Nature Communications},
  volume  = {6},
  pages   = {6864},
  year    = {2015},
  doi     = {10.1038/ncomms7864}
}

@article{Gomez2013,
  author  = {G{\'o}mez, Sergio and D{\'i}az-Guilera, Albert and G{\'o}mez-Garde{\~n}es, Jes{\'u}s and P{\'e}rez-Vicente, Conrad J. and Moreno, Yamir and Arenas, Alex},
  title   = {Diffusion dynamics on multiplex networks},
  journal = {Physical Review Letters},
  volume  = {110},
  number  = {2},
  pages   = {028701},
  year    = {2013},
  doi     = {10.1103/PhysRevLett.110.028701}
}

@article{Kivela2014,
  author  = {Kivel{\"a}, Mikko and Arenas, Alex and Barthelemy, Marc and Gleeson, James P. and Moreno, Yamir and Porter, Mason A.},
  title   = {Multilayer networks},
  journal = {Journal of Complex Networks},
  volume  = {2},
  number  = {3},
  pages   = {203--271},
  year    = {2014},
  doi     = {10.1093/comnet/cnu016}
}

@article{RadicchiArenas2013,
  author  = {Radicchi, Filippo and Arenas, Alex},
  title   = {Abrupt transition in the structural formation of interconnected networks},
  journal = {Nature Physics},
  volume  = {9},
  number  = {11},
  pages   = {717--720},
  year    = {2013},
  doi     = {10.1038/nphys2761}
}

@article{AletaMoreno2019,
  author  = {Aleta, Alberto and Moreno, Yamir},
  title   = {Multilayer networks in a nutshell},
  journal = {Annual Review of Condensed Matter Physics},
  volume  = {10},
  pages   = {45--62},
  year    = {2019},
  doi     = {10.1146/annurev-conmatphys-031218-013259}
}

@book{Saltelli2004,
  author    = {Saltelli, Andrea and Tarantola, Stefano and Campolongo, Francesca and Ratto, Marco},
  title     = {Sensitivity Analysis in Practice: A Guide to Assessing Scientific Models},
  publisher = {John Wiley \& Sons},
  address   = {Chichester},
  year      = {2004}
}

@book{Saltelli2008,
  author    = {Saltelli, Andrea and Ratto, Marco and Andres, Terry and Campolongo, Francesca and Cariboni, Jessica and Gatelli, Debora and Saisana, Michaela and Tarantola, Stefano},
  title     = {Global Sensitivity Analysis: The Primer},
  publisher = {John Wiley \& Sons},
  address   = {Chichester},
  year      = {2008}
}

@techreport{SaisanaTarantola2002,
  author      = {Saisana, Michaela and Tarantola, Stefano},
  title       = {State-of-the-art report on current methodologies and practices for composite indicator development},
  type        = {Report},
  number      = {EUR 20408 EN},
  institution = {European Commission Joint Research Centre},
  address     = {Ispra},
  year        = {2002}
}

@book{OECD2008,
  author    = {{OECD} and {European Commission Joint Research Centre}},
  title     = {Handbook on Constructing Composite Indicators: Methodology and User Guide},
  publisher = {OECD Publishing},
  address   = {Paris},
  year      = {2008},
  doi       = {10.1787/9789264043466-en}
}

@techreport{Nardo2005,
  author      = {Nardo, Michela and Saisana, Michaela and Saltelli, Andrea and Tarantola, Stefano and Hoffmann, Anders and Giovannini, Enrico},
  title       = {Handbook on constructing composite indicators: Methodology and user guide},
  type        = {OECD Statistics Working Paper},
  number      = {2005/03},
  institution = {OECD Publishing},
  address     = {Paris},
  year        = {2005},
  doi         = {10.1787/533411815016}
}

@book{Fedorov1972,
  author    = {Fedorov, Valerii V.},
  title     = {Theory of Optimal Experiments},
  publisher = {Academic Press},
  address   = {New York},
  year      = {1972}
}

@book{Pukelsheim1993,
  author    = {Pukelsheim, Friedrich},
  title     = {Optimal Design of Experiments},
  publisher = {John Wiley \& Sons},
  address   = {New York},
  year      = {1993}
}

@book{WalterPronzato1997,
  author    = {Walter, Eric and Pronzato, Luc},
  title     = {Identification of Parametric Models from Experimental Data},
  publisher = {Springer},
  address   = {London},
  year      = {1997}
}

@book{Kay1993,
  author    = {Kay, Steven M.},
  title     = {Fundamentals of Statistical Signal Processing: Estimation Theory},
  publisher = {Prentice Hall},
  address   = {Upper Saddle River, NJ},
  year      = {1993}
}

@book{LehmannRomano2005,
  author    = {Lehmann, Erich L. and Romano, Joseph P.},
  title     = {Testing Statistical Hypotheses},
  edition   = {3rd},
  publisher = {Springer},
  address   = {New York},
  year      = {2005}
}

@article{Raue2009,
  author  = {Raue, Andreas and Kreutz, Clemens and Maiwald, Thomas and Bachmann, Julie and Schilling, Marcel and Klingm{\"u}ller, Ursula and Timmer, Jens},
  title   = {Structural and practical identifiability analysis of partially observed dynamical models by exploiting the profile likelihood},
  journal = {Bioinformatics},
  volume  = {25},
  number  = {15},
  pages   = {1923--1929},
  year    = {2009},
  doi     = {10.1093/bioinformatics/btp358}
}

@book{Roberts2015,
  author    = {Roberts, A. J.},
  title     = {Model Emergent Dynamics in Complex Systems},
  series    = {Mathematical Modeling and Computation},
  publisher = {SIAM},
  address   = {Philadelphia, PA},
  year      = {2015},
  doi       = {10.1137/1.9781611973563}
}

@misc{Roberts2018,
  author       = {Roberts, A. J.},
  title        = {Backwards theory supports modelling via invariant manifolds for non-autonomous dynamical systems},
  year         = {2018},
  howpublished = {Preprint},
  note         = {arXiv:1804.06998},
  doi          = {10.48550/arXiv.1804.06998}
}

\end{document}